\begin{document}

\title{Stateful Dynamic Partial Order Reduction for Model Checking Event-Driven Applications that Do Not Terminate}

\author{Rahmadi Trimananda\inst{1}\and
Weiyu Luo\inst{1}\and
Brian Demsky\inst{1}\and
Guoqing Harry Xu\inst{2}}
\authorrunning{R. Trimananda et al.}
%
\institute{University of California, Irvine, USA\\
\email{\{rtrimana,weiyul7,bdemsky\}@uci.edu} \and
University of California, Los Angeles, USA \\
\email{harryxu@g.ucla.edu}}
%



\lstset{
        tabsize=3,
        language=Java,
        morekeywords={@config},
        basicstyle=\ttfamily\scriptsize,
        columns=fixed,
        showstringspaces=false,
        breaklines=true,
        prebreak = \raisebox{0ex}[0ex][0ex]{\ensuremath{\hookleftarrow}},
        showtabs=false,
        showspaces=false,
        showstringspaces=false,
        identifierstyle=\ttfamily,
        keywordstyle=\color[rgb]{0,0,1},
        commentstyle=\color[rgb]{0.133,0.545,0.133},
        stringstyle=\color[rgb]{0.627,0.126,0.941},
        numbers=left,
        numberstyle=\tiny,
        numbersep=8pt,
        escapeinside={/*@}{@*/}
}

\hypersetup{
  colorlinks,
  linkcolor={red!50!black},
  citecolor={blue!50!black},
  urlcolor={blue!80!black}
}

\newcommand{\cf}{\hbox{\emph{cf.}}\xspace}
\newcommand{\etal}{\hbox{\emph{et al.}}\xspace}
\newcommand{\eg}{\hbox{\emph{e.g.},}\xspace}
\newcommand{\ie}{\hbox{\emph{i.e.},}\xspace}
\newcommand{\aka}{\hbox{\emph{aka.}}\xspace}
\newcommand{\st}{\hbox{\emph{s.t.}}\xspace}
\newcommand{\wrt}{\hbox{\emph{w.r.t.}}\xspace}
\newcommand{\etc}{\hbox{\emph{etc.}}\xspace}
\newcommand{\viz}{\hbox{\emph{viz.}}\xspace}

\newcommand{\Summary}[1]{\hbox{$\mathcal{M}_{#1}$}\xspace}
\newcommand{\StateSum}[1]{\hbox{$m_{#1}$}\xspace}
\newcommand{\Access}[1]{\hbox{$\mathcal{A}_{#1}$}\xspace}
\newcommand{\RGraph}[1]{\hbox{$\mathcal{R}_{#1}$}\xspace}
\newcommand{\oRGraph}[1]{\hbox{$\overline{\mathcal{R}_{#1}}$}\xspace}
\newcommand{\Trans}[1]{\hbox{$t_{#1}$}\xspace}
\newcommand{\ThreadID}{\hbox{\textit{tid}}\xspace}
\newcommand\mycommfont[1]{\footnotesize\ttfamily\textcolor{blue}{#1}}
\SetCommentSty{mycommfont}
\SetKwComment{Comment}{$\triangleright$\ }{}
\newcommand{\MethodCall}[1]{\text{\textsc #1}}
\newcommand{\Let}{\hbox{\text{\textbf{let}}}\xspace}
\newcommand{\EventSet}{\hbox{$\mathcal{E}$}\xspace}
\newcommand{\Vertices}{\hbox{$V$}\xspace}
\newcommand{\Edges}{\hbox{$E$}\xspace}
\newcommand{\Node}{\hbox{$n$}\xspace}
\newcommand{\PrevStateTable}{\hbox{$\mathcal{H}$}\xspace}
\newcommand{\EventID}{\hbox{\emph{eid}}\xspace}
\newcommand{\Order}{\hbox{\emph{ord}}\xspace}
\newcommand{\GetPrev}{\hbox{\emph{past}}\xspace}
\newcommand{\GetFirst}{\hbox{\emph{first}}\xspace}
\newcommand{\GetLast}{\hbox{\emph{last}}\xspace}
\newcommand{\GetStates}{\hbox{\emph{states}}\xspace}
\newcommand{\GetEvent}{\hbox{\emph{event}}\xspace}
\newcommand{\HappensBefore}{\hbox{$\:\rightarrow_{s}\:$}\xspace}
\newcommand{\Reachable}{\hbox{$\:\rightarrow_{r}\:$}\xspace}
\newcommand{\Next}{\hbox{\emph{next}}\xspace}
\newcommand{\Dest}{\hbox{\emph{dst}}\xspace}
\newcommand{\Src}{\hbox{\emph{src}}\xspace}
\newcommand{\Conf}{\hbox{\textit{conflicts}}\xspace}
\newcommand{\State}[1]{\hbox{$s_{#1}$}\xspace}
\newcommand{\CurrentState}{\hbox{$s$}\xspace}

\newcommand{\States}{\hbox{$\textit{States}$}\xspace}
\newcommand{\Event}[1]{\hbox{$e_{#1}$}\xspace}
\newcommand{\tildeEvent}[1]{\hbox{$\tilde{e}_{#1}$}\xspace}
\newcommand{\Process}{\hbox{$p$}\xspace}
\newcommand{\LastCycleEvent}{\hbox{$\CycleEvent_{s}$}\xspace}
\newcommand{\CycleEvent}{\hbox{$\Event_{c}$}\xspace}
\newcommand{\Events}[1]{\hbox{$\mathcal{E}_{#1}$}\xspace}
\newcommand{\AllEvents}{\hbox{$E$}\xspace}
\newcommand{\BacktrackEvent}{\hbox{$b$}\xspace}
\newcommand{\Trace}[1]{\hbox{$\mathcal{S}_{#1}$}\xspace}
\newcommand{\Transitions}[1]{\hbox{$\mathcal{T}_{#1}$}\xspace}
\newcommand{\TransSystem}{\hbox{$A_G$}\xspace}
\newcommand{\Backtrack}{\hbox{\textit{backtrack}}\xspace}
\newcommand{\Enabled}{\hbox{\textit{enabled}}\xspace}
\newcommand{\Done}{\hbox{\textit{done}}\xspace}
\newcommand{\Inst}[1]{\textit{I}_{#1}}

\newcommand{\MethodExplore}{\textsc{Explore}\xspace}
\newcommand{\MethodExploreAll}{\textsc{ExploreAll}\xspace}
\newcommand{\MethodIsFullCycle}{\textsc{IsFullCycle}\xspace}
\newcommand{\MethodUpdateStateSummary}{\textsc{UpdateStateSummary}\xspace}
\newcommand{\MethodUpdateBacktrackSet}{\textsc{UpdateBacktrackSet}\xspace}
\newcommand{\MethodUpdateBacktrackSetDFS}{\textsc{UpdateBacktrackSetDFS}\xspace}
\newcommand{\MethodUpdateBacktrackSetsFromGraph}{\textsc{UpdateBacktrackSetsFromGraph}\xspace}

\newcommand{\code}[1]{\text{\tt #1}}
\newcommand{\todo}[1]{\textbf{\color{red} #1}}
\newcommand{\TODO}[0]{\textbf{TODO}\xspace}

\newcommand{\mysection}[1]{\vspace{-.27em}\section{#1}\vspace{-.27em}}
\newcommand{\mysubsection}[1]{\vspace{-.27em}\subsection{#1}\vspace{-.27em}}
\newcommand{\mysubsubsection}[1]{\subsubsection{#1}}
\newcommand{\myparagraph}[1]{\noindent\textit{\textbf{{#1}.}}}
\newcommand{\mysubparagraph}[1]{\subparagraph{#1}}
\newcommand{\mycomment}[1]{}
\newcommand{\tuple}[1]{\ensuremath \langle #1 \rangle}
\newcommand{\Naive}[0]{Na\"{i}ve\xspace}
\newcommand{\naively}[0]{na\"{i}vely\xspace}
\newcommand{\Naively}[0]{Na\"{i}vely\xspace}
\newcommand{\naive}[0]{na\"{i}ve\xspace}
\newcommand{\interacts}[0]{\emph{interacts-with}\xspace}
\newcommand{\conflict}[0]{\emph{conflict}\xspace}
\newcommand{\Interacts}[0]{\emph{Interacts-with}\xspace}
\newcommand{\Conflict}[0]{\emph{Conflict}\xspace}
\newcommand{\Devicestate}[0]{\emph{Device}\xspace}
\newcommand{\Physical}[0]{\emph{Physical-Medium}\xspace}
\newcommand{\Globalstate}[0]{\emph{Global-Variable}\xspace}
\newcommand{\devicestate}[0]{\emph{device}\xspace}
\newcommand{\globalstate}[0]{\emph{global-variable}\xspace}
\newcommand{\physical}[0]{\emph{physical-medium}\xspace}
\newcommand{\conflictinreader}[0]{\emph{conflict in reader}\xspace}
\newcommand{\readop}[0]{\emph{read-from}\xspace}
\newcommand{\writeop}[0]{\emph{write-to}\xspace}
\newcommand{\readsop}[0]{\emph{reads-from}\xspace}
\newcommand{\writesop}[0]{\emph{writes-to}\xspace}
\newcommand{\doorlock}[0]{door-lock\xspace}
\newcommand{\doorunlock}[0]{door-unlock\xspace}
\newcommand{\doorlocks}[0]{door-locks\xspace}
\newcommand{\doorunlocks}[0]{door-unlocks\xspace}
\newcommand{\locationmode}[0]{\code{location.mode}\xspace}
\newcommand{\locationspacemode}[0]{\code{location.} \code{mode}\xspace}

\newcommand{\tool}[0]{\textsc{IoTCheck}\xspace}
\newcommand{\initialstateeventstreamer}[0]{\code{Initial}-\code{State}-\code{Event}-\code{Streamer}\xspace}
\newcommand{\bigturnon}[0]{\code{Big}-\code{Turn}-\code{ON}\xspace}
\newcommand{\firecoalarm}[0]{\code{FireCO2Alarm}\xspace}
\newcommand{\devicetamperalarm}[0]{\code{Device}-\code{Tamper}-\code{Alarm}\xspace}
\newcommand{\windowordooropen}[0]{\code{WindowOrDoorOpen}\xspace}
\newcommand{\lockitwhenileave}[0]{\code{Lock}-\code{It}-\code{When}-\code{I}-\code{Leave}\xspace}
\newcommand{\lockitataspecifictime}[0]{\code{Lock}-\code{It}-\code{at}-\code{a}-\code{Specific}-\code{Time}\xspace}
\newcommand{\autolockdoor}[0]{\code{Auto}-\code{Lock}-\code{Door}\xspace}
\newcommand{\thermostats}[0]{\code{Thermostats}\xspace}
\newcommand{\thermostatautooff}[0]{\code{Thermostat}-\code{Auto}-\code{Off}\xspace}
\newcommand{\neato}[0]{\code{Neato}-\code{(Connect)}\xspace}
\newcommand{\forgivingsecurity}[0]{\code{Forgiving}-\code{Security}\xspace}
\newcommand{\turnonatsunset}[0]{\code{Turn}-\code{On}-\code{at}-\code{Sunset}\xspace}
\newcommand{\lightupthenight}[0]{\code{Light}-\code{Up}-\code{the}-\code{Night}\xspace}
\newcommand{\bosesound}[0]{\code{Bose}-\code{SoundTouch}-\code{Control}\xspace}
\newcommand{\sprayercontroller}[0]{\code{Sprayer}-\code{Controller}-\code{2}\xspace}
\newcommand{\closethevalve}[0]{\code{Close}-\code{The}-\code{Valve}\xspace}
\newcommand{\influxdblogger}[0]{\code{InfluxDB}-\code{Logger}\xspace}
\newcommand{\buttoncontroller}[0]{\code{Button}-\code{Controller}\xspace}
\newcommand{\smartsecurity}[0]{\code{Smart}-\code{Security}\xspace}
\newcommand{\vacationlightingdirector}[0]{\code{Vacation}-\code{Lighting}-\code{Director}\xspace}
\newcommand{\lightingdirector}[0]{\code{Lighting}-\code{Director}\xspace}
\newcommand{\bigturnoff}[0]{\code{Big}-\code{Turn}-\code{OFF}\xspace}
\newcommand{\monitoronsense}[0]{\code{Monitor}-\code{on}-\code{Sense}\xspace}
\newcommand{\huemoodlighting}[0]{\code{Hue}-\code{Mood}-\code{Lighting}\xspace}
\newcommand{\doorstatetocolorlight}[0]{\code{Door}-\code{State}-\code{to}-\code{Color}-\code{Light}\xspace}
\newcommand{\greetingsearthling}[0]{\code{Greetings}-\code{Earthling}\xspace}
\newcommand{\hellohomephrasedirector}[0]{\code{Hello,}-\code{Home}-\code{Phrase}-\code{Director}\xspace}
\newcommand{\bonvoyage}[0]{\code{Bon}-\code{Voyage}\xspace}
\newcommand{\goodnight}[0]{\code{Good}-\code{Night}\xspace}
\newcommand{\wholehousefan}[0]{\code{Whole}-\code{House}-\code{Fan}\xspace}
\newcommand{\autohumidityvent}[0]{\code{Auto}-\code{Humidity}-\code{Vent}\xspace}
\newcommand{\keepmecozy}[0]{\code{Keep}-\code{Me}-\code{Cozy}\xspace}

\newcommand{\harry}[1] {{\textcolor{blue}{Harry: {#1}}}}
\newcommand{\rahmadi}[1]{{#1}}
\newcommand{\weiyu}[1]{{\color{purple}#1}}
\newcommand{\camready}[1]{#1}

\newcounter{isvmcaiversion}
\setcounter{isvmcaiversion}{0}

\ifthenelse{\value{isvmcaiversion}>0}
{
    \SetWatermarkText{\hspace*{6in}\raisebox{8.8in}{\includegraphics[scale=0.15]{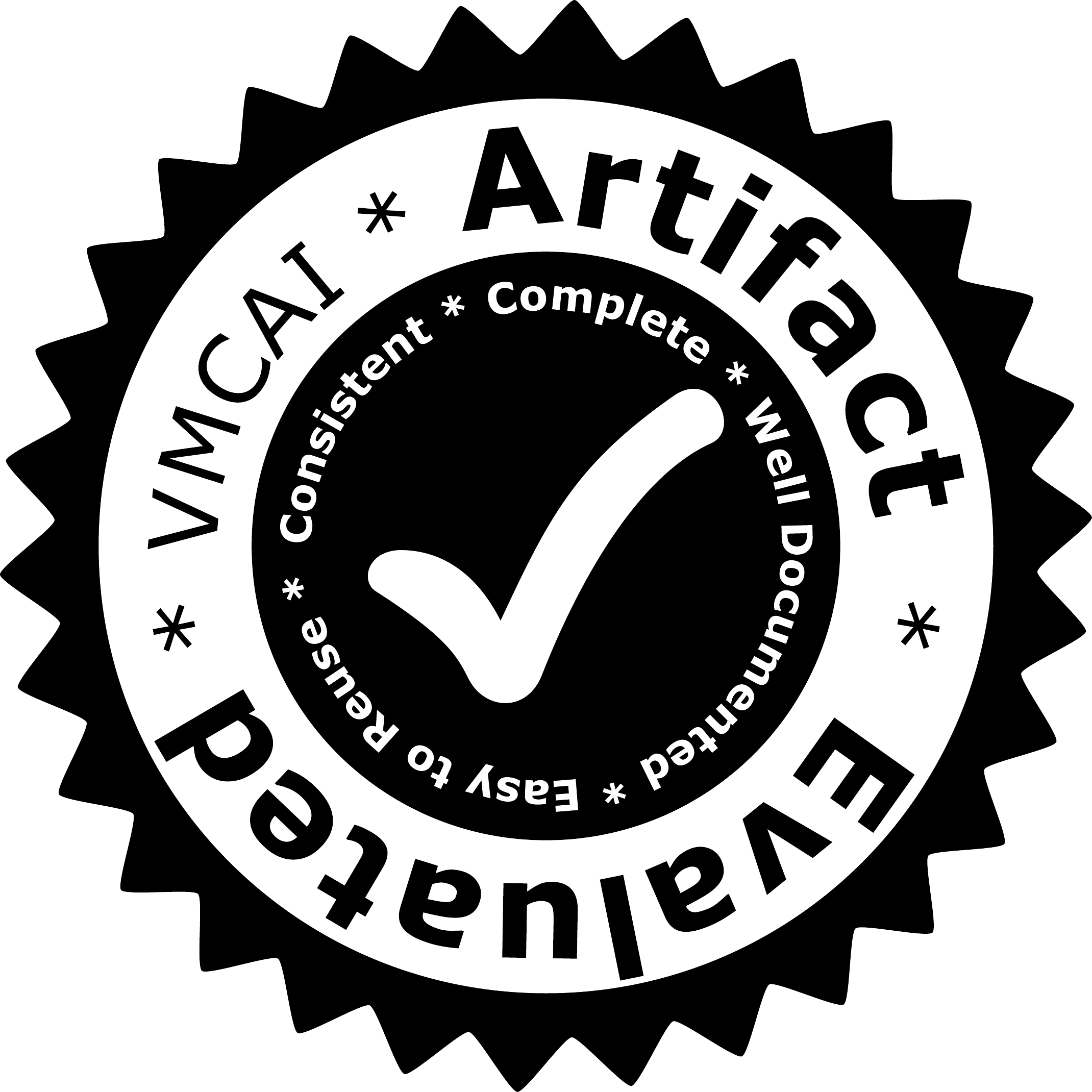}}}
    \SetWatermarkAngle{0}
    \pagenumbering{gobble} 
    \newcommand{\appdx}[1]{\text{\color{blue} #1 in~\todo{\cite{ArXiv}}}}
    \newcommand{\invmcaiversion}[1]{#1}
    \newcommand{\inarxivversion}[1]{}
}
{
    \SetWatermarkText{}
    \SetWatermarkAngle{0}
    \newcommand{\appdx}[1]{#1}
    \newcommand{\invmcaiversion}[1]{}
    \newcommand{\inarxivversion}[1]{#1}
}

\maketitle

\thispagestyle{plain}
\pagestyle{plain}

\begin{abstract}
Event-driven architectures are broadly used for systems that must
respond to events in the real world.  Event-driven applications are
prone to concurrency bugs that involve subtle errors in reasoning
about the ordering of events.  Unfortunately, there are
several challenges in using existing model-checking techniques on
these systems.  Event-driven applications often loop indefinitely and
thus pose a challenge for stateless model checking techniques.  On the
other hand, deploying purely stateful model checking can explore large
sets of equivalent executions.

In this work, we explore a new technique that combines dynamic partial
order reduction with stateful model checking to support
non-terminating applications.  Our work is (1) the first dynamic partial order reduction algorithm for stateful model checking that is sound for non-terminating applications and (2) the first dynamic partial reduction algorithm for stateful model checking of event-driven applications.
We experimented with the IoTCheck dataset---a study of interactions in smart home app pairs.
This dataset consists of app pairs originated from 198 real-world smart home apps.
Overall, our DPOR algorithm successfully reduced the search space for the app pairs, enabling 69 pairs of apps that did not finish without DPOR to finish and providing a 7$\times$ average speedup.
\end{abstract}

\section{Introduction\label{sec:intro}}

Event-driven architectures are broadly used to build systems that react to events in the real world.  They include smart home systems, GUIs, mobile applications, and servers.  
For example, in the context of smart home systems, event-driven systems include Samsung SmartThings~\cite{smartthings}, Android Things~\cite{androidthings}, Openhab~\cite{openhab}, and If This Then That (IFTTT)~\cite{ifttt}.

Event-driven architectures can have analogs of the concurrency bugs that are known to be problematic in multithreaded programming.  Subtle programming errors involving the ordering of events can easily cause event-driven programs to fail.  These failures can be challenging to find during testing as exposing these failures may require a specific set of events to occur in a specific order.
Model-checking tools can be helpful for finding subtle concurrency bugs or understanding complex interactions between different applications~\cite{trimanafse20}.  In recent years, significant work has been expended on developing model checkers for multithreaded concurrency~\cite{odpor,statelessmcr,jensen2015stateless,dporlocks,pldipsotso,inspect1,inspect2,inspect4}, but event-driven systems have received much less attention~\cite{jensen2015stateless,maiya2016por}.

Event-driven systems pose several challenges for existing \emph{stateless} and \emph{stateful} model-checking tools.  Stateless model checking of concurrent applications explores all execution schedules without checking whether these schedules visit the same program states.  Stateless model checking often uses dynamic partial order reduction (DPOR) to eliminate equivalent schedules. \mycomment{,\ie schedules that rearrange independent events}
While there has been much work on DPOR for stateless model checking of multithreaded programs~\cite{flanagan2005dynamic,odpor,dporlocks,pldipsotso,statelessmcr}, stateless model checking requires that the program under test terminates for fair schedules.  Event-driven systems are often intended to run continuously and may not terminate.  To handle non-termination, stateless model checkers require hacks such as bounding the length of executions to verify event-driven systems.

Stateful model checking keeps track of an application's states and avoids revisiting the same application states.  It is less common for stateful model checkers to use dynamic partial order reduction to eliminate equivalent executions.
Researchers have done much work on stateful model checking~\cite{jpf,spin,musuvathi2002cmc,gueta2007cartesian}.  While stateful model checking can handle non-terminating programs, they  miss an opportunity to efficiently reason about conflicting transitions to scale to large programs.  In particular, typical event-driven programs such as smart home applications have several event handlers that are completely independent of each other.  Stateful model checking enumerates different orderings of these event handlers, overlooking the fact that these handlers are independent of each other and hence the orderings are equivalent.

Stateful model checking and dynamic partial order reduction discover
different types of redundancy, and therefore it is beneficial to combine them to further improve model-checking scalability and efficiency.  For example, we have observed that some smart home systems have
several independent event handlers \mycomment{(\eg monitoring applications or
just independent systems)}in our experiments, and stateful model checkers can waste
an enormous amount of time exploring different orderings of these
independent transitions.  DPOR can substantially reduce the number of
states and transitions that must be explored.  Although work has been done to 
combine DPOR algorithms with stateful model
checking~\cite{yang2008efficient,yi2006stateful} in the context of
multithreaded programs, this line of work requires that the application has
an \emph{acyclic state space}, \ie it terminates under all schedules.  In particular, the
approach of Yang \etal\cite{yang2008efficient} is designed explicitly for
programs with acyclic state space and thus
cannot check programs that do not terminate.  
Yi \etal\cite{yi2006stateful} presents a DPOR algorithm for stateful model
checking, which is, however, incorrect for cyclic state spaces. For
instance, their algorithm fails to produce the asserting execution in
the example we will discuss shortly in
Figure~\ref{fig:example-non-terminating}.  As a result, prior DPOR
techniques all fall short for checking event-driven programs such as
smart home apps, that, in general, do not terminate.

\myparagraph{Our Contributions}
In this work, we present a stateful model checking technique for
event-driven programs that may not terminate. Such programs have cyclic state spaces, and existing algorithms can prematurely terminate an execution and thus fail to set
the necessary backtracking points to fully explore a program's
state space. Our \textbf{first} technical contribution is the \emph{formulation of a
sufficient condition to complete an execution of the application
that ensures that our algorithm fully explores the application's
state space}.

In addition to the early termination issue, for
programs with cyclic state spaces, a model checker can discover
multiple paths to a state \State{} before it explores the entire state
space that is reachable from state \State{}.  In this case, the
backtracking algorithms used by traditional DPOR techniques including
Yang \etal\cite{yang2008efficient} can fail to set the necessary
backtracking points.  Our \textbf{second} technical contribution is \emph{a graph-traversal-based algorithm to appropriately set backtracking points on all paths that can reach the current state}.

Prior work on stateful DPOR only considers the multithreaded case
and assumes algorithms know the effects of the next transitions
of all threads before setting backtracking points.   For multithreaded programs, this assumption is \emph{not} a serious limitation as transitions model low-level memory operations (\ie reads, writes, and RMW operations), and each transition involves a \emph{single} memory operation.
However, in the context of event-driven programs,  events can involve
many memory operations that access multiple memory locations,
and knowing the effects of a transition requires actually
executing the event.  While it is conceptually possible to execute events and then rollback to discover their effects, this approach is likely to incur large overheads as model checkers need to know the effects of enabled events at each program state. As our \textbf{third} contribution, \emph{our algorithm avoids this extra rollback overhead by waiting until an event is actually
executed to set backtracking points and incorporates a modified backtracking
algorithm to appropriately handle events}.




\rahmadi{We have implemented the proposed algorithm in the Java Pathfinder model checker~\cite{jpf} and evaluated it on hundreds of real-world smart home apps. We have made our DPOR implementation publicly available~\cite{iotcheck-dpor-software}.}

\myparagraph{Paper Structure}
The remainder of this paper is structured as follows: 
Section~\ref{sec:concurrency} presents the event-driven concurrency model that we use in this work.
Section~\ref{sec:definition} presents the definitions we use to describe our stateful DPOR algorithm.
Section~\ref{sec:basicideas} presents problems when using the classic DPOR algorithm to model check event-driven programs and the basic ideas behind how our algorithm solves these problems.
Section~\ref{sec:algorithm} presents our stateful DPOR algorithm for event-driven programs.
Section~\ref{sec:impeval} presents the evaluation of our algorithm 
implementation on hundreds of smart home apps.    
Section~\ref{sec:related} presents the related work; we conclude in
Section~\ref{sec:conclusion}. 

\mysection{Event-Driven Concurrency Model\label{sec:concurrency}}
\camready{
In this section, we first present the concurrency model of our event-driven system and then discuss the key elements of this system formulated as an event-driven concurrency model.} Our event-driven system is inspired by---and distilled from---smart home IoT devices and applications deployed widely in the real world. 
Modern smart home platforms support developers writing apps that 
implement useful functionality on smart devices.  
Significant efforts have been made to create integration platforms such as
Android Things from Google~\cite{androidthings}, SmartThings from Samsung~\cite{smartthings}, and the open-source openHAB platform~\cite{openhab}. All of these platforms allow users to create \emph{smart home apps} that integrate multiple devices and perform complex routines, such as implementing a home security system. 

The presence of multiple apps that can control the same device creates undesirable interactions~\cite{trimanafse20}. For example, a homeowner may install the \firecoalarm~\cite{fireco2alarm-app} app, which upon the detection of smoke, sounds alarms and unlocks the door. The same homeowner may also install the \lockitwhenileave~\cite{lockitwhenileave} app to lock the door automatically when the
homeowner leaves the house.  However, these apps can interact in surprising ways when installed together. For instance, if smoke is detected, \firecoalarm will unlock the door. If someone leaves home\mycomment{with the presence tag, this will make the presence sensor change its state from
\code{"present"} to \code{"not present"}}, the \lockitwhenileave app will lock the door.
This defeats the intended purpose of the \firecoalarm app. Due to the increasing popularity of IoT devices, understanding and finding such conflicting interactions has become a hot research topic~\cite{iagraphs,iagraphstechreport,vicaire2010physicalnet,vicaire2012bundle,wood2008context} in the past few years. Among the many techniques developed, model checking is a popular one~\cite{yagita-conflict-detection,trimanafse20}. However, existing DPOR-based model checking algorithms do not support non-terminating event-handling logic (detailed in Section~\ref{sec:basicideas}), which strongly motivates the need of developing new algorithms that are both sound and efficient in handling real-world event-based (\eg IoT) programs. 
\vspace{-.5em}

\subsection{Event-Driven Concurrency Model}
\rahmadi{
We next present our event-driven concurrency model (see an example of event-driven 
\invmcaiversion{systems in \appdx{Appendix A}).}
\inarxivversion{systems in \appdx{Appendix~\ref{sec:example-event-driven-system}}).}
}
We assume that
the event-driven system has a finite set $\Events{}$ of different event types.  Each event type $\Event{} \in \Events{}$ has a
corresponding event handler that is executed when an instance of the
event occurs.  We assume that there is a potentially shared state and
that event handlers have arbitrary access to read and write from this
shared state.

An event handler can be an arbitrarily long finite sequence of
instructions and can include an arbitrary number of accesses to shared
state.  We assume event-handlers are executed atomically by the
event-driven runtime system.
Events can be enabled by both external sources (\eg events in the
physical world) or event handlers.  Events can also be disabled by
the execution of an event handler.
We assume that the runtime system maintains an unordered set of
enabled events to execute.  It contains an event dispatch
loop that selects an arbitrary enabled event to execute next.

This work is inspired by smart-home systems that are widely deployed in the real world. However, the proposed techniques are general enough to handle other types of event-driven systems,
such as web applications, as long as the systems follow the concurrency model stated above.
\vspace{-.5em}

\subsection{Background on Stateless DPOR}

Partial order reduction is based on the observation that traces of concurrent systems are equivalent if they only reorder independent operations.  These equivalence classes are called Mazurkiewicz traces~\cite{tracetheory}.  The classical DPOR algorithm~\cite{flanagan2005dynamic} dynamically computes persistent sets for multithreaded programs and is guaranteed to explore at least one interleaving in each equivalence class.

The key idea behind the DPOR algorithm is to compute the next pending memory operation for each thread, and at each point in the execution to compute the most recent conflict for each thread's next operation.  These conflicts are used to set backtracking points so that future executions will reverse the order of conflicting operations and explore an execution in a different equivalence class. Due to space constraints, we refer the interested readers to \cite{flanagan2005dynamic} for a detailed description of the original DPOR algorithm.   

\mysection{Preliminaries\label{sec:definition}}
We next introduce the notations and definitions we use throughout this paper.

\vspace{.3em}
\myparagraph{Transition System}
We consider a transition system that consists of a finite set $\Events{}$ of events.
Each event $\Event{} \in \Events{}$ executes a sequence of instructions that
change the \emph{global} state of the system.

\vspace{.3em}
\myparagraph{States}
Let $\States{}$ be the set of the states of the system, where
$\State{0} \in \States{}$ is the initial state.
A state \State{} captures the heap 
of a running program and the values of global variables.

\vspace{.3em}
\myparagraph{Transitions and Transition Sequences}
\camready{
Let $\Transitions{}\:$ be the set of all transitions for the system.  Each
transition $\Trans{} \in \Transitions{}\:$ is a partial function from $\States{}$ to $\States{}$.
The notation $\Trans{s,e}=\textit{next}\left( \State{}, \Event{} \right)$ returns the transition $\Trans{s,e}$ from executing event \Event{} on program state \State{}.  We assume that the transition system is deterministic, and thus the destination state $\Dest{}(\Trans{s,e})$ is unique for a given state $\State{}$ and event $\Event{}$. If the execution of transition $\Trans{}$ from $\State{}$ produces state $\State{}'$, then we write $\State{} \xrightarrow{\Trans{}} \State{}'$.}

We formalize the behavior of the
system as a transition system $\TransSystem =
( \States{}, \Delta, \State{0})$, where
$\Delta \subseteq \States{} \times \States{}$ is the transition
relation defined by
\begin{align*}(\State{}, \State{}') \in \Delta \:\: \text{iff} \:\: \exists \Trans{} \in \Transitions{}: \State{} \xrightarrow[]{\Trans{}} \State{}'\end{align*}
and $\State{0}$ is the initial state of the system.

A transition sequence \Trace{} of the transition system is a finite
sequence of transitions $\Trans{1},\Trans{2},...,\Trans{n}$.  These
transitions advance the state of the system from the initial
state \State{0} to further states $\State{1},...,\State{i}$ such that

\hskip9.0em
$\Trace{}=
\State{0}\xrightarrow[]{\Trans{1}}
\State{1}\xrightarrow[]{\Trans{2}}
...\:
\State{i-1}\xrightarrow[]{\Trans{n}}
\State{i}.
$

\vspace{.3em}
\myparagraph{Enabling and Disabling Events} Events can be enabled and disabled.  We make the same assumption as Jensen \etal\cite{jensen2015stateless} regarding the mechanism for enabling and disabling events.  Each event has a special memory location associated with it.  When an event is enabled or disabled, that memory location is written to.  Thus, the same conflict detection mechanism we used for memory operations will detect enabled/disabled conflicts between events.

\vspace{.3em}
\myparagraph{Notation}
We use the following notations in our presentation:
\begin{itemize}
\item$\GetEvent(\Trans{})$ returns the event that performs the transition $\Trans{}$.

\item$\GetFirst(\Trace{}, \State{})$ returns the first occurrence of state \State{} in \Trace{}, \eg if $\State{4}$ is first visited at step 2 then
$\GetFirst(\Trace{}, \State{4})$ returns 2.

\item$\GetLast(\Trace{})$ returns the last state \State{} in a transition sequence \Trace{}.

\item$\Trace{}.\Trans{}$ produces a new transition sequence by extending the transition sequence $\Trace{}$ with the transition $\Trans{}$.

\item$\GetStates(\Trace{})$ returns the set of states traversed by the transition sequence $\Trace{}$.

\item $\Enabled(\State{})$ denotes the set of enabled events at \State{}.

\item $\Backtrack(\State{})$ denotes the backtrack set of state \State{}.

\item $\Done(\State{})$ denotes the set of events that have already been executed at \State{}.

\item $\textit{accesses}(\Trans{})$ denotes the set of memory accesses performed by the transition $\Trans{}$.  An access consists of a memory operation, \ie a read or write, and a memory location.
\end{itemize}

\myparagraph{State Transition Graph}
In our algorithm, we construct a state transition graph \RGraph{} that
is similar to the visible operation dependency graph presented
in~\cite{yang2008efficient}.  The state transition graph records all
of the states that our DPOR algorithm has explored and all of the
transitions it has taken.  In more detail, a state transition graph
$\RGraph{}=\langle V,E \rangle$ for a transition
system is a directed graph, where
every node $\Node\in\Vertices$ is a visited state, and every edge $e \in
E$ is a transition explored in some execution.  We
use \Reachable to denote that a transition is reachable from another transition
in \RGraph{}, \eg $\Trans{1}\Reachable\Trans{2}$ indicates that
$\Trans{2}$ is reachable from $\Trans{1}$ in \RGraph{}.

\vspace{.3em}
\myparagraph{Independence and Persistent Sets}
We define the independence relation over transitions as follows:
\begin{definition}[Independence]
Let $\Transitions{} \:$ be the set of transitions.  An independence relation
$I \subseteq \Transitions{} \: \times \Transitions{}$ is a irreflexive and symmetric relation,
such that for any transitions $(\Trans{1}, \Trans{2}) \in I$ and any state $\State{}$ in the state space of a transition system $A_G$,
the following conditions hold:
\begin{enumerate}
  \item if $\Trans{1} \in \Enabled(\State{})$ and $\State{} \xrightarrow{\Trans{1}} \State{}'$, then $\Trans{2} \in \Enabled(\State{})$ iff $\Trans{2} \in \Enabled(\State{}')$.
  \item if $\Trans{1}$ and $\Trans{2}$ are enabled in $\State{}$, then there is a unique state $\State{}'$ such that
    $\State{} \xrightarrow{\Trans{1} \Trans{2}} \State{}'$ and $\State{} \xrightarrow{\Trans{2} \Trans{1}} \State{}'$.
\end{enumerate}
\end{definition}
\camready{
If $(\Trans{1}, \Trans{2}) \in I$, then we say $\Trans{1}$ and $\Trans{2}$ are independent. 
We also say that two memory accesses to a shared location \textit{conflict} if at least one of them is a write. 
Since executing the same event from different states can have different effects
on the states, \ie resulting in different transitions, we also define the notion of \textit{read-write independence}
between events on top of the definition of independence relation over transitions.}

\camready{
\begin{definition}[Read-Write Independence]
We say that two events $x$ and $y$ are read-write independent, if for every transition
sequences $\tau$ where events $x$ and $y$ are executed, the transitions $\Trans{x}$ and
$\Trans{y}$ corresponding to executing $x$ and $y$ are independent, and
$\Trans{x}$ and $\Trans{y}$ do not have conflicting memory accesses.
\end{definition}}

\begin{definition}[Persistent Set]
A set of events $X \subseteq \Events{}$ enabled in a state $s$ is persistent in $s$ if
for every transition sequence from $s$ 
\[ \State{} \xrightarrow{\Trans{1}} \State{1} \xrightarrow{\Trans{2}} ... \xrightarrow{\Trans{n-1}} \State{n-1} \xrightarrow{\Trans{n}} \State{n} \]
where $\GetEvent{(\Trans{i})} \notin X$ for all $1 \leq i \leq n$, then $\GetEvent{(\Trans{n})}$ is read-write independent with
all events in $X$.
\end{definition}

\rahmadi{
\inarxivversion{In \appdx{Appendix~\ref{sec:proof}}, we prove that exploring a persistent set of}
\invmcaiversion{In \appdx{Appendix B}, we prove that exploring a persistent set of}
events at each state is sufficient to ensure the exploration of at least one execution per Mazurkiewicz trace for a program with cyclic state spaces and finite reachable states.}
\mysection{Technique Overview\label{sec:basicideas}}
\vspace{-.5em}
This section overviews our ideas. These ideas are discussed in the context of four problems
that arise when existing DPOR algorithms are applied directly to event-driven programs. For each problem, we first explain the cause of the problem and then proceed to discuss our solution.

\begin{figure}[!t]
  \centering
  \begin{center}
  { \footnotesize
  \begin{tabular}{| c | c | c | c |}
    \hline
    \begin{lstlisting}[numbers=none]
    /* Initial 
       condition: */
    x = y = z = 0;
    \end{lstlisting} &
    \begin{lstlisting}[numbers=none]
    /* T1: */
    r1 = x++;
    assert(r1 == 0);
    \end{lstlisting} &
    \begin{lstlisting}[numbers=none]
    /* T2: */
    while(true)
        r2 = y;
    \end{lstlisting} &
    \begin{lstlisting}[numbers=none]
    /* T3: */
    r3 = z;
    r4 = x++;
    assert(r4 == 0);
    \end{lstlisting}\\
    \hline
  \end{tabular}
  \includegraphics[width=0.65\linewidth]{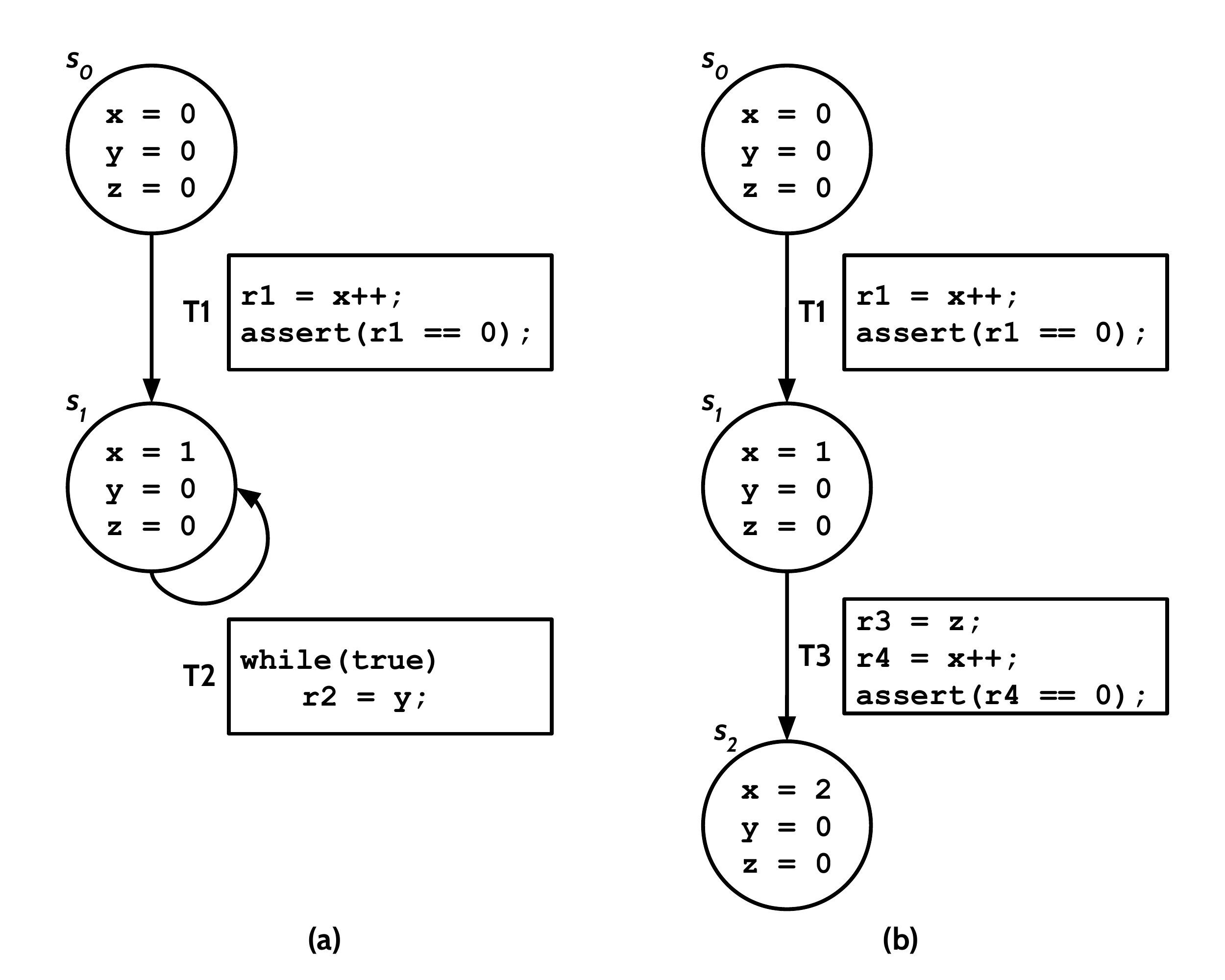}
  }
\end{center}
\vspace{-2em}
  \caption{Problem with existing stateful DPOR algorithms on a non-terminating multithreaded program. Execution (a) terminates at a state match without setting any backtracking points.  Thus, stateful DPOR would miss exploring Execution (b) which has an assertion failure.\label{fig:example-non-terminating}}
  \vspace{-1.5em}
\end{figure}

\vspace{-.5em}
\mysubsection{Problem 1: Premature Termination}
The first problem is that the naive application of existing stateless DPOR
algorithms to stateful model checking will prematurely terminate the execution
of programs with cyclic state spaces, causing a model checker to miss exploring portions of the state space.  This problem is known in the general POR literature~\cite{godefroidbook,valmari90,peled94} and various provisos (conditions) have been proposed to solve the problem.  
While the problem is known, all existing stateful DPOR algorithms produce incorrect results for programs with cyclic state spaces.  Prior work
by Yang \etal\cite{yang2008efficient} only handles programs with
acyclic state spaces.  Work by Yi \etal\cite{yi2006stateful} claims to
handle cyclic state spaces, but overlooks the need for a proviso for when it is
safe to stop an execution due to a state match and thus can produce
incorrect results when model checking programs with cyclic state
spaces.

Figure~\ref{fig:example-non-terminating} presents a simple multithreaded program that illustrates the problem of using a naive stateful adaptation of the DPOR algorithm to check programs with cyclic state spaces. Let us suppose that a stateful DPOR algorithm explores the state space from
$\State{0}$, and it selects thread \code{T$_1$} to take a step:
the state is advanced to state $\State{1}$.  However, when it
selects \code{T$_2$} to take the next step, it will revisit the same
state and stop the current execution (see
Figure~\ref{fig:example-non-terminating}-a). Since it did not set any
backtracking points, the algorithm prematurely finishes its
exploration at this point.  It misses the execution where both
threads~\code{T$_1$} and \code{T$_3$} take steps, leading to an assertion
failure. Figure~\ref{fig:example-non-terminating}-b shows this missing
execution.  The underlying issue with halting an execution when it
matches a state from the current execution is that the execution may
not have explored a sufficient set of events to create the necessary backtracking
points.  In our context, event-driven applications
are non-terminating.
Similar to our multithreaded example, executions in event-driven
applications may cause the algorithm to revisit a state and
prematurely stop the exploration.

\myparagraph{Our Idea}
Since the applications we are interested in typically have cyclic
state spaces, we address this challenge by changing our termination
criteria for an execution to require that an execution either (1)
matches a state from a previous execution or (2) matches a previously
explored state from the current execution and has explored every
enabled event in the cycle at least once since the first exploration of that state. The second criterion would prevent the DPOR algorithm from terminating prematurely after the exploration in Figure~\ref{fig:example-non-terminating}-a.

\begin{figure}[t!]
    \centering
	\includegraphics[width=0.8\linewidth]{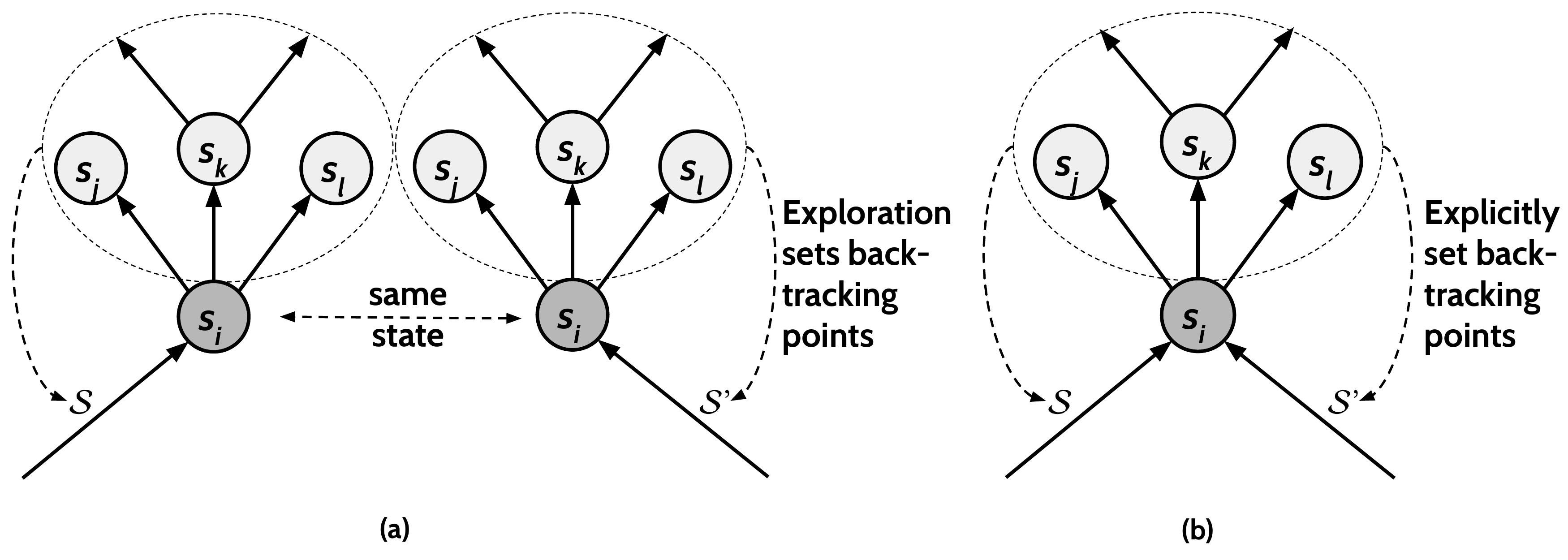}
	\vspace{-1.5em}
    \caption{(a) Stateless model checking explores $\State{i}$, $\State{j}$, $\State{k}$, and $\State{l}$ twice and thus sets backtracking points for both \Trace{} and \Trace{}'. (b) Stateful model checking matches state $\State{i}$ and skips the second exploration and thus we must explicitly set backtracking points.
      \label{fig:matching-explored-states}}
      \vspace{-1.5em}
\end{figure}

\vspace{-.5em}
\mysubsection{Problem 2: State Matching for Previously Explored States\label{sec:state-matching-for-previously-explored-states}}

Typically stateful model checkers can simply terminate an execution
when a previously discovered state is reached.  As mentioned in~\cite{yang2008efficient}, this handling is unsound in the presence
of dynamic partial order reduction.
Figure~\ref{fig:matching-explored-states} illustrates the issue:
Figure~\ref{fig:matching-explored-states}-a and b show the behavior of a classical
stateless DPOR algorithm as well as the
situation in a stateful DPOR algorithm, respectively. We assume that  \Trace{} was the first transition sequence to reach \State{i}
and \Trace{}' was the second such transition sequence.  The issue in Figure~\ref{fig:matching-explored-states}-b is
that after the state match for \State{i} in \Trace{}', the algorithm may \emph{inappropriately} skip setting backtracking points for the transition sequence \Trace{}', preventing the model checker from completely exploring the state space.

\myparagraph{Our Idea} Similar to the approach of Yang \etal\cite{yang2008efficient}, we propose to use a
graph to store the set of previously explored transitions that may set
backtracking points in the current transition sequence, so that the algorithm can set those backtracking points without reexploring the same state space. 

\mysubsection{Problem 3: State Matching Incompletely Explored States\label{sec:state-matching-incompletely-explored-states}}

\begin{figure}[!htb]
\vspace{-2em}
  \centering
  \begin{center}
  { \footnotesize
  \begin{tabular}{| l | l | l | l | l |}
    \hline
    \begin{lstlisting}[numbers=none]
    /* Initial
       condition: 
     */
    x = y = z = 0;
    \end{lstlisting} &
    \begin{lstlisting}[numbers=none]
    // e1:
    x = 1;
    z = 0;
    \end{lstlisting} &
    \begin{lstlisting}[numbers=none]
    // e2:
    r1 = z;
    y = 1;
    \end{lstlisting} &
    \begin{lstlisting}[numbers=none]
    // e3:
    y = 0;
    \end{lstlisting} &
    \begin{lstlisting}[numbers=none]
    // e4:
    if (x==0) {
        if (y==1)
            assert(y==0);
    } else {
        x = 0;
    }
    \end{lstlisting}\\
    \hline
  \end{tabular}
  \includegraphics[width=1\linewidth]{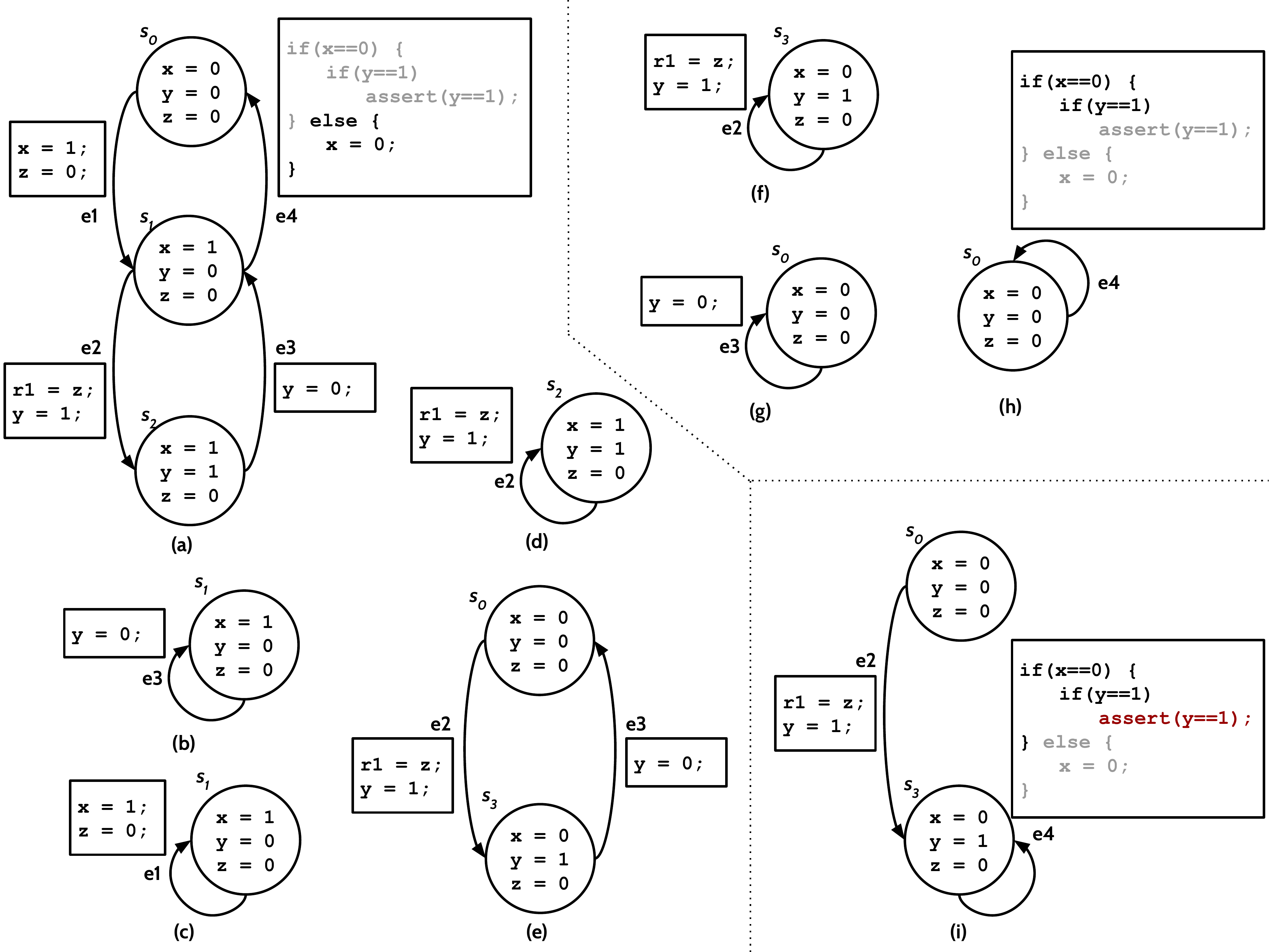}
  }
\end{center}
\vspace{-2em}
\caption{Example of a event-driven program that misses an execution.  We assume that \code{e1}, \code{e2}, \code{e3}, and \code{e4} are all initially enabled.\label{fig:example-missing-execution}}
\vspace{-1.5em}
\end{figure}

Figure~\ref{fig:example-missing-execution} illustrates another problem with cyclic state spaces---even if our new
termination condition and the algorithm for setting backtrack points for a
state match are applied to the stateful DPOR algorithm, it could still fail to
explore all executions.

With our new termination criteria, the stateful DPOR algorithm will first
explore the execution shown in
Figure~\ref{fig:example-missing-execution}-a. It starts from
$\State{0}$ and executes the events \code{e$_1$}, \code{e$_2$},
and \code{e$_3$}. While executing the three events, it puts event \code{e$_2$} in the
backtrack set of $\State{0}$ and event \code{e$_3$} in the backtrack set of
$\State{1}$ as it finds a conflict between the events \code{e$_1$}
and \code{e$_2$}, and the events \code{e$_2$} and \code{e$_3$}. Then, the algorithm
revisits $\State{1}$. At this point it updates the backtrack sets using the transitions that are reachable from state \State{1}: it puts event \code{e$_2$} in the backtrack set of
state \State{2} because of a conflict between \code{e$_2$}
and \code{e$_3$}.

However, with the new termination criteria, it does not stop its
exploration. It continues to execute event \code{e$_4$}, finds a
conflict between \code{e$_1$} and \code{e$_4$}, and
puts event \code{e4} into the backtrack set of $\State{0}$. The algorithm
now revisits state $\State{0}$ and updates the backtrack sets using the transitions reachable from state $\State{0}$: it puts event \code{e$_1$} in the backtrack set of $\State{1}$ because
of the conflict between \code{e$_1$} and \code{e$_4$}.
Figures~\ref{fig:example-missing-execution}-b, c, and d show the
executions explored by the stateful DPOR algorithm from the events \code{e$_1$} and \code{e$_3$} in the
backtrack set of $\State{1}$, and event \code{e$_2$} in the backtrack
set of $\State{2}$, respectively.

Next, the algorithm explores the execution from event \code{e$_2$} in the
backtrack set of $\State{0}$ shown in
Figure~\ref{fig:example-missing-execution}-e. The algorithm finds a
conflict between the events \code{e$_2$} and \code{e$_3$}, and it
puts event \code{e$_2$} in the backtrack set of $\State{3}$ and event \code{e$_3$} in
the backtrack set of $\State{0}$ whose executions are shown in
Figures~\ref{fig:example-missing-execution}-f and g, respectively.  Finally, the algorithm explores the execution from event \code{e$_4$} in the backtrack set of $\State{0}$ shown in Figure~\ref{fig:example-missing-execution}-h.
Then the algorithm stops, failing to explore the
asserting execution shown in
Figure~\ref{fig:example-missing-execution}-i.

The key issue in the above example is that the stateful DPOR algorithm by
Yang \etal\cite{yang2008efficient} does not consider all possible
transition sequences that can reach the current state but merely
considers the current transition sequence when setting backtracking points.  It thus does not add event \code{e$_4$} from the execution in Figure~\ref{fig:example-missing-execution}-h to the backtrack set of state \State{3}.

\myparagraph{Our Idea} Figure~\ref{fig:handling-loops} shows the core issue behind the problem.
When the algorithm sets backtracking points after
executing the transition \Trans{k}, the algorithm must consider both
the transition sequence that includes \Trans{h} and the transition
sequence that includes \Trans{i}.  The classical backtracking algorithm
would only consider the current transition sequence when setting
backtracking points.

\begin{figure}[t]
    \vspace{-.5em}
    \centering
	\includegraphics[width=0.4\linewidth]{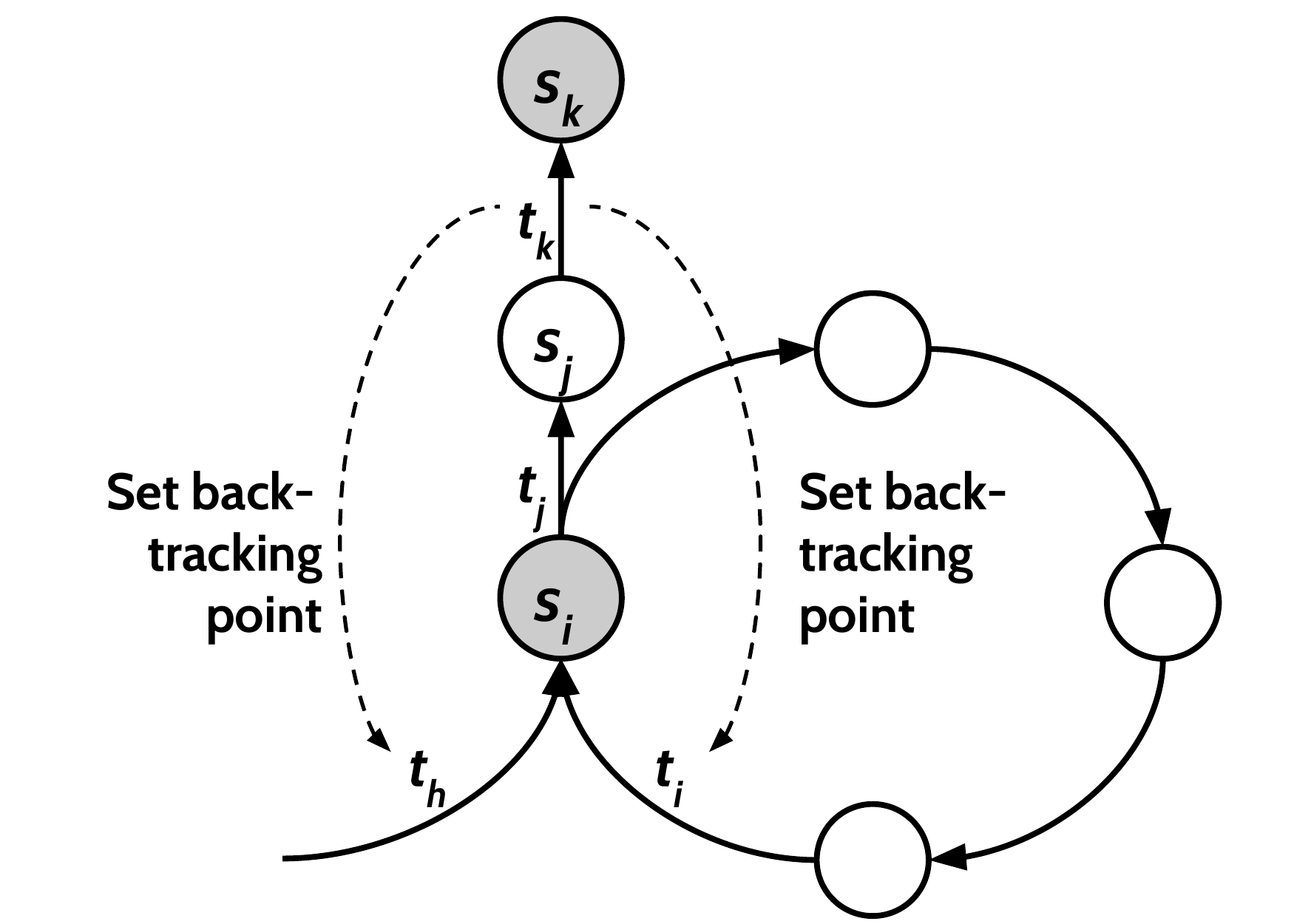}
	\vspace{-1em}
    \caption{Stateful model checking needs to handle loops caused by cyclic state spaces.
      \label{fig:handling-loops}}
      \vspace{-1.5em}
\end{figure}

We propose a new algorithm that uses a backwards depth first search on the state
transition graph combined with summaries to set backtracking points on previously discovered
paths to the currently executing transition.  Yi \etal\cite{yi2006stateful} uses a different approach for updating summary information to address this issue.

\mysubsection{Problem 4: Events as Transitions\label{sec:events-as-transitions}}
\camready{
The fourth problem, also identified in Jensen \etal\cite{jensen2015stateless}, is that existing stateful DPOR algorithms and most
DPOR algorithms assume that each transition only executes a single
memory operation, whereas an event in our context can consist of many
different memory operations. 
For example, the \code{e$_4$} handler in Figure~\ref{fig:example-missing-execution} reads \code{x} and  \code{y}.}

A related issue is that many DPOR
algorithms assume that they know, ahead of time, the effects of the next
step for each thread.  In our setting, however, since events contain many different memory
operations, we must execute an event to know its effects.
Figure~\ref{fig:example-per-memory} illustrates this problem.  In this
example, we assume that each event can only execute once.

\vspace{-1.5em}
\begin{figure}[!htb]
  \centering
  \begin{center}
  { \footnotesize
  \begin{tabular}{| l | l | l | l |}
    \hline
    \begin{lstlisting}[numbers=none]
    /* Initial condition: */
    x = y = 0;
    \end{lstlisting} &
    \begin{lstlisting}[numbers=none]
    /* e1: */
    y = 1;
    \end{lstlisting} &
    \begin{lstlisting}[numbers=none]
    /* e2: */
    x = 1;
    \end{lstlisting} &
    \begin{lstlisting}[numbers=none]
    /* e3: */
    if (x==1)
        assert(y == 1);
    \end{lstlisting}\\
    \hline
  \end{tabular}
  \includegraphics[width=0.7\linewidth]{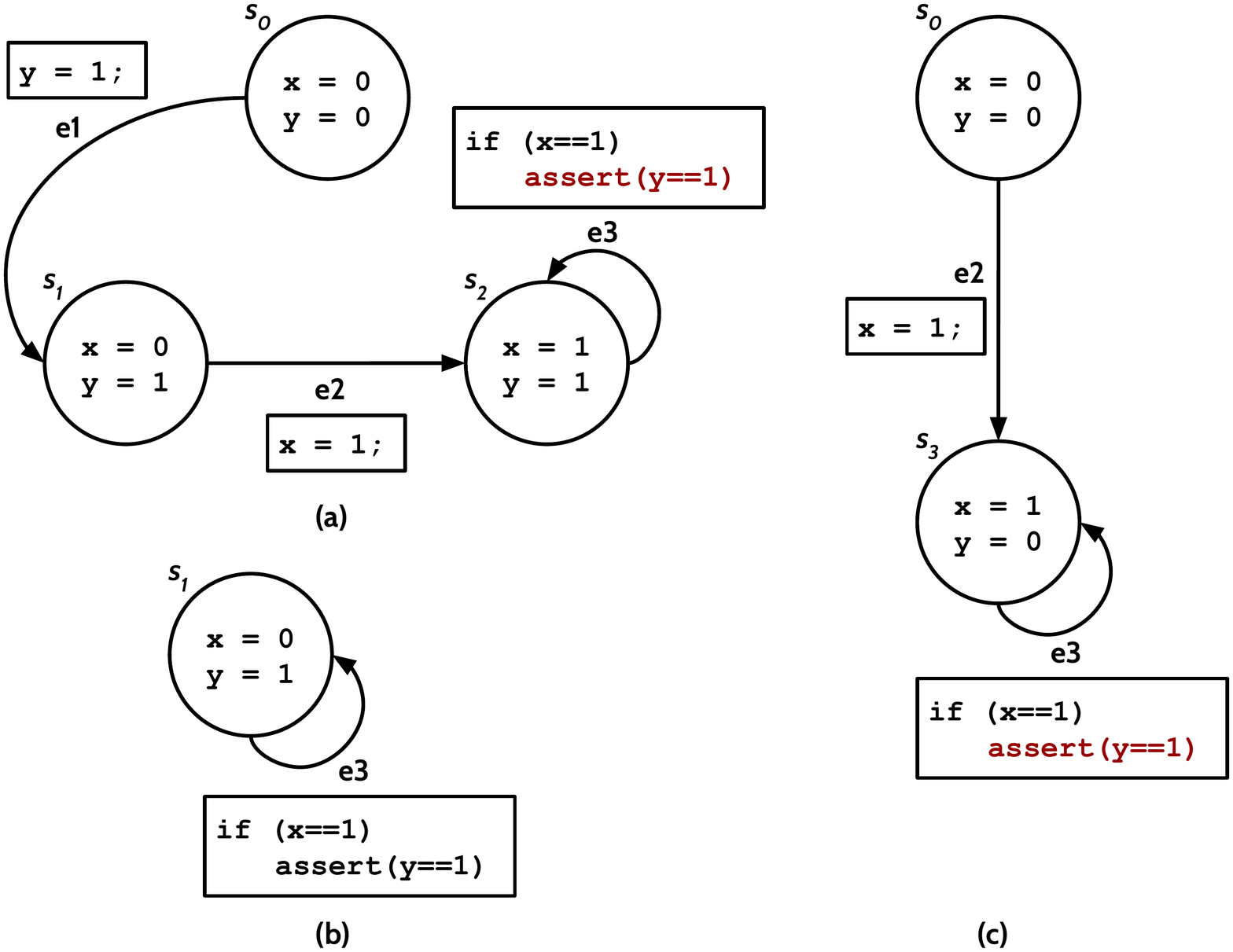}
  }
\end{center}
\vspace{-2em}
  \caption{Example of an event-driven program for which a naive application of the standard DPOR algorithm fails to construct the correct persistent set at state \State{0}.  We assume that \code{e1}, \code{e2}, and \code{e3} are all initially enabled.\label{fig:example-per-memory}}
  \vspace{-1.5em}
\end{figure}

Figure~\ref{fig:example-per-memory}-a shows the first execution of
these 3 events. The stateful DPOR algorithm finds a conflict between the events \code{e$_2$}
and \code{e$_3$}, adds event \code{e$_3$} to the backtrack set for
state \State{1}, and then schedules the second execution shown in
Figure~\ref{fig:example-per-memory}-b. At this point, the exploration
stops prematurely, missing the assertion violating execution
shown in Figure~\ref{fig:example-per-memory}-c.

The key issue is that the set $\{ \code{e$_1$} \}$ is not a persistent
set for state $\State{0}$.  Traditional DPOR algorithms fail to construct the
correct persistent set at state $\State{0}$ because the backtracking
algorithm finds that the transition for event \code{e$_3$} conflicts with the
transition for event \code{e$_2$} and stops setting backtracking points.  This
occurs since these algorithms do not separately track conflicts
from different memory operations in an event when adding
backtracking points---they simply assume transitions are comprised of
single memory operations.  Separately tracking different
operations would allow these algorithms to find a conflict
relation between the events \code{e$_1$} and
\code{e$_3$} (as both access the variable \code{y}) in the first execution, put event \code{e$_2$} into the backtrack set of $\State{0}$, and explore
the missing execution shown in Figure~\ref{fig:example-per-memory}-c.

\myparagraph{Our Idea}
In the classical
DPOR algorithm, transitions correspond to single instructions whose
effects can be determined ahead of time without executing the instructions~\cite{flanagan2005dynamic}.  Thus,
the DPOR algorithm assumes
that the effects of each thread's next transition are known.  Our
events on the other hand include many instructions, and thus,
as Jensen \etal\cite{jensen2015stateless} observes, determining the effects of an
event requires executing the event.  Our algorithm therefore
determines the effects of a transition when the transition is actually
executed.

A second consequence of having events as transitions is that
transitions can access multiple different memory locations.  Thus, as
the example in Figure~\ref{fig:example-per-memory} shows, it does not
suffice to simply set a backtracking point at the last conflicting
transition.  To address this issue, our idea is to compute conflicts
on a \emph{per-memory-location} basis.

\mysection{Stateful Dynamic Partial Order Reduction\label{sec:algorithm}}

\mycomment{
This section presents our algorithm, which extends DPOR to support stateful model checking of event-driven applications with cyclic state spaces. We first describe our algorithm and then present an example that shows how the algorithm works. 

\subsection{The Algorithm}

We first present the states that our algorithm maintains:
}

This section presents our algorithm, which extends DPOR to support stateful model checking of event-driven applications with cyclic state spaces. We first present the states that our algorithm maintains:

\begin{enumerate}
\item {\bf The transition sequence \Trace{}} contains the new transitions that the current execution explores.  Our algorithm explores a given transition in at most one execution.

\item {\bf The state history \PrevStateTable} is a set of program states that have been visited in completed executions.

\item {\bf The state transition graph \RGraph{}} records the states our algorithm has explored thus far.  Nodes in this graph correspond to program states and edges to transitions between program states.
\end{enumerate}

\begin{algorithm}
\DontPrintSemicolon
\SetKwBlock{Begin}{}{end}
\SetAlgoLined
\Begin($\MethodExploreAll {(}{)}$){
$\PrevStateTable := \emptyset$\;  
$\RGraph{} := \emptyset$\;
$\Trace{} := \emptyset$\;
$\MethodExplore {(} \State{0} {)}$\;\label{line:firstexplore}
\While{$\exists \State{}, \: \Backtrack(\State{}) \neq \Done(\State{})$}
      {
$\MethodExplore {(} \State{} {)}$\; \label{line:explore}
      }
}
\caption{Top-level exploration algorithm.\label{alg:sdpormain}}
\end{algorithm}

\begin{algorithm}
\small
\DontPrintSemicolon
\SetAlgoLined
\SetKwBlock{Begin}{}{end}
\Begin($\MethodExplore {(} \State{} {)}$)
{
\If{$\Backtrack(\State{}) = \Done(\State{}) \label{line:ifback}$}{ \label{line:enable-event-start}
\uIf{$\Done(\State{}) = \Enabled(\State{})$}{
\uIf{$\Enabled(\State{}) \text{ is not empty}$}{\label{line:checkenabled}
$\text{select} \: \Event{} \: \in \Enabled(\State{})\label{line:evselect}$\;
$\text{remove} \: \Event{} \: \text{from} \: \Done(\State{})$\;
} \Else {
$\text{add} \: \GetStates{}(\Trace{}) \: \text{to} \: \PrevStateTable$\;
$\Trace{} := \emptyset$\;
\Return{}
}
} \Else {
$\text{select} \: \Event{} \: \in \Enabled(\State{}) \setminus \Done(\State{})$\;\label{line:select-enabled-event}
$\text{add} \: \Event{} \: \text{to} \: \Backtrack(\State{})$\;
}
} \label{line:enable-event-end}
\While{$\exists \BacktrackEvent \in \Backtrack(\State{}) \setminus \Done(\State{})$\label{line:while}} {
  $\text{add} \: \BacktrackEvent \: \text{to} \: \State{}.done$\;
  $\Trans{} := \text{\Next} \left(\State{},\BacktrackEvent\right)$\;\label{line:algnext}
  $\State{}' := \Dest\left(\Trans{}\right)$\;\label{line:algdst}
  $\text{add transition} \: \Trans{} \:\text{to}\:\RGraph{}$\;\label{line:addtrans}
  \ForEach{$\Event{} \in \Enabled(\State{}) \setminus \Enabled(\State{}')$ \label{line:add-disabled} }{
    $\text{add} \: \Event \: \text{to} \: \Backtrack(\State{})$\;\label{line:adddisabled}
  }
  $\MethodUpdateBacktrackSet \left(\Trans{}\right)$\;\label{line:call-update-backtrack-set}
  \uIf{$\State{}' \in \PrevStateTable \vee \MethodIsFullCycle\left(\Trans{}\right)$\label{line:termination-condition}} {
    $\MethodUpdateBacktrackSetsFromGraph \left(\Trans{}\right)$\;\label{line:endbacktrack}
    $\text{add} \: \GetStates{}(\Trace{}) \: \text{to} \: \PrevStateTable$\;
    $\Trace{} := \emptyset$\;
  }
  \Else {
    \If{$\State{}' \in \GetStates{}(\Trace{}) $\label{line:cycle-current-exec-1}} {
      $\MethodUpdateBacktrackSetsFromGraph \left(\Trans{}\right)$\label{line:cycle-current-exec-2}
    }
    $\Trace{} := \Trace{}.\Trans{}$ \;
    $\MethodExplore {(} \CurrentState' {)}$\;\label{line:call-explore}
  }
}
}
 \caption{Stateful DPOR algorithm for event-driven applications.\label{alg:sdpor}}
\end{algorithm}

Recall that for each reachable state $\State{} \in \States{}$, our algorithm maintains the $\Backtrack(\State{})$ set that contains the events to be explored at $\State{}$, the $\Done(\State{})$ set that contains the events that have already been explored at $\State{}$, and the $\Enabled(\State{})$ set that contains all events that are enabled at $\State{}$.

Algorithm~\ref{alg:sdpormain} presents the top-level \MethodExploreAll procedure.  This procedure first invokes the \MethodExplore procedure to start model checking from the initial state.  However,
the presence of cycles in the state space means that our backtracking-based search algorithm may occasionally set new backtracking points for states in completed executions.  The \MethodExploreAll procedure thus loops over all states that have unexplored items in their backtrack sets and invokes the \MethodExplore procedure to explore those transitions.

Algorithm~\ref{alg:sdpor} describes the logic of the \MethodExplore{} procedure.
The \code{if} statement in line~\ref{line:ifback} checks if the
current state \State{}'s $\Backtrack$ set is the same as the
current state \State{}'s \Done set.  If so, the algorithm
selects an event to execute in the next transition.  If some
enabled events are not yet explored, it selects an unexplored event
to add to the current state's $\Backtrack$ set.  Otherwise, if
the $\Enabled$ set is not empty, it selects an enabled event to
remove from the \Done set.  Note that this scenario occurs only if
the execution is continuing past a state match to satisfy the termination condition.

Then the \code{while} loop in line~\ref{line:while} selects an event
$\BacktrackEvent$ to execute on the current state \State{} and
executes the event $\BacktrackEvent$ to generate the
transition \Trans{} that leads to a new state $\State{}'$.
At this point, the algorithm knows the memory accesses performed by
the transition \Trans{} and thus can add the event $\BacktrackEvent$
to the backtrack sets of the previous states.  This is done via the procedure \MethodUpdateBacktrackSet{}.

Traditional DPOR algorithms continue an execution until it terminates.  Since our programs may have cyclic state spaces, this would cause the model checker to potentially not terminate.  Our algorithm instead checks the conditions in 
line~\ref{line:termination-condition} to decide whether to terminate the execution.  These checks see whether the new state $\State{}'$ matches a state from a previous execution, or if the current execution revisits a state the current execution previously explored and meets other criteria that are checked in the \MethodIsFullCycle{} procedure.   If so, the algorithm calls the \MethodUpdateBacktrackSetsFromGraph{} procedure to set backtracking points, from transitions reachable from $\Trans{}$, to states that can reach $\Trans{}$.  An execution will also terminate if it reaches a state in which no event is enabled (line~\ref{line:checkenabled}).
It then adds the states from the current transition sequence to the set of previously visited states $\PrevStateTable$, resets the current execution transition sequence $\Trace{}$, and backtracks to start a new execution. 

If the algorithm has reached a state $\State{}'$ that was previously discovered in this execution, it sets backtracking points by calling the \MethodUpdateBacktrackSetsFromGraph{} procedure.  Finally, it updates the transition sequence $\Trace{}$ and calls \MethodExplore{}.

\vspace{-2em}
\begin{algorithm}
\small
\DontPrintSemicolon
\SetAlgoLined
\SetKwBlock{Begin}{}{end}
\Begin($\MethodUpdateBacktrackSetsFromGraph {(} \Trans{s}{)}$)
{
  $\RGraph{t} := \lbrace
    \Trans{}\in\RGraph{} \mid \:\: \Trans{s}\Reachable\Trans{}
    \rbrace$\;
  \ForEach{$\Trans{} \in \RGraph{t}$\label{line:check-cycle-in-prev-1}}
  {
     $\MethodUpdateBacktrackSet \left(\Trans{}\right)$\;\label{line:check-cycle-in-prev-2}
  }
}
 \caption{Procedure that updates the backtrack sets of states in previous executions.\label{alg:backtrack-update-graph}}
\end{algorithm}
\vspace{-2em}

Algorithm~\ref{alg:backtrack-update-graph} shows the \MethodUpdateBacktrackSetsFromGraph{} procedure.  This procedure takes a transition $\Trans{}$ that connects the current execution to a previously discovered state in the transition graph $\RGraph{}$.  Since our algorithm does \emph{not} explore all of the transitions reachable from the previously discovered state, we need to set the backtracking points that would have been set by these skipped transitions. This procedure therefore computes the set of transitions reachable from the destination state of $\Trans{}$ and invokes $\MethodUpdateBacktrackSet{}$ on each of those transitions to set backtracking points.

\vspace{-2em}
\begin{algorithm}
\small
\DontPrintSemicolon
\SetAlgoLined
\SetKwBlock{Begin}{}{end}
\Begin($\MethodIsFullCycle {(} \Trans{} {)}$)
{
\If{$\neg \textit{dst}(\Trans{}) \in \GetStates{}(\Trace{})$}{\Return{$\textit{false}$}}
  $\Trace{}^{\textit{fc}} := \lbrace
    \Trans{j} \in \Trace{} \mid
        i=\GetFirst(\Trace{}, \Dest(\Trans{})), \:\: \text{and} \:\:
        i < j \rbrace \cup \{ \Trans{} \}$\;
  $\Events{\textit{fc}} := \lbrace
    \GetEvent{}(\Trans{}') \mid
        \forall \Trans{}' \in \Trace{}^{\textit{fc}} \: \rbrace$\; \label{line:fullcycle-fc}
  $\Events{\textit{enabled}} := \lbrace
    \Enabled(\Dest{}(\Trans{}')) \mid
        \forall \Trans{}' \in \Trace{}^{\textit{fc}} \: \rbrace$\; \label{line:fullcycle-enabled}
  \Return{$\Events{\textit{fc}} = \Events{\textit{enabled}}\:$\label{line:check-full-cycle}}
}
 \caption{Procedure that checks the looping termination condition: a cycle that contains every event enabled in the cycle.
 \label{alg:complete-full-cycle}}
\end{algorithm}
\vspace{-2em}

Algorithm~\ref{alg:complete-full-cycle} presents
the \MethodIsFullCycle{} procedure.  This procedure first checks if
there is a cycle that contains the transition $\Trans{}$ in the state
space explored by the current execution.  The example from
Figure~\ref{fig:example-non-terminating} shows that such a state match
is not sufficient to terminate the execution as the execution may not
have set the necessary backtracking points.  Our algorithm stops the
exploration of an execution when there is a cycle that has explored
\emph{every event that is enabled in that cycle}.  This ensures that for
every transition \Trans{} in the execution, there is a future
transition \Trans{e} for each enabled event \Event{} in the cycle that
can set a backtracking point if \Trans{} and \Trans{e} conflict.

Algorithm~\ref{alg:backtrack-updatev2} presents the \MethodUpdateBacktrackSet{} procedure, which sets backtracking points.  There are two differences between our algorithm and traditional DPOR algorithms.  First, since our algorithm supports programs with cyclic state spaces, it is possible that the algorithm has discovered multiple paths from the start state $\State{0}$ to the current transition $\Trans{}$.  Thus, the algorithm must potentially set backtracking points on multiple different paths.  We address this issue using a backwards depth first search traversal of the $\RGraph{}$ graph.   Second, since our transitions correspond to events, they may potentially access multiple different memory locations and thus the backtracking algorithm potentially needs to set separate backtracking points for each of these memory locations.

The \MethodUpdateBacktrackSetDFS{} procedure implements a backwards
depth first traversal to set backtracking points.  The procedure takes
the following parameters: \Trans{\text{curr}} is the current transition in the
DFS, \Trans{\text{conf}} is the transition that we are currently setting
a backtracking point for, \Access{} is the set of accesses that the
algorithm searches for conflicts for, and \Transitions{\text{exp}} is
the set of transitions that the algorithm has explored down this
search path.  Recall that accesses consist of both an operation, \ie a read or write, and a memory location.  Conflicts are defined in the usual way---writes to a memory location conflict with reads or writes to the same location.

\vspace{-2em}
\begin{algorithm}
\small
\DontPrintSemicolon
\SetAlgoLined
\SetKwBlock{Begin}{}{end}
\Begin($\MethodUpdateBacktrackSet {(} \Trans{} {)}$)
{
 $\MethodUpdateBacktrackSetDFS\left( \Trans{}, \Trans{}, \textit{accesses}(\Trans{}), \{ \Trans{} \} \right)$
}

\Begin($\MethodUpdateBacktrackSetDFS {(} \Trans{\text{curr}}, \Trans{\text{conf}},  \Access{}, \Transitions{\text{exp}} {)}$)
{
  \ForEach{$\Trans{b} \in \textit{pred}_{\RGraph{}}(\Trans{\text{curr}}) \setminus \Transitions{\text{exp}} $\label{lin:loop}}{
  $\Access{b} := \textit{accesses}(\Trans{b})$\;
  $\Trans{\text{conf}}' := \Trans{\text{conf}}$\;
  \If{$\exists a \in \Access{}, \exists a_{b} \in \Access{b}, \Conf(a, a_{b})$\label{lin:conflict}} {
   \uIf{$\GetEvent(\Trans{\text{conf}}) \in \Enabled(\Src(\Trans{b}))$} {
   $\text{add} \: \GetEvent(\Trans{\text{conf}}) \: \text{to} \: \Backtrack(\Src(\Trans{b}))$\;
   } \Else {
      $\text{add} \: \Enabled(\Src(\Trans{b})) \: \text{to} \: \Backtrack(\Src(\Trans{b}))$\;
   }
   $\Trans{\text{conf}}' := \Trans{b}$\;
  }
  $\Access{r} := \{ a \in \Access{} \mid \neg \exists a_{b} \in \Access{b}, \Conf(a, a_{b})\}$\;  \label{lin:newaccesses}
  $\MethodUpdateBacktrackSetDFS \left( \Trans{b}, \Trans{\text{conf}}', \Access{r}, \Transitions{\text{exp}} \: \cup \{ \Trans{b} \} \right)$
  }
}

 \caption{Procedure that updates the backtrack sets of states for previously executed transitions that conflict with the current transition in the search stack.
  \label{alg:backtrack-updatev2}}
\end{algorithm}
\vspace{-2em}

Line~\ref{lin:loop} loops over each transition \Trans{b} that
immediately precedes transition \Trans{\text{curr}} in the state transition
graph and has not been explored.  Line~\ref{lin:conflict} checks for
conflicts between the accesses of \Trans{b} and the access
set \Access{} for the DFS.  If a conflict is detected, the
algorithm adds the event for transition \Trans{\text{conf}} to the backtrack set.
Line~\ref{lin:newaccesses} removes the accesses that conflicted
with transition \Trans{b}.  The search procedure then recursively
calls itself.  If the current transition \Trans{b} conflicts with the
transition \Trans{\text{conf}} for which we are setting a backtracking point, then
it is possible that the behavior we are interested in for \Trans{\text{conf}}
requires that \Trans{b} be executed first.  Thus, if there is a
conflict between \Trans{b} and \Trans{\text{conf}}, we pass \Trans{b} as the
conflict transition parameter to the recursive calls
to \MethodUpdateBacktrackSetDFS{}.

\rahmadi{
\inarxivversion{\appdx{Appendix~\ref{sec:proof}} proves correctness properties for our DPOR algorithm.}
\invmcaiversion{\appdx{Appendix B} proves correctness properties for our DPOR algorithm.}
\inarxivversion{\appdx{Appendix~\ref{sec:example}} revisits the example shown in}
\invmcaiversion{\appdx{Appendix C} revisits the example shown in} 
Figure~\ref{fig:example-missing-execution}. It describes how our DPOR algorithm explores all executions in Figure~\ref{fig:example-missing-execution}, including Figure~\ref{fig:example-missing-execution}-i.
}

\mycomment{
\mysubsection{Discussion}

Although the algorithm we present is designed to explore executions of event-driven systems, the key ideas are also applicable to model checking multithreaded programs.  To the best of our knowledge, there are no correct algorithms for stateful dynamic partial order reduction for multithreaded applications that potentially do not terminate.  The proposed technique bridges the gap between stateful model checking and DPOR, allowing easy adaptation of many existing DPOR algorithms proposed for stateless model checking~\cite{tasharofi2012transdpor,odpor,statelessmcr,chatterjee2019valuecentricdpor,jensen2015stateless} to the stateful context.
}

\mycomment{
\begin{table*}[!htb]
\vspace{-2em}
  \centering
  { \footnotesize
 \caption{Sample model-checked pairs that finished with DPOR but did not finish (\ie "DNF") without it (\textbf{Time} is in seconds). The complete list of 69 pairs is included in Table~\ref{tab:results-dnt-no-dpor-complete} in Appendix.
\label{tab:results-dnt-no-dpor}}
  \begin{tabular}{| r | p{42mm} | r | r | r | r | r | r | r |}
    \hline
    \textbf{No.} & \textbf{App} & \textbf{Evt.} & \multicolumn{3}{ c |}{\textbf{Without DPOR}} & \multicolumn{3}{ c |}{\textbf{With DPOR}}\\
    \cline{4-9}
        & & & \textbf{States} & \textbf{Trans.} & \textbf{Time} & \textbf{States} & \textbf{Trans.} & \textbf{Time}\\
    \hline
        1  & initial-state-event-streamer---thermostat-auto-off     & 78 & DNF & DNF & DNF & 7,146 & 25,850 & 3,285 \\ \hline
        2  & unbuffered-event-sender---hvac-auto-off.smartapp       & 78 & DNF & DNF & DNF & 7,123 & 26,016 & 3,432 \\ \hline
        3  & initial-state-event-sender---hvac-auto-off.smartapp    & 78 & DNF & DNF & DNF & 7,007 & 25,220 & 3,215 \\ \hline
        4  & initial-state-event-streamer---hvac-auto-off.smartapp  & 78 & DNF & DNF & DNF & 7,007 & 25,220 & 3,230 \\ \hline
        5  & initialstate-smart-app-v1.2.0---hvac-auto-off.smartapp & 78 & DNF & DNF & DNF & 7,007 & 25,220 & 3,290 \\ \hline
        6  & lighting-director---circadian-daylight                 & 19 & DNF & DNF & DNF & 6,553 & 33,045 & 6,604 \\ \hline
        7  & initial-state-event-streamer---thermostat              & 81 & DNF & DNF & DNF & 5,646 & 26,620 & 2,965 \\ \hline
        8  & forgiving-security---unbuffered-event-sender           & 80 & DNF & DNF & DNF & 5,019 & 45,208 & 6,259 \\ \hline
        9  & forgiving-security---initial-state-event-streamer      & 80 & DNF & DNF & DNF & 4,902 & 44,230 & 5,697 \\ \hline
        10 & forgiving-security---initialstate-smart-app-v1.2.0     & 80 & DNF & DNF & DNF & 4,902 & 44,230 & 5,702 \\
    \hline
  \end{tabular}
  }
\end{table*}
}

\mysection{Implementation and Evaluation\label{sec:impeval}}
In this section, we present the implementation of our DPOR algorithm 
(Section~\ref{sec:implementation}) and
its evaluation results (Section~\ref{sec:evaluation}).

\mysubsection{Implementation\label{sec:implementation}}
\camready{
We have implemented the algorithm by extending IoTCheck~\cite{trimanafse20}, a tool that model-checks pairs of Samsung's SmartThings smart home apps and reports conflicting updates to
the same device or global variables from different apps. IoTCheck extends Java Pathfinder, an explicit stateful model checker~\cite{jpf}.
In the implementation, we optimized our DPOR algorithm by caching the results of the graph search when \MethodUpdateBacktrackSetsFromGraph{} is called. The results are cached for each state as a summary of the potentially conflicting transitions that are reachable from the given state (see 
\invmcaiversion{\appdx{Appendix D}).}\inarxivversion{\appdx{Appendix~\ref{sec:optimizing-traversals}}).}}

\rahmadi{
We selected the SmartThings platform because it has an extensive collection of event-driven apps.
The SmartThings official GitHub~\cite{smartthings-github} has an active user community---the repository has been forked more than 84,000 times as of August 2021.}

\camready{
We did not compare our implementation against other systems, \eg event-driven systems~\cite{jensen2015stateless,maiya2016por}. Not only that these systems do not perform stateful model checking and handle cyclic state spaces, but also they implemented their algorithms in different domains: web~\cite{jensen2015stateless} and Android applications~\cite{maiya2016por}---it will not be straightforward to adapt and compare these with our implementation on smart home apps.}

\mysubsection{Evaluation\label{sec:evaluation}}
\myparagraph{Dataset}
Our SmartThings app corpus consists of 198 official and third-party apps
that are taken from the IoTCheck smart home apps dataset~\cite{iotcheck-dataset,trimanafse20}.
These apps were collected from different sources, including the official SmartThings
GitHub~\cite{smartthings-github}. In this dataset, the authors of IoTCheck formed pairs of apps to study the interactions between the apps~\cite{trimanafse20}.

We selected the 1,438 pairs of apps in the Device Interaction category as our benchmarks set. It contains a diverse set of apps and app pairs that are further categorized into 11 subgroups based on various device handlers~\cite{devicehandlers} used in each app. 
For example, the \firecoalarm~\cite{fireco2alarm-app} and the \lockitwhenileave~\cite{lockitwhenileave} apps both control and may interact through a door lock (see Section~\ref{sec:intro}). Hence, they are both categorized as a pair in the \code{Locks} group.
As the authors of IoTCheck noted, these pairs are challenging to model check---IoTCheck did not finish for 412 pairs.

\begin{table*}[!htb]
\vspace{-3em}
  \centering
  { \footnotesize
\caption{Sample model-checked pairs that finished with or without DPOR. \textbf{Evt.} is number of events and \textbf{Time} is in seconds. The complete list of results for 229 pairs that finished with or without DPOR is included in \inarxivversion{Table~\ref{tab:results-both-terminated-complete} in \appdx{Appendices}.}\invmcaiversion{Table A.2 in \appdx{Appendices}.}
\label{tab:results-both-terminated}}
  \begin{tabular}{| r | p{42mm} | r | r | r | r | r | r | r |}
    \hline
    \textbf{No.} & \textbf{App} & \textbf{Evt.} & \multicolumn{3}{ c |}{\textbf{Without DPOR}} & \multicolumn{3}{ c |}{\textbf{With DPOR}}\\
    \cline{4-9}
        & & & \textbf{States} & \textbf{Trans.} & \textbf{Time} & \textbf{States} & \textbf{Trans.} & \textbf{Time}\\
    \hline
        1  & smart-nightlight--ecobeeAwayFromHome                              & 14 & 16,441 & 76,720  & 5,059 & 11,743 & 46,196 & 5,498 \\ \hline
        2  & step-notifier--ecobeeAwayFromHome                                 & 11 & 14,401 & 52,800  & 4,885 & 11,490 & 35,142 & 5,079 \\ \hline
        3  & smart-security--ecobeeAwayFromHome                                & 11 & 14,301 & 47,608  & 4,385 & 8,187  & 21,269 & 2,980 \\ \hline
        4  & keep-me-cozy--whole-house-fan                                     & 17 & 8,793  & 149,464 & 4,736 & 8,776  & 95,084 & 6,043 \\ \hline
        5  & keep-me-cozy-ii--thermostat-window-check                          & 13 & 8,764  & 113,919 & 4,070 & 7,884  & 59,342 & 4,515 \\ \hline
        6  & step-notifier--mini-hue-controller                                & 6  & 7,967  & 47,796  & 2,063 & 7,907  & 40,045 & 3,582 \\ \hline
        7  & keep-me-cozy--thermostat-mode-director                            & 12 & 7,633  & 91,584  & 3,259 & 6,913  & 49,850 & 3,652 \\ \hline
        8  & lighting-director--step-notifier                                  & 14 & 7,611  & 106,540 & 5,278 & 2,723  & 25,295 & 2,552 \\ \hline
        9  & smart-alarm--DeviceTamperAlarm                                    & 15 & 5,665  & 84,960  & 3,559 & 3,437  & 40,906 & 4,441 \\ \hline
        10 & forgiving-security--smart-alarm                                   & 13 & 5,545  & 72,072  & 3,134 & 4,903  & 52,205 & 5,728 \\
    \hline
  \end{tabular}
  }
\vspace{-1em}
\end{table*}

\myparagraph{Pair Selection}
In the IoTCheck evaluation, the authors had to exclude 175 problematic pairs. In our evaluation, we further excluded pairs.
First, we excluded pairs that were reported to finish their executions in 10 seconds or less---these typically will generate a small number of states (\ie less than 100) when model checked. 
Next, we further removed redundant pairs across the different 11 subgroups. An app may control different devices, and thus they may use various device handlers in its code. For example, the apps \firecoalarm~\cite{fireco2alarm-app} and \code{groveStreams}~\cite{grovestreams-app} both control door locks and thermostats in their code. Thus, the two apps are categorized as a pair both in the \code{Locks} and \code{Thermostats} subgroups---we need to only include this pair once in our evaluation.
These steps reduced our benchmarks set to 535 pairs.

\myparagraph{Experimental Setup}
Each pair was model checked on an Ubuntu-based server with Intel Xeon quad-core
CPU of 3.5GHz and 32GB of memory---we allocated 28GB of heap space for JVM.
In our experiments, we ran the model checker for every pair for at most 2 hours. We found that the model checker usually ran out of memory for pairs that had to be model checked longer.
Further investigation indicates that these pairs generate too many states even
when run with the DPOR algorithm. 
We observed that many smart home apps generate substantial numbers
of \emph{read-write} and \emph{write-write} conflicts when model checked---this is challenging for any DPOR algorithms.
In our benchmarks set, 300 pairs finished for DPOR and/or no DPOR.

\myparagraph{Results}
Our DPOR algorithm substantially reduced the search space for many pairs.
There are 69 pairs that were \emph{unfinished} (\ie \camready{``Unf''}) without DPOR. These pairs did not finish because their executions exceeded the 2-hour limit, or generated too many states quickly and consumed too much memory, causing the model checker to run out of memory within the first hour of their execution. When run with our DPOR algorithm, these pairs successfully finished---mostly in 1 hour or less. 
\rahmadi{\invmcaiversion{Table A.1}\inarxivversion{Table~\ref{tab:results-dnt-no-dpor-complete}}
in \appdx{Appendices} shows the results for pairs that finished with DPOR but did not finish without DPOR. Most notably, even for the pair \code{initial-state-event-streamer}---\code{thermostat-auto-off} that has the most number of states, our DPOR algorithm successfully finished model checking it within 1 hour.

Next, we discovered that 229 pairs finished when model checked with and without DPOR. Table~\ref{tab:results-both-terminated} shows 10 pairs with the most numbers of states (see the complete results in \invmcaiversion{Table A.2}\inarxivversion{Table~\ref{tab:results-both-terminated-complete}}
in \appdx{Appendices}). These pairs consist of apps that generate substantial numbers
of \emph{read-write} and \emph{write-write} conflicts when model checked with our DPOR algorithm. Thus, our DPOR algorithm did not significantly reduce the states, transitions, and runtimes for these pairs.}

Finally, we found 2 pairs that finished when run without our DPOR algorithm, but did not finish when run with it. These pairs consist of apps that are exceptionally challenging for our DPOR algorithm in terms of numbers of \emph{read-write} and \emph{write-write} conflicts. Nevertheless, these are corner cases---please note that our DPOR algorithm is effective in many pairs.

Overall, our DPOR algorithm achieved a \textbf{2$\times$} state reduction and a \textbf{3$\times$} transition
reduction for the 229 pairs that finished for both DPOR and no DPOR (geometric mean).
Assuming that \camready{``Unf''} is equal to 7,200 seconds (\ie 2 hours) of runtime, we achieved an overall speedup of \textbf{7$\times$} for the 300 pairs (geometric mean).
This is a lower bound runtime for the \camready{``Unf''} cases, in which executions exceeded the 2-hour limit---these pairs could have taken more time to finish.

\section{Related Work\label{sec:related}}

There has been much work on model checking.  Stateless model checking
techniques do not explicitly track which program states have been
visited and instead focus on enumerating
schedules~\cite{godefroidbook,godefroid1997model,godefroid2005software,musuvathi2007chess}.

To make model checking more efficient, researchers have
also looked into various partial order reduction techniques.
The original partial order reduction techniques (\eg persistent/stubborn sets~\cite{godefroidbook,stubbornsets} and sleep sets~\cite{godefroidbook}) can also be used in the context of cyclic state spaces when combined with a proviso that ensures that executions are not prematurely terminated~\cite{godefroidbook}, and ample sets~\cite{clarke1999model,clarke1999state} that are basically persistent sets with additional conditions. However, the persistent/stubborn set techniques ``suffer from severe fundamental limitation''\cite{flanagan2005dynamic}: the operations and their communication objects in future process executions are difficult or impossible to compute precisely through static analysis, while sleep sets alone only reduce the number of transitions (not states).  Work on \emph{collapses} by Katz and Peled also suffers from the same requirement for a statically known independence relation~\cite{collapses}.

The first DPOR technique was proposed by Flanagan and Godefroid~\cite{flanagan2005dynamic} to address those issues.
The authors introduced a
technique that combats the state space explosion by detecting
\emph{read-write} and \emph{write-write} conflicts on shared variable
on the fly.
Since then, a significant effort has been made to further improve
dynamic partial order reduction~\cite{sen2006automated,sen2006race,lauterburg2010evaluating,saarikivi2012improving,tasharofi2012transdpor}.
Unfortunately, a lot of DPOR algorithms assume the context of shared-memory concurrency in that each transition consists of a single memory operation.
In the context of event-driven applications, each transition is an event that can consist of different memory operations. Thus, we have to execute the event to know its effects and analyze it dynamically on the fly in our DPOR algorithm
(see Section~\ref{sec:events-as-transitions}).

\mycomment{
There has been much work on model checking.  Stateless model checking
techniques do not explicitly track which program states have been
visited and instead focus on enumerating
schedules~\cite{godefroidbook,godefroid1997model,godefroid2005software,musuvathi2007chess}.
To make model checking more efficient, researchers have
also looked into different DPOR techniques.  
The first DPOR technique was proposed by Flanagan and Godefroid~\cite{flanagan2005dynamic}.  
The authors introduced a
technique that combats the state space explosion by detecting
\emph{read-write} and \emph{write-write} conflicts on shared variable
on the fly.
This technique is based on the notion of independence between commutative transitions that borrows a similar concept from \emph{collapses} introduced in~\cite{katz1992defining}.
Since then, a significant effort has been made to further improve
dynamic partial order reduction~\cite{sen2006automated,sen2006race,lauterburg2010evaluating,saarikivi2012improving,tasharofi2012transdpor}.
}

Optimal DPOR~\cite{odpor} seeks to
make stateless model checking more efficient by skipping equivalent
executions.  Maximal causality reduction~\cite{statelessmcr} further
refines the technique with the insight that it is only necessary to
explore executions in which threads read different values.
Value-centric DPOR~\cite{chatterjee2019valuecentricdpor} has the insight that executions are equivalent if all of their loads read from the same writes.  Unfolding~\cite{unfoldings} is an alternative approach to POR for reducing the number of executions to be explored.  The unfolding algorithm involves solving an NP-complete problem to add events to the unfolding.

Recent work has extended these algorithms to handle the TSO and PSO
memory models~\cite{nidhugg,pldipsotso,mcrpsotso} and the release
acquire fragment of C/C++~\cite{statelessrelacq}.  The RCMC tool
implements a DPOR tool that operates on execution graphs for the RC11
memory model~\cite{rcmc}.  SAT solvers have been used to avoid
explicitly enumerating all executions.  SATCheck extends
partial order reduction with the insight that it is only necessary to
explore executions that exhibit new behaviors~\cite{satcheck}.
CheckFence checks code by translating it into
SAT~\cite{checkfence}.
\rahmadi{Other work has also presented techniques orthogonal to DPOR, either in a more general context~\cite{desai2015systematic} or platform specific (\eg Android~\cite{ozkan2015systematic} and Node.js~\cite{loring2017semantics}).
}

Recent work on dynamic partial order reduction for event-driven programs has
developed dynamic partial order reduction algorithms for stateless
model checking of event-driven
applications~\cite{jensen2015stateless,maiya2016por}.
Jensen \etal\cite{jensen2015stateless} consider a model similar to ours in which
an event is treated as a single transition, while Maiya \etal\cite{maiya2016por}
consider a model in which event execution interleaves concurrently
with threads.  Neither of these approaches handle cyclic state spaces
nor consider challenges that arise from stateful model checking.

Recent work on DPOR algorithms reduces the number of executions for programs with critical sections by considering
whether critical sections contain conflicting
operations~\cite{dporlocks}.  This work considers stateless
model checking of multithreaded programs, but like our work it must
consider code blocks that perform multiple memory
operations.

CHESS~\cite{musuvathi2007chess} is designed to find and reproduce
concurrency bugs in C, C++, and C\#.  It systematically explores
thread interleavings using a preemption bounded strategy.  The Inspect
tool combines stateless model checking and stateful model checking to
model check C and C++
code~\cite{inspect1,inspect2,inspect4}.

In stateful model checking, there has also been substantial work such as
SPIN~\cite{spin}, Bogor~\cite{dwyer2003bogor}, and JPF~\cite{jpf}.
In addition to these model checkers, other researchers have 
proposed different techniques to capture 
program states~\cite{musuvathi2002cmc,gueta2007cartesian}.

Versions of JPF include a partial order reduction algorithm.  The
design of this algorithm is not well documented, but some publications
have reverse engineered the pseudocode~\cite{porjpf}.  The algorithm
is naive compared to modern DPOR algorithms---this algorithm simply
identifies accesses to shared variables and adds backtracking points
for all threads at any shared variable access.  
\mysection{Conclusion\label{sec:conclusion}}
In this paper, we have presented a new technique
that combines dynamic partial order reduction  with
stateful model checking to model check event-driven
applications with cyclic state spaces.
To achieve this, we introduce two techniques:  a new termination condition for looping executions and a new algorithm for setting backtracking points.    Our technique is the first stateful
DPOR algorithm that can model check event-driven applications with cyclic state spaces.  We have evaluated this work on a benchmark set of smart home apps.  Our results show that our techniques  effectively reduce the search space for these apps.
\inarxivversion{This is the extended version of our paper, with the same title, published at VMCAI 2022.}
\invmcaiversion{An extended version of this paper, including appendices, can \appdx{be found}.}

\camready{
\section*{Acknowledgment}
We would like to thank our anonymous reviewers for their thorough comments and feedback. This project was supported partly by the National Science Foundation under grants CCF-2006948, CCF-2102940, CNS-1703598, CNS-1763172, CNS-1907352, CNS-2006437, CNS-2007737, CNS-2106838, CNS-2128653, OAC-1740210 and by the Office of Naval Research under grant N00014-18-1-2037.
}

\bibliographystyle{splncs04}
\bibliography{paper}

\ifthenelse{\value{isvmcaiversion}>0}
{
}
{
    \newpage
    \appendix

\title{Appendices}
\author{}
\institute{}
\maketitle

\renewcommand\thefigure{A.\arabic{figure}}
\renewcommand\thetable{A.\arabic{table}}
\setcounter{figure}{0}
\setcounter{table}{0}

\mysection{Example Event-Driven System\label{sec:example-event-driven-system}}
Figure~\ref{fig:concurrency-model} depicts components of an example
event-driven system.  The application \code{SmartLightApp} in the
figure is developed using this concurrency model and may run in the
cloud.  It has a fixed set of events that are associated with it and
their respective event handlers. It also has logic to decide whether
to perform specific actions based on the triggering events.  For
example, \code{SmartLightApp} can turn on and off a light bulb based
on specific triggers.

\begin{figure}[t]
    \centering
	\includegraphics[width=0.8\linewidth]{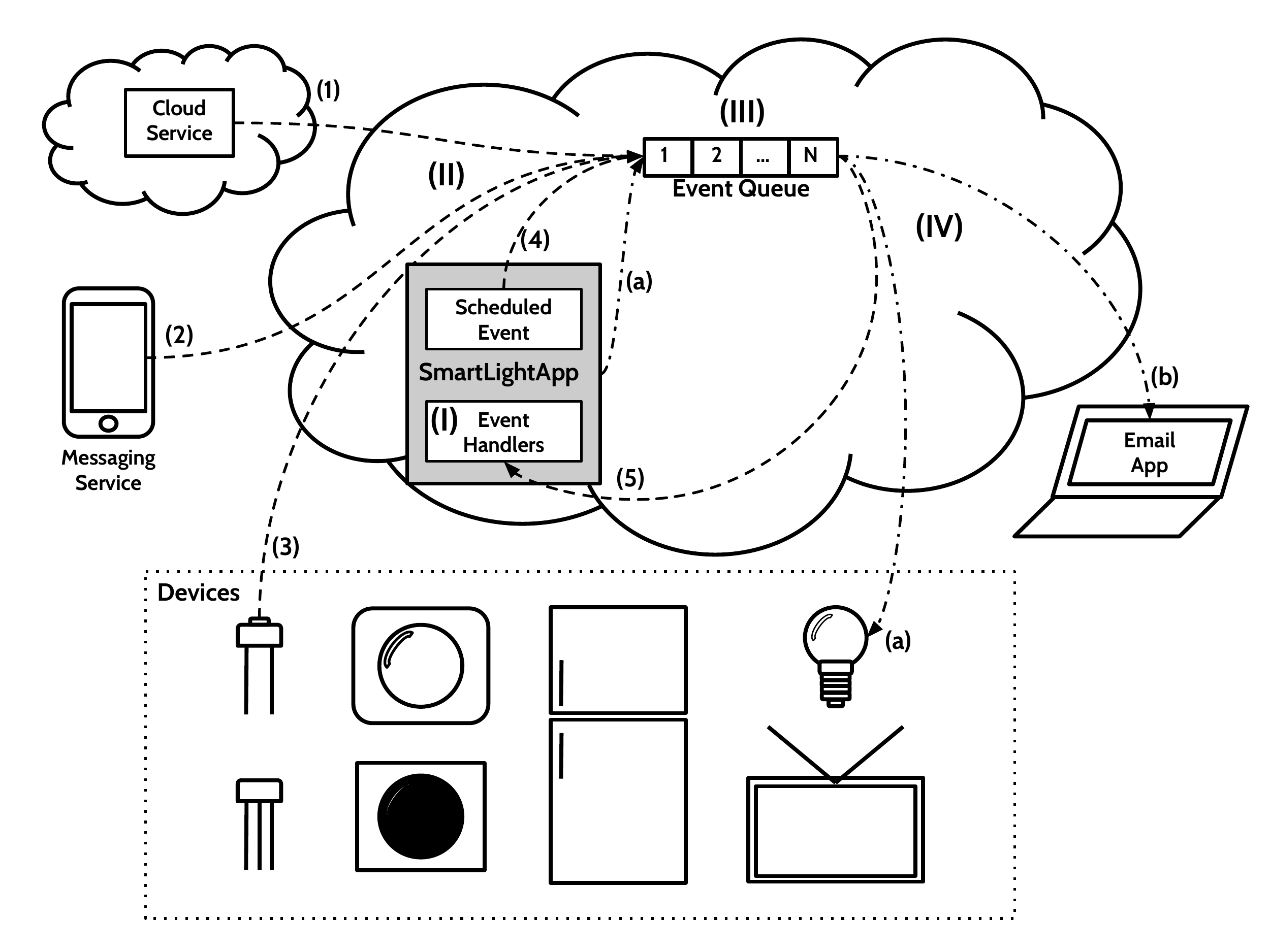}
    \caption{Example Event-driven System. 
      \label{fig:concurrency-model}}
\end{figure}

Our example event-driven system has the following 4 components:

\noindent \textbf{(I) Event handlers:} The application code is
composed of a set of event handlers.  Each event handler processes a
certain class of events.  When the event handler executes, it receives
the appropriate event object.  When the runtime system dequeues an event $\Event{}$, it executes the relevant event handler.

\noindent \textbf{(II) External events:}
An application can have a number of sources that deliver triggering events:
(1) a cloud service
(\eg a cloud service that reports the current outdoor temperature),
(2) a service that is running on a computing device
(\eg a messaging service that sends a trigger whenever there is an incoming message),
(3) a device 
(\eg an illuminance sensor that sends a trigger whenever there is a change of
light intensity in its environment),
(4) a scheduled event within the application
(\eg turn on a light bulb at 6pm), or
(5) an event generated by the application that triggers another component of that app:
these events potentially change the app's state (\eg a scheduled mode change to \textit{night} may trigger a light bulb to turn on).

\noindent \textbf{(III) Internal events:}
An application can directly generate events, for example to:
(a) actuate a certain device
(\eg turn on a light bulb) or
(b) trigger another app running on a computing device
(\eg trigger an email to be sent from an email app).

\noindent \textbf{(IV) Event queue:}
All events are pushed into an event queue to be delivered by the runtime to the appropriate event handlers.

Examples of platforms with this event-driven model include the Samsung SmartThings platform~\cite{smartthings,smartthings-github}, a popular platform used to create
apps that control devices to automate tasks for a smart home.

\section{Correctness\label{sec:proof}}

This section proves the correctness of our stateful DPOR algorithm with the following proof strategy: we first prove that the
stateless analog of our dynamic partial order reduction algorithm is correct in
Theorem~\ref{thm:stateless}.  We then prove that the stateful
algorithm explores all transitions that the stateless algorithm
explores.

Given a transition sequence $\Trace{}$, we use the following notations in our proof:
\begin{itemize}
 \item $\lvert \Trace{} \rvert$ returns the length of $\Trace{}$. 
 \item $\Trace{i}$ denotes the $i$-th transition in $\Trace{}$, and $\Trace{i...j}$ denotes the subsequence of $\Trace{}$ from the $i$-th position to the $j$-th position.
\end{itemize}

We also define the notion of transitive dependence as follows:
\begin{definition}[Transitive Dependence]
Given a transition sequence $\tau$: 
$\State{0} \xrightarrow{\Trans{1}} \State{1} \xrightarrow{\Trans{2}} \State{2} ... \xrightarrow{\Trans{n-1}} \State{n-1} \xrightarrow{\Trans{n}} \State{n}$,
the transitive dependence relation $\rightarrow_{\tau}$ is the smallest partial
order on $\tau$ such that if $i < j$ and $\Trans{i}$ is dependent on $\Trans{j}$, then
$\Trans{i} \rightarrow_{\tau} \Trans{j}$. 
\end{definition}

The notion of transitive dependence enables partial order reduction.
The transition sequence $\tau$ is one linearization of the partial order
$\rightarrow_{\tau}$.  A search algorithm only needs to explore one linearization,
because other linearizations of $\rightarrow_{\tau}$ yield
``equivalent'' transition sequences of $\tau$, which can be obtained
by swapping adjacent independent transitions. 

\mysubsection{Correctness of the Stateless Algorithm}
For the stateless analog of the algorithm, the \code{if} conditions
in lines~\ref{line:termination-condition} and 
\ref{line:cycle-current-exec-1} of Algorithm~\ref{alg:sdpor} are always false, and the data structures
\Trace{} and \PrevStateTable in Algorithm~\ref{alg:sdpormain} and~\ref{alg:sdpor} are \emph{don't-care} terms.
When all events in $\Backtrack(\State{})$ have been explored, the search from $\State{}$ is over, and we say that the state $\State{}$ is ``backtracked''.


\begin{theorem}
\label{thm:stateless}
Let $\Events{}$ be a finite set of events, $\tau$ be a transition sequence
from the initial state $\State{0}$ explored by the stateless analog of
the algorithm in an acyclic transition
system, and $\State{0} \xrightarrow{\tau} \State{}$.  Then, when $\State{}$ is backtracked,
\begin{enumerate}
  \item the set of events that have been explored from $\State{}$ is a persistent set in $\State{}$, and
  \item every trace $\tau \cdot \lambda$ in the state space of $A_G$ is a prefix
    of a linearization of $\rightarrow_{\tau \cdot \lambda'}$ for 
    some explored trace $\tau \cdot \lambda'$. 
\end{enumerate}
\end{theorem}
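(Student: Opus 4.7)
The plan is to prove both claims simultaneously by induction on the reverse backtracking order: since the state space is acyclic and the algorithm performs a depth-first search, by the time $\State{}$ is backtracked every state reachable from $\State{}$ along edges of $\RGraph{}$ has already been backtracked, so I can use the theorem as an inductive hypothesis for each such descendant. The base case is a terminal state, for which $X=\Done(\State{})=\emptyset$ is vacuously persistent and the only trace extending $\tau$ is the empty one.

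\paragraph{Claim (1), persistent set.}
I would argue by contradiction. Let $X=\Done(\State{})$ and suppose $X$ is not persistent in $\State{}$: then there is a shortest transition sequence $\State{} \xrightarrow{\Trans{1}'} \State{1}' \cdots \xrightarrow{\Trans{n}'} \State{n}'$ with every $\GetEvent(\Trans{i}') \notin X$ and $\GetEvent(\Trans{n}')$ not read-write independent with some $e\in X$. Minimality forces each $\Trans{i}'$ $(i<n)$ to be read-write independent with $\Trans{n}'$; applying the two clauses of the independence definition repeatedly then shows that $\GetEvent(\Trans{n}')$ is enabled at $\State{}$ and that $\Trans{n}'$ also occurs immediately after $\Trans{e}:=\Next(\State{},e)$ once the intervening events are played back. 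By the inductive hypothesis applied at $\Dest(\Trans{e})$, the algorithm explores a trace from $\Dest(\Trans{e})$ that is a reordering of $\Trans{1}'\cdots\Trans{n-1}'\Trans{n}'$ via independences in $X\setminus\{e\}$; in particular some transition with the same event and the same conflicting memory access as $\Trans{n}'$ is executed there, producing an edge in $\RGraph{}$ reachable from $\Trans{e}$. When that later transition triggers \MethodUpdateBacktrackSet{}, the backward DFS in \MethodUpdateBacktrackSetDFS{} must eventually reach $\Trans{e}$ (because $\Trans{e}$ is a predecessor along the path), and the conflicting access survives the pruning on line~\ref{lin:newaccesses} because no intermediate transition in the path from $\Trans{e}$ to the conflicting transition touches that memory location---this is exactly where the per-memory-location tracking is essential. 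Hence the procedure would have added $\GetEvent(\Trans{n}')$ (or, if it were disabled, every enabled event at $\State{}$) to $\Backtrack(\State{})$, contradicting $\GetEvent(\Trans{n}')\notin X$.

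\paragraph{Claim (2), trace coverage.}
With Claim~(1) in hand I would prove Claim~(2) by induction on $|\lambda|$. If $\lambda=\varepsilon$ the statement is immediate with $\lambda'$ any explored extension. Otherwise let $\lambda=\Trans{}\cdot\lambda''$. If $\GetEvent(\Trans{})\in X$, apply the inductive hypothesis at $\Dest(\Trans{})$ to $\lambda''$. If not, by Claim~(1) $\GetEvent(\Trans{})$ is read-write independent with every event in $X$; swap $\Trans{}$ to the front of $\lambda$ using the independence clauses, obtaining an equivalent trace whose first step starts with an event in $X$ (some $e\in X$ from the prefix of $\lambda$, which exists because some $\Trans{e}$ with $e\in X$ must lie on any extension of $\tau$ that is not a prefix of $\tau$ itself---otherwise $\Backtrack(\State{})$ would have had to include an event outside $X$ by Claim~(1)). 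The linearization produced in this way is precisely a linearization of $\rightarrow_{\tau\cdot\lambda'}$ for the explored trace $\tau\cdot\lambda'$ delivered by the inductive hypothesis.

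\paragraph{Main obstacle.}
The delicate part, and the step I expect will require the most care, is the per-memory-location accounting inside \MethodUpdateBacktrackSetDFS{}. Unlike the classical DPOR setting, here a single transition carries a \emph{set} of accesses, the DFS shrinks this set on line~\ref{lin:newaccesses}, and $\Trans{\text{conf}}$ is updated whenever an intervening conflict is discovered. I would need a separate lemma stating that along any path from a freshly executed $\Trans{}$ back to an ancestor $\Trans{a}$ in $\RGraph{}$, every access of $\Trans{}$ that conflicts with $\Trans{a}$ on a memory location not touched by any transition on the intermediate path is still present in the $\Access{}$ parameter when the DFS visits $\Trans{a}$, and that the $\Trans{\text{conf}}'$ carried at that point is an event whose addition to $\Backtrack(\Src(\Trans{a}))$ suffices to witness the missing Mazurkiewicz class. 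Without this lemma the contradiction in Claim~(1) would not go through; with it, both claims follow from the induction outlined above.
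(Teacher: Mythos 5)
Your overall strategy matches the paper's: well-founded induction over the acyclic state space, a proof by contradiction via a shortest violating transition sequence for the persistent-set claim, and a commutation argument that moves an event of the explored set to the front of $\lambda$ for the trace-coverage claim. You also correctly isolate the one place where the event-as-transition setting genuinely differs from classical DPOR, namely that the conflicting access must survive the per-location pruning in the backward DFS before it reaches $\Trans{e}$; the paper asserts this survival in a single sentence, so flagging it as a lemma to be proved separately is a fair, arguably more careful, reading of the same step.

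However, two steps as written would fail. First, in Claim (1) you assert that minimality forces each $\Trans{i}'$ ($i<n$) to be read-write independent with $\Trans{n}'$ and that this shows $\GetEvent(\Trans{n}')$ is enabled at $\State{}$. Minimality only gives independence of the intermediate events with the events of $X$, not with $\Trans{n}'$, and $\GetEvent(\Trans{n}')$ need not be enabled at $\State{}$ at all: it may be enabled only later, by some $\Trans{i}'$ writing its shadow location. The paper therefore splits into three situations --- enabled at $\State{}$ but disabled after executing $x$ (handled by the loop that adds newly disabled events to $\Backtrack(\State{})$), enabled both before and after $x$, and disabled at $\State{}$ --- and your argument covers only the middle one. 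Second, in Claim (2) you claim that any non-trivial extension of $\tau$ must contain an event of $X$, ``otherwise $\Backtrack(\State{})$ would have had to include an event outside $X$.'' That inference is not valid: persistence permits arbitrarily long extensions consisting entirely of events independent of everything in $X$. The paper handles this by a separate case (its $\eta$ null case): pick any $y\in X$, note that $\tau\cdot\Trans{y}$ is explored, apply the inductive hypothesis to $\tau\cdot\Trans{y}\cdot\pi$, and then commute $\Trans{y}$ past $\pi$. Both gaps are repairable, but they are precisely the case analyses that make the theorem true, so they need to be filled in rather than elided.
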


\begin{proof}
Acyclicity implies that the system
satisfies the descending chain condition, and we can apply well-founded induction. 
We need to prove the statements for the trace
$\State{0} \xrightarrow{\tau} \State{}$,
assuming that the statements hold for all longer traces
$\State{0} \xrightarrow{\tau} \State{} \xrightarrow{\tau'} \State{}'$,
where $\State{}'$ is reachable from $\State{}$ in finite steps.

Let $T := \Backtrack(\State{})$.  For the first statement, \ie
$T$ is a persistent set when $\State{}$ is backtracked, we will prove it by contradiction.
If $T$ is not a persistent set, then there must exist a transition sequence from $\State{}$,
\[ \State{} \xrightarrow{\Trans{1}} \State{1} \xrightarrow{\Trans{2}} \State{2} ... \xrightarrow{\Trans{n-1}} \State{n-1} \xrightarrow{\Trans{n}} \State{n}, \]
where $\GetEvent{(\Trans{1})}, ..., \GetEvent{(\Trans{n})} \notin T$,
events $\GetEvent{(\Trans{1})}, ..., \GetEvent{(\Trans{n-1})}$ are read-write
independent with all events in $T$,
and $\GetEvent{(\Trans{n})}$ is dependent with some $x \in T$ at state $\State{n-1}$.
Select the shortest such transition sequence.
Let $\sigma$ be the transition sequence $\Trans{1} ... \Trans{n-1}$.
For simplicity, we will label $\GetEvent(\Trans{i})$ as $\tildeEvent{i}$ for $1 \leq i \leq n$.
Note that $\tildeEvent{i}$ and $\tildeEvent{j}$ could be the same event even if $i \neq j$.

Since $x$ is read-write independent with $\tildeEvent{i}$ for all
$1 \leq i \leq n-1$, we have the following transition diagram:
\[
\xymatrix{ \State{}' \ar[r]^{\sigma} & \State{n-1}' \\
           \State{} \ar[r]^{\sigma} \ar[u]^{x} & \State{n-1} \ar[u]^{x} }
\]

There are two cases: (1) $\tildeEvent{n}$ is enabled in $\State{}$;
and (2) $\tildeEvent{n}$ is disabled in $\State{}$.

\vspace{.3em}
\textbf{Case 1:} Suppose that $\tildeEvent{n}$ is enabled
in $\State{}$ but disabled in $\State{}'$.
Since $x \in T$ and $\State{} \xrightarrow{x} \State{}'$ has been explored
when $\State{}$ is backtracked, the \code{for} loop at line~\ref{line:add-disabled}
of Algorithm~\ref{alg:sdpor} will add $\tildeEvent{n}$ to $T$, contradicting the
assumption that $\tildeEvent{n} \notin T$.

Suppose that $\tildeEvent{n}$ is enabled in $\State{}$ and $\State{}'$.
Then, $x$ does not enable or disable $\tildeEvent{n}$.  Since $x$
is read-write independent with all corresponding events in $\sigma$,
$\tildeEvent{n}$ is also enabled in $\State{n-1}'$.
Recall that $\tildeEvent{n}$ is dependent with $x$ at $\State{n-1}$.
Since $x$ does not enable or disable $\tildeEvent{n}$,
$x$ and $\tildeEvent{n}$ must have
conflicting memory accesses $\Access{x, n}$ in the transition
sequence $\tau \cdot \sigma \cdot \Trans{x} \cdot \Trans{n}$, which is equivalent to
the transition sequence $\tau \cdot \Trans{x} \cdot \sigma \cdot \Trans{n}$,
where $\GetEvent({\Trans{x}}) = x$. 
Since $x \in T$ and by the inductive assumption,
we have explored some transition sequence $\tilde{\tau}$ where
some linearization of $\rightarrow_{\tilde{\tau}}$ has prefix
$\tau \cdot \Trans{x} \cdot \sigma \cdot \Trans{n}$.
Thus, line~\ref{line:call-update-backtrack-set} in Algorithm~\ref{alg:sdpor}
would have called $\MethodUpdateBacktrackSet{}$ and found the conflicting memory
accesses $\Access{x, n}$, which is not empty until some
$\tildeEvent{i}$ is added to $T$.  Thus, we have a contradiction.

\vspace{.3em}
\textbf{Case 2:} Suppose that $\tildeEvent{n}$ is disabled in $\State{}$.
Since $\tildeEvent{n}$ is enabled in $\State{n-1}$,
there exists some transition $\Trans{i}$ in $\sigma$ that writes to the
shadow location of $\tildeEvent{n}$.  Since $x$ is
read-write independent with $\tildeEvent{1}, ...,
\tildeEvent{n-1}$ by assumption,
$x$ does not write to the shadow location of $\tildeEvent{n}$,
\ie $x$ does not enable or disable $\tildeEvent{n}$.
Then, $\tildeEvent{n}$ is also enabled in $\State{n-1}'$.
Therefore, the same argument in the second paragraph of Case 1 applies,
and we have a contradiction. \\


We have shown that $T$ is a persistent set.  
We now move to prove the second statement in this theorem, that every trace
$\State{0} \xrightarrow{\tau} \State{} \xrightarrow{\lambda} u$ is a prefix of a linearization
of $\rightarrow_{\tau \cdot \lambda'}$ for some explored trace $\tau \cdot \lambda'$.
If $\lambda$ is the null sequence, then we are done,
because $\State{0} \xrightarrow{\tau} \State{}$
is already explored. 

Let $\lambda = \pi \cdot \eta$, where $\pi$ is the maximal transition 
sequence such that no events executed in $\pi$ is in $T$.
If $\eta$ is not the null sequence, then the first transition in $\eta$,
denoted by $\eta_1$, is in $T$. 
Since $T$ is a persistent set, $\eta_1$ is independent with all transitions in 
$\pi$.  Thus, $\tau \cdot \pi \cdot \eta$ is equivalent to
$\tau \cdot \eta_1 \cdot \pi \cdot \eta_{2...\lvert \eta \rvert}$.
Note that $\tau \cdot \eta_1$ is a transition sequence
that is explored by the algorithm.  Thus, by the inductive assumption,
$\tau \cdot \eta_1 \cdot \pi \cdot \eta_{2...\lvert \eta \rvert}$
is a prefix of a linearization of $\rightarrow_{\tau \cdot \eta_1 \cdot \lambda'}$
for some explored trace $\tau \cdot \eta_1 \cdot \lambda'$.  
Then, $\tau \cdot \lambda = \tau \cdot \pi \cdot \eta$ is a prefix of 
another linearization of $\rightarrow_{\tau \cdot \eta_1 \cdot \lambda'}$. 
The same argument holds if $\pi$ is the null sequence. 

Let $\eta$ be the null sequence.  Pick $y \in T$.
Then, $\tau \cdot \Trans{y}$ is an explored transition sequence,
where $\GetEvent(\Trans{y}) = y$.
The inductive assumption implies that $\tau \cdot \Trans{y} \cdot \pi$
is a prefix of a linearization of $\rightarrow_{\tau \cdot t_y \cdot \lambda'}$
for some explored trace $\tau \cdot \Trans{y} \cdot \lambda'$.
Because $\Trans{y}$ is independent with all transitions in $\pi$, $\tau \cdot \pi \cdot \Trans{y}$
is also a prefix of some linearization of $\rightarrow_{\tau \cdot t_y \cdot \lambda'}$. 
Hence, $\tau \cdot \pi$ is a prefix of some linearization of
$\rightarrow_{\tau \cdot t_y \cdot \lambda'}$ as well.
\end{proof}

\mysubsection{Correctness of the Stateful Algorithm}

Theorem~\ref{thm:stateless} assumes that the transition system $A_G$ has an acyclic state space.
In general, we do not expect our programs to have acyclic state spaces.  Therefore, we make an acyclic version of $A_G$ by constructing a $k$-bounded instantiation of the transition system $A_G$ in which each event can run at most $k$ times.  This is equivalent to transforming the program to add a counter per event type that permanently disables an event after the event handler has been executed $k$ times.  Definition~\ref{def:kbound} formalizes the notion of the $k$-bounded instantiation.
This would conceptually be implemented by adding a separate mechanism to
the algorithm that only adds events that have not reached their $k$ bound
to the backtrack set and would not add events that have reached
their $k$ bound to the backtrack set in line~\ref{line:adddisabled}
of Algorithm~\ref{alg:sdpor}.

\begin{definition}[Strictly k-Bounded Instantiation]\label{def:kbound}
A $k$-bounded instantiation of a transition system $A_G$ is a program, where
the execution continues until it reaches a state, in which all enabled events have run $k$ times.
\end{definition}

We then will demonstrate the correctness of the results of a run of the stateful algorithm on the original, unbounded transition system $A_G$ by showing that for any arbitrary $k$, there exists a run of the stateless algorithm on a $k$-bounded instantiation of $A_G$ that explores a subset of transitions explored by the stateful algorithm on the original transition system.  Our strict, local notion of a $k$-bounded instantiation of $A_G$ is not sufficient to show this because the stateful algorithm could select an event $\Event{}$ to execute at state $\State{}$ that has already reached its $k$-bound while there exist other events that have not reached their $k$-bound.  Therefore, we define a looser notion of a $k$-bounded instantiation of $A_G$ in Definition~\ref{defn:loosek}.

\begin{definition}[Loosely k-Bounded Instantiation]\label{defn:loosek}
A loosely $k$-bounded instantiation of a transition system $A_G$ is a program,
where the execution continues until it reaches a state, in which all enabled events have run \emph{at least} $k$ times.
\end{definition}

We define a run of the stateful or stateless algorithm as
a full invocation of the \MethodExploreAll procedure,
which explores different executions of the targeted program.
There can be multiple runs of the stateless algorithm on a bounded
instantiation of $A_G$, because adding to and extracting events from the
backtrack sets are non-deterministic in
lines~\ref{line:evselect},~\ref{line:select-enabled-event} and~\ref{line:while} of Algorithm~\ref{alg:sdpor}.
However, Theorem~\ref{thm:stateless} shows that different runs of the stateless
algorithm on a strictly $k$-bounded instantiation are equivalent in the sense that
all of them explore all reachable local states of all the events in
the $k$-bounded instantiation.
We next prove Lemma~\ref{lemma:kbound} that shows that stateless model checking a loosely $k$-bounded instantiation of $A_G$ is sufficient to explore the behaviors of a strictly $k$-bounded instantiation of $A_G$.


\begin{lemma}\label{lemma:kbound}
Let $k$ be any positive integer, $\Inst{}$ be a loosely $k$-bounded instantiation
of a transition system $A_G$, and $\Inst{0}$ be the strictly $k$-bounded instantiation of $A_G$.
Let $\text{Execs}$ be the set of runs of the stateless algorithm on $\Inst{0}$.
Let $E^{\Inst{}}$ be a run of the stateless algorithm on $I$.  Then, there exists a
run $E \in \text{Execs}$ such that for every transition sequence
$\tau$ explored by $E$, there is a transition sequence
$\tau'$ explored by $E^{\Inst{}}$ such that
$\tau$ is a prefix of a linearization of
$\rightarrow_{\tau'}$.
\end{lemma}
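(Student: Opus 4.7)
The plan is to show that \emph{any} run $E$ of the stateless algorithm on $\Inst{0}$ satisfies the conclusion, by reducing the statement to a single application of Theorem~\ref{thm:stateless} to $E^{\Inst{}}$ on $\Inst{}$.

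First, I would verify that both $\Inst{0}$ and $\Inst{}$ induce acyclic, finite transition systems. For $\Inst{0}$ this is immediate from the strict per-event bound of $k$. For $\Inst{}$, every transition strictly advances some per-event counter, and the execution must halt once each enabled event has fired at least $k$ times, so no global state can repeat along an execution and only finitely many states are reachable. Acyclicity is what is needed to invoke Theorem~\ref{thm:stateless} on $E^{\Inst{}}$.

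Second, I would argue that every transition sequence $\tau$ explored by $E$ on $\Inst{0}$ is also a valid trace in the state space of $\Inst{}$. The transition relation of $\Inst{0}$ is obtained from that of $\Inst{}$ by further disabling any event whose counter has already reached $k$; both instantiations share the underlying per-event semantics. Hence any $\tau$ respecting the stricter enabledness constraints of $\Inst{0}$ also respects the looser constraints of $\Inst{}$, so $\tau$ is a trace in $\Inst{}$'s state space originating at $\State{0}$.

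Third, I would invoke the second clause of Theorem~\ref{thm:stateless} on $E^{\Inst{}}$ with the empty sequence playing the role of the ``already-explored'' prefix---trivially explored because $\State{0}$ is backtracked by the time $E^{\Inst{}}$ finishes. That clause then says every trace $\lambda$ in $\Inst{}$'s state space is a prefix of a linearization of $\rightarrow_{\lambda'}$ for some trace $\lambda'$ explored by $E^{\Inst{}}$. Instantiating $\lambda := \tau$ yields the required $\tau'$.

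The main obstacle is verifying the reduction cleanly: one has to confirm that acyclicity really holds for $\Inst{}$ despite its looser termination criterion, that every $\Inst{0}$-trace genuinely lifts to an $\Inst{}$-trace, and that the notion of ``explored trace'' transfers between the two instantiations so that Theorem~\ref{thm:stateless}'s hypotheses are satisfied. Once those mechanical points are settled, the main argument reduces to a single appeal to Theorem~\ref{thm:stateless}.
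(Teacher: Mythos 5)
Your reduction hinges on applying Theorem~\ref{thm:stateless} to the run $E^{\Inst{}}$ on the \emph{loosely} $k$-bounded instantiation $\Inst{}$, and that is where the argument breaks. Your first step asserts that $\Inst{}$ is acyclic with only finitely many reachable states, but this is false under Definition~\ref{defn:loosek}: a loose instantiation imposes no per-event cap, only the termination condition that the execution stops once \emph{all} enabled events have fired at least $k$ times. An event may therefore fire arbitrarily many times (far more than $k$) while some other enabled event has not yet reached $k$ firings --- this is precisely the scenario the paper introduces the loose notion to accommodate, since the stateful algorithm can re-select an event past its bound. Consequently the counter-augmented state space of $\Inst{}$ is infinite and contains infinite paths (e.g.\ firing one always-enabled event forever), so the descending chain condition fails and the well-founded induction underlying Theorem~\ref{thm:stateless} does not go through for $\Inst{}$. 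Your second step (every $\Inst{0}$-trace is a trace of $\Inst{}$) is fine, but without the theorem applying to $\Inst{}$ the third step has nothing to stand on. Note also that if your reduction were valid it would prove the conclusion for \emph{every} run $E \in \text{Execs}$; the lemma's existential quantifier is a signal that the intended argument must tailor $E$'s nondeterministic choices to those of $E^{\Inst{}}$.

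The paper's proof avoids this trap entirely: it proceeds by induction on execution trees of loosely bounded instantiations, starting from $\Inst{0}$ (the simplest loose instantiation) and injecting one extra event occurrence at a time into a run $E_a^{\Inst{n}}$ that shares a prefix with $E^{\Inst{}}$. For each injected transition $\Trans{}$ it performs a case analysis --- either $\Trans{}$ is independent of everything explored after it, or it conflicts with some later transition --- and in each case constructs a matching strict-side run $E_b^{\Inst{n}}$ by forcing particular events into particular backtrack sets, then argues the loose run sets at least the same backtracking points. To repair your approach you would need either to reprove Theorem~\ref{thm:stateless} for infinite acyclic systems (which its current proof does not support), or to adopt an incremental construction along the lines of the paper's.
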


\begin{proof}
We will prove it by induction on the execution trees
of loosely $k$-bounded instantiations explored by the stateless algorithm.
The idea is that given the strictly $k$-bounded instantiation $\Inst{0}$, we can construct
the execution trees of an arbitrary loosely $k$-bounded instantiation by injecting events one
at a time into the execution trees of $\Inst{0}$.

Let $E^{\Inst{}}$ be a run on the instantiation $\Inst{}$,
we will use induction to construct the execution tree of $E^{\Inst{}}$. 
By definition, $\Inst{0}$ is a loosely $k$-bounded instantiation,
and it is the simplest loosely $k$-bounded instantiation.
Thus, the base case for induction is the strictly $k$-bounded instantiation $\Inst{0}$.
It is clear that for every run of the stateless algorithm on $\Inst{0}$,
there exists a run $E \in \textit{Execs}$ that satisfies this lemma.

For the inductive step, let $\Inst{n}$ be a loosely $k$-bounded instantiation.
We assume that for every run $E^{\Inst{n}}$
of the stateless algorithm on $\Inst{n}$, there exists a run $E \in \textit{Execs}$
such that for every transition sequence $\tau$ explored by $E$,
there is a transition sequence $\tau'$ explored by $E^{\Inst{n}}$
such that $\tau$ is a prefix of a linearization of $\rightarrow_{\tau'}$.
Since we are incrementally constructing the execution tree of $E^{\Inst{}}$,
we can in addition assume that there exists a run $E_{a}^{\Inst{n}}$ of the stateless 
algorithm on $\Inst{n}$ such that the execution tree of $E_{a}^{\Inst{n}}$
has a common prefix as the execution tree of $E^{\Inst{}}$.

We will construct a run of a new loosely $k$-bounded instantiation $\Inst{n+1}$ by
injecting into $E_{a}^{\Inst{n}}$
the first event $\Event{} \in \Events{}$ that is in the execution tree
of $E^{\Inst{}}$ but that is not in the execution tree of $E_{a}^{\Inst{n}}$.
We will label the newly constructed run as $E^{\Inst{n+1}}$.
Let $\Trans{}$ be the injected transition.
We will show that there exists a run $E_{b}^{\Inst{n}}$ of
the stateless algorithm on $\Inst{n}$ such that for every transition sequence
$\Trace{}$ explored by $E_{b}^{\Inst{n}}$, there exists a
transition sequence $\Trace{}'$ explored by $E^{\Inst{n+1}}$
where $\Trace{}$ is a prefix of a linearization of
$\rightarrow_{\Trace{}'}$.

Let $\Trace{}$ be a transition sequence explored by
$E_{a}^{\Inst{n}}$ where $\Trans{}$ is injected to obtain $E^{\Inst{n+1}}$.
Let $\tilde{\Trace{}}$ be the corresponding transition sequence
explored by $E^{\Inst{n+1}}$ where $\Trace{}$
and $\tilde{\Trace{}}$ have a common prefix $\tilde{\Trace{}}^{\textit{pre}}$ that
is the prefix of
$\tilde{\Trace{}}$ before the transition $\Trans{}$.
Let $\tilde{T}$ be the subtree of the execution tree of $E^{\Inst{n+1}}$ whose prefix is
$\tilde{\Trace{}}^{\textit{pre}}$.

Consider all executions in $\tilde{T}$ of the form
$\tilde{\Trace{}}^{\textit{pre}} \cdot \Trans{} \cdot \tilde{\Trace{}}^{\ddagger}$.
We have two cases to consider.

\textbf{Case 1:} All transitions in all
explored  $\tilde{\Trace{}}^{\ddagger}$ are independent with $\Trans{}$.
Consider the backtrack set
$\Backtrack(\Dest{}(\Trans{}))$ in $\tilde{T}$.
We can construct another run $E_{b}^{\Inst{n}}$ of $\Inst{n}$ by 
having the run $E_{a}^{\Inst{n}}$ select the event
$\Event{}'$ that was the first event added to $\Backtrack(\Dest{}(\Trans{}))$,
and add to the
backtrack set of $\GetLast{}(\tilde{\Trace{}}^{\textit{pre}})$ when it
first explores the transition sequence
$\tilde{\Trace{}}^{\textit{pre}}$.
Let $T_b$ be the subtree of the execution tree of $E_{b}^{\Inst{n}}$ whose
prefix is $\tilde{\Trace{}}^\textit{pre}$.  Then, for each transition sequence
$\tilde{\Trace{}}^{\textit{pre}} \cdot \Trace{}^{\#}$ explored
in $T_b$, there is a transition sequence
$\tilde{\Trace{}}^{\textit{pre}} \cdot \Trans{} \cdot \Trace{}^{\#}$
explored in $\tilde{T}$.  Since $\Trans{}$ commutes with $\Trace{}^{\#}$,
the transition sequence
$\tilde{\Trace{}}^{\textit{pre}} \cdot \Trans{} \cdot \Trace{}^{\#}$
is equivalent to
$\tilde{\Trace{}}^{\textit{pre}} \cdot \Trace{}^{\#} \cdot \Trans{}$.
Furthermore, since $\Trans{}$ commutes with all later transitions,
$\tilde{T}$ will set the same set of backtrack points in the transition
sequence $\tilde{\Trace{}}^{\textit{pre}}$ as $T_b$ does.

\textbf{Case 2:} There exists some transition $\Trans{}'$ in
some $\tilde{\Trace{}}^{\ddagger}$ that conflicts with $\Trans{}$.
Consider the first such transition sequence explored by the run
$E^{\Inst{n+1}}$.  This execution would add some event
$\Event{}'$ to the backtrack set
$\Backtrack(\Src{}(\Trans{}))$.  We can construct a new run $E_{b}^{\Inst{n}}$
of $\Inst{n}$ by having the run
$E_{a}^{\Inst{n}}$ select the same event $\Event{}'$ to add to the
backtrack set of $\GetLast{}(\tilde{\Trace{}}^{\textit{pre}})$ when
it first explores the transition sequence
$\tilde{\Trace{}}^{\textit{pre}}$.  Then, the backtrack set at the
state $\GetLast{}(\tilde{\Trace{}}^{\textit{pre}})$ for
$E_{b}^{\Inst{n}}$ would be a subset of the backtrack set at the
state $\GetLast{}(\tilde{\Trace{}}^{\textit{pre}})$ for
$E^{\Inst{n+1}}$. 
Let $T_b$ be the subtree of the execution tree of $E_{b}^{\Inst{n}}$ whose
prefix is $\tilde{\Trace{}}^\textit{pre}$.
Then, any execution explored in $T_b$ would also
be explored in $\tilde{T}$.  Furthermore, since $\tilde{T}$ explores all executions
that are explored in $T_b$, in the transition sequence
$\tilde{\Trace{}}^{\textit{pre}}$ explored by $E^{\Inst{n+1}}$,
the exploration of $\tilde{T}$ will set at least the backtrack points
as $T_b$ does for $\tilde{\Trace{}}^{\textit{pre}}$ in $E_b^{\Inst{n}}$.

We have now shown the existence of $E_b^{\Inst{n}}$ and that for
each transition sequence $\Trace{}$ in $T_b$,
there exists a transition sequence $\Trace{}'$ in $\tilde{T}$ such that
$\Trace{}$ is a prefix of a linearization of
$\rightarrow_{\Trace{}'}$.

Outside of $\tilde{T}$, the execution $E^{\Inst{n+1}}$ will explore at
least the set of executions as $E_b^{\Inst{n}}$ does outside of $T_b$, because
$E^{\Inst{n+1}}$ sets at least the backtracking points in
$\tilde{\Trace{}}^{\textit{pre}}$ as $E_b^{\Inst{n}}$ does
in $\tilde{\Trace{}}^{\textit{pre}}$.
\end{proof}

Since the stateless algorithm does not recognize visited states, we assume
each state encountered during the search of a run $E$ of the stateless algorithm
has a unique identifier.  In other words, there can be multiple nodes (states) in
the $\RGraph{}$ graph of $E$ that are equal but have different identifiers.
However, the stateful algorithm recognizes states that are equal but have different
identifiers.

Therefore, we define a map $\alpha$ that
maps states in $\RGraph{}$ that are equal but have different
identifiers to a single state.
If $t$ is a transition in $\RGraph{}$, then we define
$\alpha(\Trans{})$ to be the transition whose source and destination are both
mapped by $\alpha$. If $\Trace{}$ is a transition sequence
in $\RGraph{}$, then $\alpha(\Trace{})$ is defined similarly.
The state transition graph $\alpha(\RGraph{})$ is the
transformed graph, where nodes that are equal but have different identifiers in
$\RGraph{}$ are collapsed to a single node,
and the edges (transitions) are also collapsed such that
if $\State{1} \rightarrow \State{2}$ is an edge in $\RGraph{}$,
then $\alpha(\State{1}) \rightarrow \alpha(\State{2})$ is an edge in
$\alpha(\RGraph{})$.

In Theorem~\ref{thm:stateful}, although $\alpha(\RGraph{})$ is technically not a
subgraph of $\oRGraph{}$, we can think of $\alpha(\RGraph{})$ as a graph being embedded
in $\oRGraph{}$.  Similarly, if $\State{}$ is a state in $\RGraph{}$, we will think of
$\alpha(\State{})$ as a state in $\oRGraph{}$.  If $\Trans{}$ is a transition 
in $\RGraph{}$, we will think of $\alpha(\Trans{})$ as a transition in $\oRGraph{}$.

\begin{theorem}\label{thm:stateful}
Let $\overline{E}$ be a terminating run of the stateful algorithm
on the unbounded instantiation of a transition system $A_G$.
Let $\oRGraph{}$ be the state transition graph when $\overline{E}$
terminates.
Then, for any positive integer $k$,
there is a run $E$ of the stateless algorithm on a loosely $k$-bounded instantiation
of $A_G$ such that if $E$ explores a transition sequence
$\State{0} \rightarrow \State{}'$, then
\begin{enumerate}
  \item $\alpha(\RGraph{}) \subseteq \oRGraph{}$, and
  \item $\forall \State{} \in \States{}, \Backtrack(\State{})_{\textit{stateless}} \subseteq \Backtrack(\alpha(\State{}))_{\textit{stateful}}$,
\end{enumerate}
where $\RGraph{}$ is the state transition graph when
$E$ reaches $\State{}'$ and $\States{}$ is the set of states
that $E$ have encountered when reaching $\State{}'$.
\end{theorem}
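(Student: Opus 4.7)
The plan is to construct the witness run $E$ of the stateless algorithm inductively, in lockstep with the given terminating run $\overline{E}$ of the stateful algorithm, maintaining both invariants (1) and (2) as loop invariants over the number of transitions that $E$ has explored. At each step where the stateless algorithm must select an event from $\Backtrack(\State{})$, we choose one that (i) corresponds under $\alpha$ to an event that $\overline{E}$ has already executed from $\alpha(\State{})$, and (ii) respects the loose $k$-bound, so that Lemma~\ref{lemma:kbound} applies. Such a choice exists by invariant~(2), since $\Backtrack(\State{})_{\textit{stateless}} \subseteq \Backtrack(\alpha(\State{}))_{\textit{stateful}}$ and every event in the stateful backtrack set has been explored by $\overline{E}$ (because $\overline{E}$ terminates only when, for all $\State{}$, $\Backtrack(\State{}) = \Done(\State{})$).

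For the base case, both runs start with empty state transition graphs and empty backtrack sets, so both invariants hold trivially. For the inductive step, suppose the invariants hold before the stateless algorithm fires a new transition $\Trans{} = \Next(\State{},\BacktrackEvent)$. Determinism of the transition system gives $\alpha(\Dest(\Trans{})) = \Dest(\Next(\alpha(\State{}),\BacktrackEvent))$, and since $\overline{E}$ has executed $\BacktrackEvent$ from $\alpha(\State{})$, the corresponding edge is already in $\oRGraph{}$; adding $\Trans{}$ to $\RGraph{}$ therefore preserves $\alpha(\RGraph{}) \subseteq \oRGraph{}$. For invariant~(2), I would trace each way new events are added to backtrack sets in Algorithm~\ref{alg:sdpor}: the disabled-event loop at line~\ref{line:add-disabled}, and the call to $\MethodUpdateBacktrackSet{}$ at line~\ref{line:call-update-backtrack-set}. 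The disabled-event addition depends only on $\Enabled(\State{}) \setminus \Enabled(\State{}')$, which agrees under $\alpha$ because both runs see the same transition. The backward DFS in $\MethodUpdateBacktrackSet{}$ walks $\textit{pred}_{\RGraph{}}$, and by the graph-containment invariant every predecessor visited in $\RGraph{}$ has a corresponding predecessor in $\oRGraph{}$, so each event the stateless DFS adds to $\Backtrack(\Src(\Trans{b}))$ is also added by the stateful DFS to $\Backtrack(\alpha(\Src(\Trans{b})))$.

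The delicate case is when $\overline{E}$ terminated one of its executions at a state match or full cycle where the stateless $E$ continues further. At that point $\overline{E}$ called $\MethodUpdateBacktrackSetsFromGraph{}$, which runs $\MethodUpdateBacktrackSet{}$ on every transition reachable from the matched state in $\oRGraph{}$. I would argue that this covers exactly the family of transitions that $E$ would physically re-execute after the match: any transition $\Trans{}^\star$ that $E$ eventually traverses downstream of the match point has, by invariant~(1) applied to earlier steps, an $\alpha$-image already present in $\oRGraph{}$ and reachable from the matched state, hence $\overline{E}$'s graph traversal has already called $\MethodUpdateBacktrackSet{}$ on it. Combined with the $\MethodIsFullCycle{}$ criterion — which guarantees that each enabled event in a cycle appears at least once before termination — this prevents the stateful run from missing any backtrack point that the stateless run would install. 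To handle the case where $E$'s downstream exploration visits transitions not yet in $\oRGraph{}$, I would invoke Lemma~\ref{lemma:kbound} to rearrange $E$'s choices so that it only takes transitions $\overline{E}$ has already recorded, which is consistent with the freedom in lines~\ref{line:evselect}, \ref{line:select-enabled-event}, and \ref{line:while}.

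The main obstacle will be the third paragraph: reconciling the stateful early-termination at cycles with the stateless algorithm's continuing execution. The argument must show that the graph-summarized backtracking done by $\MethodUpdateBacktrackSetsFromGraph{}$ is at least as informative as actually re-running those transitions, and this in turn relies on the correctness of the $\MethodIsFullCycle{}$ proviso in ensuring that every enabled event in the cycle has been exercised somewhere in $\oRGraph{}$ so that conflicts can be detected by the backward DFS. I expect to need a small auxiliary lemma of the form: if $\Trans{}^\star$ is $\alpha$-reachable from $\Dest(\Trans{})$ in $\oRGraph{}$, then $\MethodUpdateBacktrackSet(\Trans{}^\star)$ from within $\MethodUpdateBacktrackSetsFromGraph(\Trans{})$ sets a superset of the backtrack points that would be set if $E$ physically traversed $\Trans{}^\star$ after $\Trans{}$. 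With that lemma in hand, the two theorem invariants fall out of the induction, and the conclusion follows by composing with Theorem~\ref{thm:stateless} and Lemma~\ref{lemma:kbound} to cover all Mazurkiewicz equivalence classes in the bounded instantiation.
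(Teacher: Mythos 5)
Your overall strategy matches the paper's: a lockstep induction that constructs $E$ so it always selects events $\overline{E}$ has already explored from the $\alpha$-image of the current state, with invariants (1) and (2) maintained through the initial backtrack-set setup and each while-loop iteration, determinism of $\Next$ giving graph containment, and the bounding mechanism handling the disabled-event additions. The gap is in your second paragraph, where you claim that because every predecessor in $\RGraph{}$ has an $\alpha$-image in $\oRGraph{}$, each event the stateless DFS adds is also added by the stateful DFS. Graph containment alone does not give this: the stateful run calls $\MethodUpdateBacktrackSet{}(\alpha(\Trans{c}))$ only once, at the moment $\alpha(\Trans{c})$ is first executed, and at that moment the backward path $\alpha(\Trace{}^{b})$ from $\alpha(\State{\textit{pre}})$ to $\alpha(\State{c})$ that the stateless DFS later walks may not yet exist in $\oRGraph{}$ --- its edges can be contributed by several distinct stateful executions, some of them occurring after $\alpha(\Trans{c})$ was executed. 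Since the stateful algorithm never re-executes $\alpha(\Trans{c})$, something must re-trigger the DFS once the path is complete, and your invariant does not by itself supply it.

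The paper closes this hole with a case analysis on the \emph{last} transition $\Trans{\textit{last}}$ of the path $\alpha(\Trace{}^{b})$ to be added to $\oRGraph{}$: either $\Trans{\textit{last}} = \alpha(\Trans{c})$, in which case the original DFS already sees the whole path; or $\Trans{\textit{last}}$ lies in the middle of the path, in which case its destination is necessarily an already-visited state (the next edge of the path was added earlier), so one of the two invocations of $\MethodUpdateBacktrackSetsFromGraph{}$ --- at a state match or full cycle, or at a revisit within the current execution --- fires on $\Trans{\textit{last}}$, finds $\alpha(\Trans{c})$ forward-reachable, and re-invokes $\MethodUpdateBacktrackSet{}(\alpha(\Trans{c}))$ on the now-complete graph. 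Your third paragraph and proposed auxiliary lemma gesture in this direction, but you frame the difficulty as ``the stateful run stops early while the stateless run continues downstream,'' which is only one manifestation; the essential point you are missing is that the problem is the late completion of \emph{backward} paths to the conflicting transition, and that exhaustiveness of the fix comes from classifying where the last-added edge of such a path sits. Without that step, your induction does not establish invariant (2) across the $\MethodUpdateBacktrackSet{}$ call, which is the crux of the theorem.
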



\begin{proof}
We will apply the principle of structural induction on the execution tree of
$E$. For the base case, we are at $\State{0}$, and no transitions
have been explored. Thus, $\RGraph{}$ is empty and all backtrack sets are
empty, and we trivially establish
$\alpha(\RGraph{}) \subseteq \oRGraph{}$ and
$\forall \State{} \in \States{}, \Backtrack(\State{})_{\textit{stateless}} \subseteq \Backtrack(\alpha(\State{}))_{\textit{stateful}}$.
 
Let $\State{0} \rightarrow \State{c}$ be a transition sequence
explored by $E$.
For the inductive step, we assume that 
$\alpha(\RGraph{}) \subseteq \oRGraph{}$ and
$\forall \State{} \in \States{}, \Backtrack(\State{})_{\textit{stateless}} \subseteq \Backtrack(\alpha(\State{}))_{\textit{stateful}}$
hold the moment before $\MethodExplore{(\State{c})}$ is called.
We will prove that the invariants hold during the call of $\MethodExplore{(\State{c})}$.
 
The \MethodExplore procedure can be decomposed
into two segments: the initial setup of backtrack
sets in lines~\ref{line:enable-event-start} to~\ref{line:enable-event-end}
and iterations of the while loop in line~\ref{line:while}.

We first consider executions of the code for the initial setup of
backtrack sets in $\MethodExplore(\State{c})$.
Suppose $\Trans{p}$ is the transition that leads to $\State{c}$.
Then, before $\MethodExplore(\State{c})$ is called, $\Trans{p}$ and
$\State{c}$ have already been added to $\RGraph{}$.
Since we have $\alpha(\RGraph{}) \subseteq \oRGraph{}$
by inductive assumption, $\alpha(\State{c})$ is in $\oRGraph{}$
and the stateful algorithm has explored the state $\alpha(\State{c})$.
If the enabled set for $\alpha(\State{c})$ is not empty, then
the stateful algorithm has at least one event in
$\Backtrack(\alpha(\State{c}))_{\textit{stateful}}$.
Construct $E$ to select that same initial event as the one that $\overline{E}$
first added to $\Backtrack(\alpha(\State{c}))_{\textit{stateful}}$
in line~\ref{line:evselect}.

Otherwise, if the enabled set for $\alpha(\State{c})$ was empty, then the
stateful and stateless algorithms would have empty sets for
$\Backtrack(\alpha(\State{c}))_{\textit{stateful}}$
and $\Backtrack(\State{c})_\textit{stateless}$, respectively.
Thus, in either case we preserve the second invariant.
Since the graph $\RGraph{}$ is not changed during the initial
setup of the backtrack sets, we also preserve the first invariant.

We next consider an execution of an iteration of the while loop.
Since in the initial setup of the backtrack set of $\State{c}$, we have
$\Backtrack(\State{c})_\textit{stateless} \subseteq \Backtrack(\alpha(\State{c}))_{\textit{stateful}}$.
Then, the while loop will select a
$b \in \Backtrack(\alpha(\State{}))_\textit{stateful}$.

At line~\ref{line:algnext}, the stateless algorithm will
compute a transition $\Trans{c}$ such that $\alpha(\Trans{c})$
is in $\oRGraph{}$ because $\Next$ is deterministic and $E$
executes the same event $b$ at $\State{c}$ as
$\overline{E}$ does at $\alpha(\State{c})$.

After executing line~\ref{line:addtrans}, we have
$\RGraph{}' = \RGraph{} \cup \{ \Trans{c} \}$.
Since $\alpha(\Trans{c}) \in \oRGraph{}$ and
$\alpha(\RGraph{}) \subseteq \oRGraph{}$, we
have $\alpha(\RGraph{}') \subseteq \oRGraph{}$.  The first invariant holds.

Since \textit{next} and \textit{dst} in lines~\ref{line:algnext} and~\ref{line:algdst}
are deterministic, the stateless algorithm computes $\State{d} = \Dest(\Trans{c})$
in line~\ref{line:algdst} such that $\alpha(\State{d}) = \textit{dst}(\alpha(\Trans{c}))$.
In line~\ref{line:add-disabled}, the enabled set at $\State{c}$
is a subset of that at $\alpha(\State{c})$ because some events in the
stateless algorithm may become disabled due to exceeding their bounds.
Recall that the special mechanism that we use to implement bounded instantiations
ensures that events that are enabled at $\State{c}$ but disabled at $\State{d}$
due to exceeding their bounds are not added to the backtrack set of $\State{c}$.
Therefore, after finishing the for loop in line~\ref{line:add-disabled}, we still have
$\Backtrack(\State{c})_\textit{stateless} \subseteq \Backtrack(\alpha(\State{c}))_{\textit{stateful}}$,
and the second invariant is maintained at this step.

In line~\ref{line:call-update-backtrack-set}, $\MethodExplore(\State{c})$
calls $\MethodUpdateBacktrackSet{}(\Trans{c})$.  Consider an event
$\Event{}$ that the stateless algorithm adds to the backtrack set
of $\State{\textit{pre}}$, where $\State{\textit{pre}}$
is some explored state in $\RGraph{}$.
Then, there must exist a path $\Trace{}^{b}$
in $\RGraph{}$ from $\State{\textit{pre}}$ to $\State{c}$.
Since $\RGraph{} \subseteq \oRGraph{}$,
$\alpha(\Trace{}^{b})$ is also a path from $\alpha(\State{\textit{pre}})$
to $\alpha(\State{c})$ in $\oRGraph{}$.
Consider the last transition $\Trans{\textit{last}}$ added to the path
$\alpha(\Trace{}^{b})$ during the execution of $\overline{E}$.
We have three cases to consider.

\textbf{Case 1:} $\Trans{\textit{last}} = \alpha(\Trans{c})$.
Then, when calling $\MethodUpdateBacktrackSet{}(\alpha(\Trans{c}))$,
$\overline{E}$ would traverse back the path $\alpha(\Trace{}^{b})$
from $\alpha(\State{c})$ to $\alpha(\State{\textit{pre}})$ via a
depth first search, and in the worst case, traverse a path as long as
the length of $\Trace{}^{b}$.  Since $e$ is added to the backtrack set of
$\State{\textit{pre}}$ in the call of $\MethodUpdateBacktrackSet{}(\Trans{c})$,
then $e$ will also be added to the backtrack set of $\alpha(\State{\textit{pre}})$
in the call of $\MethodUpdateBacktrackSet{}(\alpha(\Trans{c}))$.

\textbf{Case 2:} $\Trans{\textit{last}}$ is in the middle of
the path $\alpha(\Trace{}^{b})$, and its destination state
$\textit{dst}(\Trans{\textit{last}})$ was either in $\PrevStateTable$
or satisfied the full cycle predicate.  In this case, $\overline{E}$
calls $\MethodUpdateBacktrackSetsFromGraph{}(\Trans{\textit{last}})$
in line~\ref{line:endbacktrack} some time during its execution,
which will find that $\alpha(\Trans{c})$ is reachable from
$\Trans{\textit{last}}$ and call $\MethodUpdateBacktrackSet{}(\alpha(\Trans{c}))$.
Therefore, $e$ will also be added to the backtrack set of $\alpha(\State{\textit{pre}})$.

\textbf{Case 3:} $\Trans{\textit{last}}$ is in the middle of
the path $\alpha(\Trace{}^{b})$, and its destination
state $\textit{dst}(\Trans{\textit{last}})$ was discovered in the current
execution and does not satisfy the full cycle predicate.
In that case, $\overline{E}$ calls $\MethodUpdateBacktrackSetsFromGraph$
on $\Trans{\textit{last}}$ in line~\ref{line:cycle-current-exec-2}, which will find
that $\Trans{c}$ is reachable from $\Trans{\textit{last}}$
and call $\MethodUpdateBacktrackSet{}(\alpha(\Trans{c}))$.
Therefore, $e$ will also be added to the backtrack set of $\alpha(\State{\textit{pre}})$.

Thus, we have shown that at the end of this loop iteration, both invariants
$\alpha(\RGraph{}) \subseteq \oRGraph{}$ and
$\forall \State{} \in \States{}, \Backtrack(\State{})_{\textit{stateless}} \subseteq \Backtrack(\alpha(\State{}))_{\textit{stateful}}$ hold.
\end{proof}


Finally, we tie all of the theorems together to obtain Theorem~\ref{thm:final},
which relates finite transition sequences of arbitrary length coming from
the initial state $\State{0}$ in the state space of $A_G$ to the
results of the stateful DPOR algorithm. 

\begin{theorem}\label{thm:final}
Let $A_G$ be a transition system with finite reachable states.  Then,
\begin{enumerate}
\item the stateful algorithm on the unbounded instantiation of $A_G$ terminates.
\item For any finite transition sequence $\Trace{}$ from $\State{0}$ in the state space of $A_G$, the state transition graph $\oRGraph{}$ computed by the stateful algorithm run contains a path $\Trace{}'$ such that $\alpha(\Trace{})$ is a prefix of some linearization of $\rightarrow_{\Trace{}'}$.
\end{enumerate}
\end{theorem}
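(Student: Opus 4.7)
\medskip
\noindent\textbf{Proof Plan for Theorem~\ref{thm:final}.}
The plan is to establish part~(1) by a direct finiteness argument and then derive part~(2) by chaining Theorem~\ref{thm:stateless}, Lemma~\ref{lemma:kbound}, and Theorem~\ref{thm:stateful} through a suitably chosen bound $k$.

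For part~(1), I would argue that the stateful algorithm's state transition graph $\oRGraph{}$ is a subgraph of the finite graph induced by the reachable states of $A_G$ and the finite event set $\Events{}$. Each state $\State{}$ has $\Backtrack(\State{}) \subseteq \Events{}$, so each backtrack set is finite, and once an event is placed in $\Done(\State{})$ it is never re-executed from $\State{}$. Thus each call to \MethodExplore at a state can trigger at most $|\Events{}|$ transitions out of that state, and the state transition graph can grow only finitely often. The outer loop in \MethodExploreAll reinvokes \MethodExplore only when some backtrack set has grown beyond its done set; since the total number of $(\State{},\Event{})$ pairs is finite and each such pair is ``consumed'' at most once, the outer loop terminates.

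For part~(2), fix a finite transition sequence $\Trace{} = \Trans{1} \cdots \Trans{n}$ from $\State{0}$ in $A_G$. I would choose $k$ large enough that every event appears in $\Trace{}$ at most $k$ times (e.g., $k = n$ suffices), so $\Trace{}$ lies within the state space of the strictly $k$-bounded instantiation $\Inst{0}$. Applying Theorem~\ref{thm:stateless} to $\Inst{0}$ gives a run of the stateless algorithm that explores a transition sequence $\Trace{}_0$ such that $\Trace{}$ is a prefix of some linearization of $\rightarrow_{\Trace{}_0}$. Lemma~\ref{lemma:kbound} then yields, for any loosely $k$-bounded instantiation $\Inst{}$ and any run $E^{\Inst{}}$ of the stateless algorithm on $\Inst{}$, an explored transition sequence $\Trace{}_1$ of $E^{\Inst{}}$ into which $\Trace{}_0$ embeds as a prefix of a linearization; transitivity of the ``prefix of a linearization'' relation (via swaps of adjacent independent transitions) gives that $\Trace{}$ itself is a prefix of some linearization of $\rightarrow_{\Trace{}_1}$.

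Finally, I would apply Theorem~\ref{thm:stateful} in the direction it provides: given the terminating stateful run $\overline{E}$ on the unbounded instantiation (termination is part~(1)), there exists a stateless run $E$ on a loosely $k$-bounded instantiation whose explored graph $\RGraph{}$ satisfies $\alpha(\RGraph{}) \subseteq \oRGraph{}$. Instantiating the previous step with this particular $E$ produces an explored transition sequence $\Trace{}_1$ of $E$ such that $\Trace{}$ is a prefix of a linearization of $\rightarrow_{\Trace{}_1}$; pushing through $\alpha$ yields that $\alpha(\Trace{}_1)$ is a path in $\oRGraph{}$ into which $\alpha(\Trace{})$ embeds as the desired prefix, giving the required $\Trace{}'$.

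The main obstacle will be the bookkeeping needed to show that the ``prefix of a linearization'' property composes cleanly along the chain $\Inst{0} \to \Inst{} \to A_G$: specifically, verifying that $\alpha$ preserves the transitive dependence relation on explored transition sequences, and that the linearizations produced by swapping adjacent independent transitions in the stateless setting survive the collapsing of equal states performed by $\alpha$. A secondary subtlety is choosing $k$ not just large enough for $\Trace{}$ itself but large enough that the witness $\Trace{}_0$ furnished by Theorem~\ref{thm:stateless} still lives inside the bounded instantiation; taking $k$ to be the maximum event multiplicity across all linearizations of $\rightarrow_{\Trace{}_0}$ (which is the same finite number since linearizations only permute the same multiset of transitions) resolves this.
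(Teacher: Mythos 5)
Your part~(2) is essentially the paper's argument: pick $k$ exceeding the multiplicity of every event in \Trace{}, apply Theorem~\ref{thm:stateless} to the strictly $k$-bounded instantiation, lift through Lemma~\ref{lemma:kbound} to the loosely $k$-bounded run supplied by Theorem~\ref{thm:stateful}, and push the witness through $\alpha$ into $\oRGraph{}$. The paper performs the same chaining (merely quantifying the runs in the opposite order, which is harmless because Theorem~\ref{thm:stateless} holds for every run on $\Inst{0}$). Your ``secondary subtlety'' about enlarging $k$ to cover the witness $\Trace{}_0$ is a non-issue: Theorem~\ref{thm:stateless} is applied to the strictly $k$-bounded instantiation, so every sequence it explores already respects the bound by construction.

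Part~(1), however, has a genuine gap. Your argument rests on the claim that once an event is placed in $\Done(\State{})$ it is never re-executed from \State{}, so that each pair $(\State{},\Event{})$ is ``consumed'' at most once. That claim is false for this algorithm: when $\Done(\State{}) = \Enabled(\State{})$ and the enabled set is nonempty, \MethodExplore explicitly \emph{removes} an event from $\Done(\State{})$ and re-executes it (lines~\ref{line:checkenabled}--\ref{line:evselect} of Algorithm~\ref{alg:sdpor}). This re-execution is the whole point of the looping proviso --- the execution must continue past a state match until \MethodIsFullCycle{} succeeds --- and it is exactly what breaks the finite-counting argument. Your proposal never mentions the full-cycle condition, yet without it there is nothing to stop a single recursive chain of \MethodExplore calls from cycling forever through a loop in the state space. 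The paper's proof spends most of its effort here: it first shows that every transition sequence explored by \MethodExplore has finite length, because a sequence that never hits \PrevStateTable must keep revisiting states, each revisit strictly shrinks $\Events{\textit{enabled}} \setminus \Events{\textit{fc}}$, and so the full-cycle predicate eventually fires; it then bounds the total number of \MethodExplore calls with a measure $C$ (unexplored states plus unexplored transitions), arguing separately for the re-execution case that only finitely many calls can leave $C$ unchanged because each such call's prefix must contain a fresh transition. You would need to supply both of these steps to make part~(1) go through.
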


\begin{proof}
(1) We will first show that running the stateful algorithm on the unbounded instantiation
of $A_G$ terminates.  Recall that the unbounded instantiation of $A_G$ means that
each event type is allowed to execute infinite number of times.
Since $A_G$ has finite reachable states, the set of events $\Events{}$ is a finite set.
We claim that each transition sequence explored by $\MethodExplore$ has a finite length,
because the transition sequence will either hit a state stored in $\PrevStateTable$
or eventually satisfy the full cycle condition. 

Let $\Trace{a}$ be a transition sequence explored by $\MethodExplore$.
If $\Trace{a}$ hits a state stored in $\PrevStateTable$, then the claim is proven.
Thus, we will consider the other case and suppose that $\Trace{a}$ never hits a state
stored in $\PrevStateTable$.
Since $A_G$ has finite reachable states, $\Trace{a}$ will eventually revisit some state
$\State{}$ after executing some execution $\Trans{a}$.
At the point, $\MethodIsFullCycle(\Trans{})$ is called.
If the set $\Events{\textit{fc}}$ in line~\ref{line:fullcycle-fc} of
Algorithm~\ref{alg:complete-full-cycle} is not equal to the set $\Events{\textit{enabled}}$
computed in line~\ref{line:fullcycle-enabled}, pick
$\Event{} \in \Events{\textit{enabled}} \setminus \Events{\textit{fc}}$.
Then, $\Event{} \in \Enabled(\Dest(\Trans{e}))$ for some $\Trans{e} \in \Trace{}^{\textit{fc}}_{a}$.
Let $\State{e} = \Dest(\Trans{e})$.

Since $A_G$ has finite reachable states, $\Trace{a}$ will eventually revisit
the state $\State{e}$.  If $\Trace{a}$ does not revisit $\State{e}$, there must
be a state $\State{x}$ that $\Trace{a}$ revisits an infinite amount of times.
However, the first $\lvert\Enabled(\State{x})\rvert$ visits will explore
all events in $\Enabled(\State{x})$ from $\State{x}$, and all later visits to
$\State{x}$ will revisit some other visited states.
If all such later visits to $\State{x}$ do not revisit $\State{e}$, then
there must be a cycle containing $\State{x}$ where all states in the
cycle have had their enabled sets explored, and this cycle will satisfy
the full cycle condition.  Thus, $\Trace{a}$ will either revisit $\State{e}$
or satisfy the full cycle condition somewhere else. 
After revisiting the state $\State{e}$ for $\lvert\Enabled(\State{e})\rvert$
times, $\Trace{a}$ will explore $\Event{}$ from $\State{e}$.

Similarly, $\Trace{a}$ will revisit $\State{}$ or terminate
due to satisfying the full cycle condition. 
If it revisits $\State{}$ the next time, then there is at least
one less element in $\Events{\textit{enabled}} \setminus \Events{\textit{fc}}$. 
Therefore, we can apply the same argument for a finite number of times and show that
the transition sequence $\Trace{a}$ will eventually satisfy the full cycle condition
and terminate.

We have shown that each transition sequence explored by an $\MethodExplore$ call
has finite length.  We will next show that each $\MethodExplore$ call terminates.
Consider a call $\MethodExplore(\State{})$.
We define a number $C$ as the sum of the number of unexplored reachable states
in $A_G$, the number of unexplored transitions in $A_G$.

Since $A_G$ has finite reachable states, for simplicity,
we will ignore unreachable states and transitions
that are potentially infinite and assume all states and transitions in $A_G$
are reachable from $\State{0}$. 
The number of unexplored reachable states and transitions are finite.  So, $C$ is finite.
Consider a subcall $\MethodExplore(\State{\textit{sub}})$ issued by $\MethodExplore(\State{})$.
There are two cases.

\textbf{Case 1:} $\Done(\State{\textit{sub}})$ is not equal
to $\Enabled(\State{\textit{sub}})$.
Then, in one iteration of the while loop in line~\ref{line:while}, the number of 
unexplored transitions is reduced by one.
The number of unexplored states either 
decreases by one or not, depending on whether the most recently explored transition 
leads to an unexplored state or not.

\textbf{Case 2:} $\Done(\State{\textit{sub}})$ is equal to $\Enabled(\State{\textit{sub}})$.
Then, an already explored event is removed from $\Done(\State{\textit{sub}})$,
and the while loop in line~\ref{line:while} reexplores the same event.
In this case, the number of unexplored states and transitions remains the same.
This transition sequence will eventually terminate, and in the worst case,
it terminates without reducing the number of unexplored states and transitions
since visiting the state $\State{\textit{sub}}$.
However, the prefix of such transition sequence before visiting
the state $\State{\textit{sub}}$ must contain a transition that is not explored
in other transition sequences.
Otherwise, the subcall $\MethodExplore(\State{\textit{sub}})$ will not be invoked. 
Therefore, there cannot be an infinite number of $\MethodExplore$ calls where
$C$ is not reduced throughout the call.

Considering the above two cases, $C$ is either reduced or remains the same
in the subcalls of $\MethodExplore$ invoked by $\MethodExplore(\State{})$,
and there is only a finite number of $\MethodExplore$ subcalls
where $C$ stays the same.  Since $C$ is finite, $\MethodExplore(\State{})$
eventually terminates.

(2) Now, we will prove the second statement.
Since $\Trace{}$ is finite, there is a $k$ such that $\Trace{}$ contains fewer than $k$ instances of any event type.
We will assume that all the same states visited by $\Trace{}$ has a unique identifier.
Thus, $\Trace{}$ is a prefix of an execution of a strictly $k$-bounded instantiation of $A_G$.

By Theorem~\ref{thm:stateful}, $\alpha(\RGraph{}) \subseteq \oRGraph{}$
and therefore there exists a run $\tilde{E}$ of the stateless algorithm
on a loosely $k$-bounded instantiation of $A_G$ such that
every transition sequence $\tilde{\Trace{}}$ explored by $\tilde{E}$
is mapped to a path in $\oRGraph{}$ by $\alpha$.
By Lemma~\ref{lemma:kbound}, there exists a run $E$ of the stateless algorithm on
the strictly $k$-bounded instantiation of $A_G$ such that for every
transition sequence $\Trace{a}$ explored by $E$ there is
a transition sequence $\tilde{\Trace{}}$ explored by $\tilde{E}$
where $\Trace{a}$ is a prefix of some linearization of $\rightarrow_{\tilde{\Trace{}}}$.
By Theorem~\ref{thm:stateless} the run $E$
explores some transition sequence $\Trace{a}'$ for which $\Trace{}$ is a prefix of
a linearization of $\rightarrow_{\Trace{a}'}$.

Therefore, the run $\tilde{E}$ explores a transition sequence $\tilde{\Trace{}}_{a}$ such that
$\Trace{}$ is a prefix of a linearization of $\rightarrow_{\tilde{\Trace{}}_{a}}$.
Define $\Trace{}' = \alpha(\tilde{\Trace{}}_{a})$.  Then, $\alpha(\Trace{})$ is a prefix
of a linearization of $\rightarrow_{\Trace{}'}$
\end{proof}

\mycomment{
\mysubsection{Some ideas}

\begin{proof}
\begin{enumerate}
\item Lemma 6.4 shows that $SPACE(\textit{stateless}_k) \subseteq SPACE(\textit{stateless}_k^{\textit{loose}})$ for all $k > 0$. 

\item Theorem 6.5 shows that $SPACE(\textit{stateless}_k^{\textit{loose}}) \subseteq SPACE(\textit{stateful})$ for all $k>0$. So we have
$SPACE(\textit{stateless}_1) \subset SPACE(\textit{stateless}_2) \subset ... 
\subset SPACE(\textit{stateless}_n) \subseteq ... \subseteq SPACE(\textit{stateful})$.

\item If the stateful algorithm terminates, then $SPACE(\textit{stateful})$ is finite.
Thus there is some $k_0 > 0$ such that 
$SPACE(\textit{stateless}_n) = SPACE(\textit{stateless}_{k_0})$ for all $n \geq k_0$.
In other words, running the stateless algorithm on strictly $k_0$-bounded instantiation of
$A_G$ explores all reachable states. 

\item Then running the stateful algorithm on $A_G$ also explores all reachable states. 
\end{enumerate}
\end{proof}

\begin{theorem}
Given a set of events $\Events{}$, if each event $e \in \Events{}$ has
a finite set of reachable states, then SDPOR on unbounded instantiations
of $\Events{}$ terminates and explores all reachable local states for each event.
\end{theorem}

Denote the state space explored by running SDPOR on a $k$-bounded instantiation
on $\Events{}$ by $SPACE(SDPOR_k)$.

\begin{proof}
Let $e \in \Events{}$ be an event.  Since $e$ has a finite set of reachable states,
every such state can be reached by some finite transition sequences. 
Therefore, there exists some $k$ such that all reachable states of $e$
are explored by running SDPOR on the $k$-bounded instantiation of $\Events{}$.

The number of events in $\Events{}$ is finite, so we can find the largest
(denoted by $\hat{k}$) among all such $k$'s such that running SDPOR
on the $\hat{k}$-bounded instantiation explores all reachable local states
for all events. 

\textbf{Claim 1}: SDPOR on the unbounded instantiation terminates, and when it
terminates, the state space explored equals $SPACE(SDPOR_{\hat{k}})$.

\textbf{Claim 2}: $SPACE(SDPOR_1) \subset SPACE(SDPOR_2) \subset ... \subset SPACE(SDPOR_n) \subset ... \subset SPACE(SDPOR_{\hat{k}})$
\end{proof}
}

\mysection{Example for Stateful DPOR Algorithm\label{sec:example}}
Here, we illustrate how our new stateful DPOR algorithm works with the example from
Figure~\ref{fig:example-missing-execution}.

First, the algorithm starts by calling the \MethodExploreAll procedure in 
Algorithm~\ref{alg:sdpormain}. After the state is initialized, it calls the \MethodExplore procedure in line~\ref{line:firstexplore}.
Let us suppose that the algorithm first explores the execution shown in 
Figure~\ref{fig:example-missing-execution}-a.
Line~\ref{line:evselect} in Algorithm~\ref{alg:sdpor} chooses
event \code{e$_1$} among the enabled events.
Next, event \code{e$_1$} is executed: it produces a new state $\State{1}$ and adds
the corresponding transition to \RGraph{}.
Line~\ref{line:call-update-backtrack-set} calls the \MethodUpdateBacktrackSet{} procedure,
but it does not update the backtrack set of any state since the algorithm has only
explored one event. 
Line~\ref{line:termination-condition} evaluates to \emph{false} since the termination
condition is not met yet: $\State{1}$ is not found in \PrevStateTable and
it is not a full cycle.
Line~\ref{line:cycle-current-exec-1} also evaluates to \emph{false} since $\State{1}$
is not found in \Trace{}.
Finally, Algorithm~\ref{alg:sdpormain} calls the \MethodExplore procedure recursively to
explore the next transition.

In the subsequent transitions, line~\ref{line:evselect} chooses event \code{e$_2$} and then event \code{e$_3$}.
For each of these events, the \MethodUpdateBacktrackSet{} procedure invokes
\MethodUpdateBacktrackSetDFS{} in Algorithm~\ref{alg:backtrack-updatev2} to 
perform the DFS, find conflicts, and update the corresponding backtrack 
sets---it puts event \code{e$_2$} in the
backtrack set of $\State{0}$ and event \code{e$_3$} in the backtrack set of
$\State{1}$ as it finds conflicts between events \code{e$_1$}
and \code{e$_2$} that access the shared variable \code{z}, and events \code{e$_2$} and \code{e$_3$} that access the shared variable \code{y}.
Event \code{e$_2$} produces a new state $\State{2}$, while \code{e$_3$} 
causes the execution to revisit state $\State{1}$.
At this point, line~\ref{line:cycle-current-exec-1} evaluates
to \emph{true} and the algorithm calls the \MethodUpdateBacktrackSetsFromGraph{} 
procedure.
This procedure (shown in Algorithm~\ref{alg:backtrack-update-graph}) finds that 
the transition that corresponds to event \code{e$_2$} is reachable from 
the transition that corresponds to event \code{e$_3$} in \RGraph{} and calls
the \MethodUpdateBacktrackSet{} procedure. 
\MethodUpdateBacktrackSet{} puts event \code{e$_2$} in the backtrack
set of $\State{2}$ as it finds a conflict between the events \code{e$_2$} and \code{e$_3$}
that access the shared variable \code{y}.

In the final transition of the first execution, line~\ref{line:evselect} chooses event \code{e$_4$}.
The \MethodUpdateBacktrackSet{} procedure finds a conflict between \code{e$_4$} and \code{e$_1$} that both access the shared variable \code{x}:
it puts event \code{e$_4$} in the backtrack set of state $\State{0}$.
Event \code{e$_4$} directs the algorithm to revisit state $\State{0}$. This time
line~\ref{line:termination-condition} evaluates to \emph{true} as the termination
condition is satisfied.
The algorithm calls the \MethodUpdateBacktrackSetsFromGraph{} procedure---it finds
that the events \code{e$_1$}, \code{e$_2$}, and \code{e$_3$} are reachable from the transition for event \code{e$_4$}.
Next, it calls the \MethodUpdateBacktrackSet{} procedure for the reachable
transitions. Among all the conflicts found by the procedure, it
puts event \code{e$_1$} in the backtrack set of state $\State{1}$ as a new backtracking point
for the conflict between the events \code{e$_1$} and \code{e$_4$}.
Then, the algorithm saves the visited states in this execution into \PrevStateTable
and terminates the execution.

Now the algorithm explores the executions in the backtrack sets.
This time, line~\ref{line:while} in Algorithm~\ref{alg:sdpor} chooses 
the events \code{e$_1$} and \code{e$_3$} in the backtrack set of $\State{1}$, and 
event \code{e$_2$} in the backtrack set of $\State{2}$.
These executions terminate quickly as they immediately revisit states
$\State{1}$ and $\State{2}$ that have been stored in \PrevStateTable
at the end of the first execution---line~\ref{line:termination-condition} evaluates to \emph{true}.
These executions are shown in 
Figures~\ref{fig:example-missing-execution}-b, c, and d.

Next, let us suppose that the algorithm explores the execution from event \code{e$_2$}
in the backtrack set of $\State{0}$ shown in Figure~\ref{fig:example-missing-execution}-e.
At first, line~\ref{line:while} in Algorithm~\ref{alg:sdpor} chooses event \code{e$_2$} in the backtrack set of $\State{0}$ that produces a new state $\State{3}$.
When line~\ref{line:call-update-backtrack-set} calls the \MethodUpdateBacktrackSet{} 
procedure, this invocation allows the algorithm to
perform the DFS backward to find conflicts between event \code{e$_2$} and the transitions in
previous executions.
Thus, it finds a conflict between the current event \code{e$_2$} and event \code{e$_3$} from the previous execution
shown in Figure~\ref{fig:example-missing-execution}-b---both access the shared
variable \code{y}. It then puts event \code{e$_2$} into the backtrack set of state $\State{1}$.
Since lines~\ref{line:termination-condition} and~\ref{line:cycle-current-exec-1}
evaluate to \emph{false}, the algorithm calls the \MethodExplore{} procedure recursively.
In the second recursion, let us suppose that event \code{e$_3$} is chosen.
Line~\ref{line:call-update-backtrack-set} calls the \MethodUpdateBacktrackSet{} procedure:
it finds a conflict between events \code{e$_3$} and \code{e$_2$} and puts event \code{e$_3$}
in the backtrack set of $\State{0}$.
Since event \code{e$_3$} directs the execution to revisit $\State{0}$ that is already in \PrevStateTable,
line~\ref{line:termination-condition} evaluates to \emph{true} and
the algorithm invokes the \MethodUpdateBacktrackSetsFromGraph{} procedure.
This procedure finds a conflict between event \code{$e_2$} of the current execution as a 
reachable transition from $\State{0}$ and event \code{$e_3$} from $\State{3}$ that was just executed; it next puts event \code{e$_2$} into the backtrack set of
$\State{3}$ as a new backtracking point.  At this point, this execution terminates.

The algorithm explores executions from the remaining backtracking points,
namely, executions from event \code{e$_2$} in the backtrack set of $\State{3}$, and 
\code{e$_3$} and \code{e$_4$} in the backtrack set of $\State{0}$.
These executions are shown in Figures~\ref{fig:example-missing-execution}-f, g, and h.  In the execution of event \code{e$_4$} from state $\State{0}$, the algorithm detects a conflict with the prior execution of event \code{e$_3$} from state $\State{3}$; it puts event \code{e$_4$} into the backtrack set of $\State{3}$.

The algorithm then explores the execution shown in Figure~\ref{fig:example-missing-execution}-i, which contains the assertion. The model checker stops its exploration and returns an error.

\mysection{Optimizing Traversals\label{sec:optimizing-traversals}}

The algorithm as presented performs graph traversals to identify conflicts.  Our implementation implements several optimizations to reduce traversal overheads.

The procedure \MethodUpdateBacktrackSetsFromGraph{} traverses the graph after a state match to discover potential conflicting transactions from previous executions.  We eliminate many of these graph traversals by caching the results of the graph search the first time \MethodUpdateBacktrackSetsFromGraph is called.  The results can be cached for each state as a summary of the potentially conflicting transitions that are reachable from the given state.
A potentially conflicting transition, namely $\Trans{\text{conf}}$, is cached in the form of a tuple that contains the event \GetEvent{}(\Trans{\text{conf}}) and its corresponding accesses \Access{}.

This cached summary is updated by \MethodUpdateBacktrackSet{} during its backwards graph traversal.
With this summary, the algorithm can efficiently re-explore the previously explored transitions: (1) when a previously discovered state \State{} is reached, the summary contains all potentially conflicting events reachable from \State{}, and (2) when performing backwards depth first search traversal, the algorithm can stop the traversal at any state \State{b} whenever the algorithm finds that the current \GetEvent{}(\Trans{\text{conf}}) and \Access{} are already cached in the summary for \State{b}.

\begin{table*}[!htb]
  \centering
    \caption{Model-checked pairs that finished with DPOR but unfinished (\ie "Unf") without DPOR. \textbf{Evt.} is number of events and \textbf{Time} is in seconds.
    \label{tab:results-dnt-no-dpor-complete}}
  { \footnotesize
  \begin{tabular}{| r | p{42mm} | r | c | c | c | r | r | r |}
    \hline
    \textbf{No.} & \textbf{App} & \textbf{Evt.} & \multicolumn{3}{ c |}{\textbf{Without DPOR}} & \multicolumn{3}{ c |}{\textbf{With DPOR}}\\
    \cline{4-9}
        & & & \textbf{States} & \textbf{Trans.} & \textbf{Time} & \textbf{States} & \textbf{Trans.} & \textbf{Time}\\
    \hline
        1  & initial-state-event-streamer--thermostat-auto-off                 & 78 & Unf & Unf & Unf & 7,146 & 25,850 & 3,285 \\ \hline
        2  & unbuffered-event-sender--hvac-auto-off.smartapp                   & 78 & Unf & Unf & Unf & 7,123 & 26,016 & 3,432 \\ \hline
        3  & initial-state-event-sender--hvac-auto-off.smartapp                & 78 & Unf & Unf & Unf & 7,007 & 25,220 & 3,215 \\ \hline
        4  & initial-state-event-streamer--hvac-auto-off.smartapp              & 78 & Unf & Unf & Unf & 7,007 & 25,220 & 3,230 \\ \hline
        5  & initialstate-smart-app-v1.2.0--hvac-auto-off.smartapp             & 78 & Unf & Unf & Unf & 7,007 & 25,220 & 3,290 \\ \hline
        6  & lighting-director--circadian-daylight                             & 19 & Unf & Unf & Unf & 6,553 & 33,045 & 6,604 \\ \hline
        7  & initial-state-event-streamer--thermostat                          & 81 & Unf & Unf & Unf & 5,646 & 26,620 & 2,965 \\ \hline
        8  & forgiving-security--unbuffered-event-sender                       & 80 & Unf & Unf & Unf & 5,019 & 45,208 & 6,259 \\ \hline
        9  & forgiving-security--initial-state-event-streamer                  & 80 & Unf & Unf & Unf & 4,902 & 44,230 & 5,697 \\ \hline
        10 & forgiving-security--initialstate-smart-app-v1.2.0                 & 80 & Unf & Unf & Unf & 4,902 & 44,230 & 5,702 \\ \hline
        11 & forgiving-security--initial-state-event-sender                    & 80 & Unf & Unf & Unf & 4,902 & 44,230 & 5,716 \\ \hline
        12 & unbuffered-event-sender--thermostat-window-check                  & 79 & Unf & Unf & Unf & 4,546 & 17,411 & 2,069 \\ \hline
        13 & hue-mood-lighting--Hue-Party-Mode                                 & 18 & Unf & Unf & Unf & 3,457 & 49,138 & 6,132 \\ \hline
        14 & thermostat--initial-state-event-sender                            & 82 & Unf & Unf & Unf & 2,530 & 13,105 & 1,621 \\ \hline
        15 & thermostat--initialstate-smart-app-v1.2.0                         & 82 & Unf & Unf & Unf & 2,530 & 13,105 & 1,606 \\ \hline
        16 & thermostat--unbuffered-event-sender                               & 82 & Unf & Unf & Unf & 2,517 & 12,880 & 1,626 \\ \hline
        17 & initial-state-event-streamer--unlock-it-when-i-arrive             & 79 & Unf & Unf & Unf & 2,459 & 8,825  & 998   \\ \hline
        18 & lights-off-with-no-motion-and-presence--smart-security            & 11 & Unf & Unf & Unf & 2,299 & 11,261 & 1,475 \\ \hline
        19 & initial-state-event-streamer--lock-it-when-i-leave                & 78 & Unf & Unf & Unf & 1,750 & 5,387  & 595   \\ \hline
        20 & smart-nightlight--step-notifier                                   & 13 & Unf & Unf & Unf & 1,568 & 6,460  & 644   \\ \hline
        21 & initial-state-event-sender--NotifyIfLeftUnlocked                  & 78 & Unf & Unf & Unf & 1,482 & 3,830  & 428   \\ \hline
        22 & initial-state-event-streamer--NotifyIfLeftUnlocked                & 78 & Unf & Unf & Unf & 1,482 & 3,830  & 439   \\ \hline
  \end{tabular}
  }
\end{table*}

\begin{table*}[t]
  \centering
  { \footnotesize
  \begin{tabular}{| r | p{42mm} | r | c | c | c | r | r | r |}
    \hline
    \textbf{No.} & \textbf{App} & \textbf{Evt.} & \multicolumn{3}{ c |}{\textbf{Without DPOR}} & \multicolumn{3}{ c |}{\textbf{With DPOR}}\\
    \cline{4-9}
        & & & \textbf{States} & \textbf{Trans.} & \textbf{Time} & \textbf{States} & \textbf{Trans.} & \textbf{Time}\\
    \hline
        23 & initialstate-smart-app-v1.2.0--NotifyIfLeftUnlocked               & 78 & Unf & Unf & Unf & 1,482 & 3,830  & 437   \\ \hline
        24 & initial-state-event-streamer--auto-lock-door.smartapp             & 80 & Unf & Unf & Unf & 1,272 & 3,743  & 422   \\ \hline
        25 & lock-it-when-i-leave--unbuffered-event-sender                     & 78 & Unf & Unf & Unf & 1,234 & 3,482  & 438   \\ \hline
        26 & lock-it-when-i-leave--initial-state-event-sender                  & 78 & Unf & Unf & Unf & 1,192 & 3,369  & 424   \\ \hline
        27 & lock-it-when-i-leave--initialstate-smart-app-v1.2.0               & 78 & Unf & Unf & Unf & 1,192 & 3,369  & 428   \\ \hline
        28 & medicine-management-temp-motion--initial-state-event-sender       & 79 & Unf & Unf & Unf & 939   & 2,277  & 275   \\ \hline
        29 & medicine-management-temp-motion--initialstate-smart-app-v1.2.0    & 79 & Unf & Unf & Unf & 939   & 2,277  & 275   \\ \hline
        30 & medicine-management-temp-motion--unbuffered-event-sender          & 79 & Unf & Unf & Unf & 933   & 2,277  & 283   \\ \hline
        31 & initial-state-event-streamer--DeviceTamperAlarm                   & 80 & Unf & Unf & Unf & 860   & 2,439  & 277   \\ \hline
        32 & NotifyIfLeftUnlocked--unbuffered-event-sender                     & 78 & Unf & Unf & Unf & 811   & 2,181  & 270   \\ \hline
        33 & initial-state-event-streamer--smart-auto-lock-unlock              & 80 & Unf & Unf & Unf & 780   & 1,986  & 228   \\ \hline
        34 & initial-state-event-streamer--medicine-management-contact-sensor  & 78 & Unf & Unf & Unf & 738   & 1,657  & 197   \\ \hline
        35 & unlock-it-when-i-arrive--initial-state-event-sender               & 77 & Unf & Unf & Unf & 622   & 2,402  & 273   \\ \hline
        36 & unlock-it-when-i-arrive--initialstate-smart-app-v1.2.0            & 77 & Unf & Unf & Unf & 622   & 2,402  & 274   \\ \hline
        37 & initial-state-event-streamer--lock-it-at-a-specific-time          & 79 & Unf & Unf & Unf & 621   & 1,451  & 175   \\ \hline
        38 & unlock-it-when-i-arrive--unbuffered-event-sender                  & 77 & Unf & Unf & Unf & 618   & 2,405  & 277   \\ \hline
        39 & medicine-management-contact-sensor--initial-state-event-sender    & 78 & Unf & Unf & Unf & 605   & 1,241  & 156   \\ \hline
        40 & medicine-management-contact-sensor--initialstate-smart-app-v1.2.0 & 78 & Unf & Unf & Unf & 605   & 1,241  & 163   \\ \hline
        41 & DeviceTamperAlarm--initial-state-event-sender                     & 80 & Unf & Unf & Unf & 602   & 1,540  & 206   \\ \hline
        42 & DeviceTamperAlarm--initialstate-smart-app-v1.2.0                  & 80 & Unf & Unf & Unf & 602   & 1,540  & 209   \\ \hline
  \end{tabular}
  }
\end{table*}

\begin{table*}[t]
  \centering
  { \footnotesize
  \begin{tabular}{| r | p{42mm} | r | c | c | c | r | r | r |}
    \hline
    \textbf{No.} & \textbf{App} & \textbf{Evt.} & \multicolumn{3}{ c |}{\textbf{Without DPOR}} & \multicolumn{3}{ c |}{\textbf{With DPOR}}\\
    \cline{4-9}
        & & & \textbf{States} & \textbf{Trans.} & \textbf{Time} & \textbf{States} & \textbf{Trans.} & \textbf{Time}\\
    \hline
        43 & medicine-management-contact-sensor--unbuffered-event-sender       & 78 & Unf & Unf & Unf & 602   & 1,240  & 168   \\ \hline
        44 & DeviceTamperAlarm--unbuffered-event-sender                        & 80 & Unf & Unf & Unf & 600   & 1,534  & 217   \\ \hline
        45 & close-the-valve--initial-state-event-sender                       & 78 & Unf & Unf & Unf & 584   & 1,261  & 162   \\ \hline
        46 & close-the-valve--initial-state-event-streamer                     & 78 & Unf & Unf & Unf & 584   & 1,261  & 164   \\ \hline
        47 & close-the-valve--initialstate-smart-app-v1.2.0                    & 78 & Unf & Unf & Unf & 584   & 1,261  & 162   \\ \hline
        48 & close-the-valve--unbuffered-event-sender                          & 78 & Unf & Unf & Unf & 581   & 1,259  & 172   \\ \hline
        49 & initial-state-event-streamer--medicine-management-temp-motion     & 79 & Unf & Unf & Unf & 549   & 1,298  & 172   \\ \hline
        50 & lock-it-at-a-specific-time--initial-state-event-sender            & 79 & Unf & Unf & Unf & 502   & 1,080  & 141   \\ \hline
        51 & lock-it-at-a-specific-time--initialstate-smart-app-v1.2.0         & 79 & Unf & Unf & Unf & 502   & 1,080  & 140   \\ \hline
        52 & lock-it-at-a-specific-time--unbuffered-event-sender               & 79 & Unf & Unf & Unf & 500   & 1,079  & 146   \\ \hline
        53 & auto-lock-door.smartapp--initial-state-event-sender               & 80 & Unf & Unf & Unf & 498   & 1,617  & 196   \\ \hline
        54 & auto-lock-door.smartapp--initialstate-smart-app-v1.2.0            & 80 & Unf & Unf & Unf & 498   & 1,617  & 194   \\ \hline
        55 & auto-lock-door.smartapp--unbuffered-event-sender                  & 80 & Unf & Unf & Unf & 495   & 1,617  & 202   \\ \hline
        56 & initial-state-event-sender--initialstate-smart-app-v1.2.0         & 78 & Unf & Unf & Unf & 473   & 1,054  & 143   \\ \hline
        57 & initial-state-event-streamer--initial-state-event-sender          & 78 & Unf & Unf & Unf & 473   & 1,054  & 143   \\ \hline
        58 & initial-state-event-streamer--initialstate-smart-app-v1.2.0       & 78 & Unf & Unf & Unf & 473   & 1,054  & 143   \\ \hline
        59 & initial-state-event-sender--unbuffered-event-sender               & 78 & Unf & Unf & Unf & 471   & 1,053  & 147   \\ \hline
        60 & initial-state-event-streamer--unbuffered-event-sender             & 78 & Unf & Unf & Unf & 471   & 1,053  & 147   \\ \hline
        61 & initialstate-smart-app-v1.2.0--unbuffered-event-sender            & 78 & Unf & Unf & Unf & 471   & 1,053  & 145   \\ \hline
        62 & enhanced-auto-lock-door--initial-state-event-sender               & 80 & Unf & Unf & Unf & 309   & 1,139  & 158   \\ \hline
        63 & enhanced-auto-lock-door--initial-state-event-streamer             & 80 & Unf & Unf & Unf & 309   & 1,139  & 158   \\ \hline
        64 & enhanced-auto-lock-door--initialstate-smart-app-v1.2.0            & 80 & Unf & Unf & Unf & 309   & 1,139  & 162   \\ \hline
  \end{tabular}
  }
\end{table*}

\begin{table*}[t]
  \centering
  { \footnotesize
  \begin{tabular}{| r | p{42mm} | r | c | c | c | r | r | r |}
    \hline
    \textbf{No.} & \textbf{App} & \textbf{Evt.} & \multicolumn{3}{ c |}{\textbf{Without DPOR}} & \multicolumn{3}{ c |}{\textbf{With DPOR}}\\
    \cline{4-9}
        & & & \textbf{States} & \textbf{Trans.} & \textbf{Time} & \textbf{States} & \textbf{Trans.} & \textbf{Time}\\
    \hline
        65 & smart-auto-lock-unlock--initial-state-event-sender                & 80 & Unf & Unf & Unf & 309   & 1,140  & 159   \\ \hline
        66 & smart-auto-lock-unlock--initialstate-smart-app-v1.2.0             & 80 & Unf & Unf & Unf & 309   & 1,140  & 161   \\ \hline
        67 & enhanced-auto-lock-door--unbuffered-event-sender                  & 80 & Unf & Unf & Unf & 307   & 1,139  & 164   \\ \hline
        68 & smart-auto-lock-unlock--unbuffered-event-sender                   & 80 & Unf & Unf & Unf & 307   & 1,140  & 165   \\ \hline
        69 & lighting-director--turn-on-before-sunset                          & 11 & Unf & Unf & Unf & 257   & 960    & 77    \\
    \hline
  \end{tabular}
  }
\end{table*}

\begin{table*}[!htb]
  \centering
    \caption{Model-checked pairs that finished with or without DPOR. \textbf{Evt.} is number of events and \textbf{Time} is in seconds.
    \label{tab:results-both-terminated-complete}}
  { \footnotesize
  \begin{tabular}{| r | p{42mm} | r | r | r | r | r | r | r |}
    \hline
    \textbf{No.} & \textbf{App} & \textbf{Evt.} & \multicolumn{3}{ c |}{\textbf{Without DPOR}} & \multicolumn{3}{ c |}{\textbf{With DPOR}}\\
    \cline{4-9}
        & & & \textbf{States} & \textbf{Trans.} & \textbf{Time} & \textbf{States} & \textbf{Trans.} & \textbf{Time}\\
    \hline
        1  & smart-nightlight--ecobeeAwayFromHome                              & 14 & 16,441 & 76,720  & 5,059 & 11,743 & 46,196 & 5,498 \\ \hline
        2  & step-notifier--ecobeeAwayFromHome                                 & 11 & 14,401 & 52,800  & 4,885 & 11,490 & 35,142 & 5,079 \\ \hline
        3  & smart-security--ecobeeAwayFromHome                                & 11 & 14,301 & 47,608  & 4,385 & 8,187  & 21,269 & 2,980 \\ \hline
        4  & keep-me-cozy--whole-house-fan                                     & 17 & 8,793  & 149,464 & 4,736 & 8,776  & 95,084 & 6,043 \\ \hline
        5  & keep-me-cozy-ii--thermostat-window-check                          & 13 & 8,764  & 113,919 & 4,070 & 7,884  & 59,342 & 4,515 \\ \hline
        6  & step-notifier--mini-hue-controller                                & 6  & 7,967  & 47,796  & 2,063 & 7,907  & 40,045 & 3,582 \\ \hline
        7  & keep-me-cozy--thermostat-mode-director                            & 12 & 7,633  & 91,584  & 3,259 & 6,913  & 49,850 & 3,652 \\ \hline
        8  & lighting-director--step-notifier                                  & 14 & 7,611  & 106,540 & 5,278 & 2,723  & 25,295 & 2,552 \\ \hline
        9  & smart-alarm--DeviceTamperAlarm                                    & 15 & 5,665  & 84,960  & 3,559 & 3,437  & 40,906 & 4,441 \\ \hline
        10 & forgiving-security--smart-alarm                                   & 13 & 5,545  & 72,072  & 3,134 & 4,903  & 52,205 & 5,728 \\ \hline
        11 & smart-light-timer-x-minutes-unless-already-on--ecobeeAwayFromHome & 9  & 3,775  & 11,160  & 992   & 2,460  & 5,418  & 645   \\ \hline
        12 & smart-security--vacation-lighting-director                        & 12 & 3,759  & 33,264  & 1,641 & 2,849  & 14,108 & 1,604 \\ \hline
        13 & smart-security--turn-on-only-if-i-arrive-after-sunset             & 11 & 3,437  & 28,028  & 1,471 & 2,553  & 11,624 & 1,396 \\ \hline
        14 & smart-security--turn-it-on-when-im-here                           & 11 & 3,437  & 28,028  & 1,470 & 2,549  & 11,878 & 1,401 \\ \hline
        15 & vacation-lighting-director--ecobeeAwayFromHome                    & 10 & 3,313  & 11,040  & 856   & 2,158  & 5,779  & 652   \\ \hline
        16 & smart-light-timer-x-minutes-unless-already-on--step-notifier      & 10 & 3,213  & 32,120  & 2,402 & 2,343  & 11,149 & 1,122 \\ \hline
        17 & thermostat--thermostat-mode-director                              & 12 & 3,169  & 38,016  & 1,454 & 3,157  & 28,437 & 2,470 \\ \hline
        18 & ecobeeAwayFromHome--NotifyIfLeftUnlocked                          & 9  & 2,329  & 6,984   & 623   & 1,714  & 377    & 454   \\ \hline
        19 & keep-me-cozy--thermostat-window-check                             & 14 & 2,185  & 30,576  & 1,148 & 2,176  & 20,458 & 1,576 \\ \hline
        20 & smart-security--turn-on-before-sunset                             & 10 & 2,175  & 16,240  & 855   & 1,588  & 6,119  & 782   \\ \hline
        21 & smart-security--turn-on-at-sunset                                 & 10 & 2,175  & 16,240  & 909   & 1,542  & 5,599  & 783   \\ \hline
        22 & keep-me-cozy--thermostat                                          & 12 & 2,017  & 519,960 & 801   & 2,017  & 15,593 & 1,193 \\ \hline
  \end{tabular}
  }
\end{table*}

\begin{table*}[!htb]
  \centering
  { \footnotesize
  \begin{tabular}{| r | p{42mm} | r | r | r | r | r | r | r |}
    \hline
    \textbf{No.} & \textbf{App} & \textbf{Evt.} & \multicolumn{3}{ c |}{\textbf{Without DPOR}} & \multicolumn{3}{ c |}{\textbf{With DPOR}}\\
    \cline{4-9}
        & & & \textbf{States} & \textbf{Trans.} & \textbf{Time} & \textbf{States} & \textbf{Trans.} & \textbf{Time}\\
    \hline
        23 & smart-security--turn-it-on-when-it-opens                          & 10 & 1,763  & 12,760  & 700   & 1,340  & 5,913  & 737   \\ \hline
        24 & smart-security--undead-early-warning                              & 10 & 1,763  & 12,760  & 688   & 1,340  & 5,913  & 732   \\ \hline
        25 & photo-burst-when--ecobeeAwayFromHome                              & 13 & 1,409  & 4,576   & 599   & 1,109  & 2,695  & 627   \\ \hline
        26 & auto-lock-door.smartapp--ecobeeAwayFromHome                       & 11 & 1,381  & 4,048   & 373   & 927    & 2,357  & 294   \\ \hline
        27 & lighting-director--vacation-lighting-director                     & 13 & 1,373  & 17,836  & 623   & 782    & 4,943  & 395   \\ \hline
        28 & let-there-be-dark--smart-security                                 & 9  & 1,279  & 8,460   & 494   & 881    & 3,531  & 413   \\ \hline
        29 & thermostat--whole-house-fan                                       & 13 & 1,273  & 16,536  & 690   & 747    & 8,237  & 790   \\ \hline
        30 & let-there-be-dark--ecobeeAwayFromHome                             & 7  & 1,240  & 3,052   & 289   & 924    & 1,895  & 263   \\ \hline
        31 & forgiving-security--smart-security                                & 11 & 1,165  & 7,524   & 620   & 696    & 2,838  & 547   \\ \hline
        32 & lighting-director--smart-light-timer-x-minutes-unless-already-on  & 12 & 1,068  & 12,804  & 465   & 424    & 3,104  & 271   \\ \hline
        33 & smart-security--turn-off-with-motion                              & 9  & 989    & 6,732   & 371   & 732    & 3,050  & 394   \\ \hline
        34 & whole-house-fan--hvac-auto-off.smartapp                           & 9  & 958    & 8,613   & 296   & 935    & 6,852  & 722   \\ \hline
        35 & thermostat-auto-off--whole-house-fan                              & 9  & 958    & 8,613   & 286   & 660    & 5,377  & 424   \\ \hline
        36 & smart-security--turn-on-by-zip-code                               & 9  & 941    & 6,300   & 363   & 705    & 2,855  & 386   \\ \hline
        37 & laundry-monitor--step-notifier                                    & 6  & 873    & 3,954   & 263   & 170    & 327    & 54    \\ \hline
        38 & smart-security--DeviceTamperAlarm                                 & 11 & 872    & 7,546   & 371   & 593    & 2,939  & 376   \\ \hline
        39 & make-it-so--whole-house-fan                                       & 13 & 841    & 10,920  & 409   & 751    & 7,436  & 599   \\ \hline
        40 & step-notifier--BetterLaundryMonitor                               & 6  & 810    & 4,854   & 260   & 453    & 1,283  & 144   \\ \hline
        41 & thermostat-auto-off--thermostat-mode-director                     & 10 & 763    & 7,620   & 281   & 714    & 5,805  & 490   \\ \hline
        42 & medicine-management-temp-motion--circadian-daylight               & 14 & 756    & 10,500  & 431   & 350    & 3,133  & 293   \\ \hline
        43 & smart-nightlight--turn-it-on-when-im-here                         & 11 & 736    & 8,085   & 304   & 238    & 1,085  & 110   \\ \hline
        44 & smart-nightlight--turn-on-only-if-i-arrive-after-sunset           & 11 & 736    & 8,085   & 309   & 238    & 1,085  & 109   \\ \hline
        45 & lighting-director--turn-on-at-sunset                              & 11 & 734    & 8,063   & 294   & 241    & 927    & 80    \\ \hline
        46 & unlock-it-when-i-arrive--ecobeeAwayFromHome                       & 8  & 733    & 2,048   & 192   & 543    & 1,179  & 170   \\ \hline
  \end{tabular}
  }
\end{table*}

\begin{table*}[!htb]
  \centering
  { \footnotesize
  \begin{tabular}{| r | p{42mm} | r | r | r | r | r | r | r |}
    \hline
    \textbf{No.} & \textbf{App} & \textbf{Evt.} & \multicolumn{3}{ c |}{\textbf{Without DPOR}} & \multicolumn{3}{ c |}{\textbf{With DPOR}}\\
    \cline{4-9}
        & & & \textbf{States} & \textbf{Trans.} & \textbf{Time} & \textbf{States} & \textbf{Trans.} & \textbf{Time}\\
    \hline
        47 & hall-light-welcome-home--lighting-director                        & 12 & 675    & 8,088   & 335   & 225    & 1,857  & 172   \\ \hline
        48 & lights-off-with-no-motion-and-presence--ecobeeAwayFromHome        & 7  & 605    & 1,400   & 178   & 400    & 827    & 140   \\ \hline
        49 & good-night-house--ecobeeAwayFromHome                              & 7  & 599    & 1,288   & 170   & 505    & 1,000  & 198   \\ \hline
        50 & good-night-house--ecobeeAwayFromHome                              & 7  & 599    & 1,288   & 179   & 505    & 1,000  & 198   \\ \hline
        51 & good-night-house--ecobeeAwayFromHome                              & 7  & 599    & 1,288   & 170   & 505    & 999    & 200   \\ \hline
        52 & good-night-house--ecobeeAwayFromHome                              & 7  & 599    & 1,288   & 179   & 505    & 999    & 200   \\ \hline
        53 & keep-me-cozy--WindowOrDoorOpen                                    & 10 & 589    & 5,880   & 238   & 589    & 4,338  & 365   \\ \hline
        54 & keep-me-cozy--hvac-auto-off.smartapp                              & 10 & 589    & 5,880   & 190   & 589    & 4,066  & 290   \\ \hline
        55 & keep-me-cozy--thermostat-auto-off                                 & 10 & 589    & 5,880   & 188   & 589    & 4,126  & 283   \\ \hline
        56 & lock-it-at-a-specific-time--ecobeeAwayFromHome                    & 8  & 553    & 1,472   & 142   & 473    & 1,018  & 152   \\ \hline
        57 & darken-behind-me--ecobeeAwayFromHome                              & 8  & 553    & 1,472   & 154   & 437    & 1,019  & 155   \\ \hline
        58 & turn-off-with-motion--ecobeeAwayFromHome                          & 7  & 553    & 1,288   & 144   & 410    & 848    & 133   \\ \hline
        59 & lights-off-with-no-motion-and-presence--step-notifier             & 8  & 506    & 4,040   & 224   & 309    & 1,309  & 142   \\ \hline
        60 & medicine-management-contact-sensor--circadian-daylight            & 13 & 492    & 6,318   & 369   & 310    & 2,729  & 225   \\ \hline
        61  & make-it-so--single-button-controller                                 & 6  & 454 & 906   & 148 & 369 & 712   & 165 \\ \hline
        62  & lighting-director--turn-it-on-when-it-opens                          & 11 & 423 & 4,642 & 199 & 223 & 1,614 & 135 \\ \hline
        63  & lighting-director--undead-early-warning                              & 11 & 423 & 4,642 & 199 & 223 & 1,614 & 133 \\ \hline
        64  & thermostat-window-check--whole-house-fan                             & 10 & 388 & 3,870 & 178 & 388 & 3,409 & 318 \\ \hline
        65  & good-night--BetterLaundryMonitor                                     & 8  & 385 & 3,072 & 125 & 294 & 1,245 & 116 \\ \hline
        66  & hue-minimote--smart-light-timer-x-minutes-unless-already-on          & 8  & 373 & 2,976 & 183 & 312 & 2,201 & 475 \\ \hline
        67  & gentle-wake-up--BetterLaundryMonitor                                 & 7  & 361 & 2,520 & 114 & 361 & 2,309 & 226 \\ \hline
        68  & make-it-so--thermostat-mode-director                                 & 8  & 361 & 2,880 & 137 & 345 & 2,335 & 251 \\ \hline
  \end{tabular}
  }
\end{table*}

\begin{table*}[!htb]
  \centering
  { \footnotesize
  \begin{tabular}{| r | p{42mm} | r | r | r | r | r | r | r |}
    \hline
    \textbf{No.} & \textbf{App} & \textbf{Evt.} & \multicolumn{3}{ c |}{\textbf{Without DPOR}} & \multicolumn{3}{ c |}{\textbf{With DPOR}}\\
    \cline{4-9}
        & & & \textbf{States} & \textbf{Trans.} & \textbf{Time} & \textbf{States} & \textbf{Trans.} & \textbf{Time}\\
    \hline
        69  & smart-nightlight--turn-on-at-sunset                                  & 10 & 356 & 3,550 & 141 & 333 & 2,861 & 239 \\ \hline
        70  & smart-nightlight--turn-it-on-when-it-opens                           & 10 & 331 & 3,300 & 139 & 101 & 369   & 50  \\ \hline
        71  & smart-nightlight--undead-early-warning                               & 10 & 331 & 3,300 & 140 & 101 & 369   & 50  \\ \hline
        72  & gentle-wake-up--good-night                                           & 8  & 321 & 2,560 & 125 & 321 & 2,376 & 239 \\ \hline
        73  & thermostat--thermostat-window-check                                  & 10 & 313 & 3,120 & 147 & 193 & 1,383 & 159 \\ \hline
        74  & smart-nightlight--vacation-lighting-director                         & 12 & 294 & 3,516 & 133 & 168 & 1,649 & 122 \\ \hline
        75  & lighting-director--turn-on-by-zip-code                               & 10 & 278 & 2,770 & 124 & 115 & 791   & 77  \\ \hline
        76  & smart-nightlight--turn-on-before-sunset                              & 9  & 276 & 2,475 & 92  & 257 & 2,013 & 146 \\ \hline
        77  & good-night--humidity-alert                                           & 8  & 257 & 1,024 & 65  & 243 & 911   & 98  \\ \hline
        78  & thermostat-auto-off--thermostat-window-check                         & 8  & 257 & 2,048 & 86  & 153 & 1,025 & 97  \\ \hline
        79  & double-tap--gentle-wake-up                                           & 7  & 216 & 560   & 58  & 194 & 445   & 67  \\ \hline
        80  & good-night--nfc-tag-toggle                                           & 8  & 215 & 1,712 & 94  & 215 & 1,433 & 149 \\ \hline
        81  & step-notifier--Hue-Party-Mode                                        & 6  & 215 & 1,284 & 82  & 215 & 1,061 & 126 \\ \hline
        82  & step-notifier--turn-on-only-if-i-arrive-after-sunset                 & 6  & 214 & 1,278 & 81  & 104 & 305   & 46  \\ \hline
        83  & Hue-Party-Mode--mini-hue-controller                                  & 4  & 212 & 844   & 42  & 201 & 669   & 86  \\ \hline
        84  & brighten-dark-places--lighting-director                              & 11 & 206 & 2,255 & 111 & 114 & 918   & 89  \\ \hline
        85  & whole-house-fan--WindowOrDoorOpen                                    & 11 & 196 & 2,145 & 120 & 196 & 1,691 & 232 \\ \hline
        86  & smart-nightlight--turn-off-with-motion                               & 9  & 196 & 1,755 & 76  & 87  & 357   & 41  \\ \hline
        87  & smart-nightlight--turn-on-by-zip-code                                & 9  & 196 & 1,755 & 79  & 43  & 131   & 17  \\ \hline
        88  & good-night--the-big-switch                                           & 8  & 191 & 1,520 & 64  & 191 & 1,168 & 103 \\ \hline
        89  & hall-light-welcome-home--step-notifier                               & 6  & 178 & 1,062 & 83  & 40  & 125   & 26  \\ \hline
        90  & smart-light-timer-x-minutes-unless-already-on--turn-on-before-sunset & 7  & 173 & 1,204 & 46  & 163 & 798   & 66  \\ \hline
        91  & smart-light-timer-x-minutes-unless-already-on--turn-on-at-sunset     & 7  & 173 & 1,204 & 50  & 157 & 770   & 68  \\ \hline
        92  & double-tap--good-night                                               & 8  & 161 & 512   & 49  & 149 & 456   & 64  \\ \hline
        93  & make-it-so--thermostat-window-check                                  & 10 & 157 & 1,560 & 80  & 157 & 1,231 & 123 \\ \hline
  \end{tabular}
  }
\end{table*}

\begin{table*}[!htb]
  \centering
  { \footnotesize
  \begin{tabular}{| r | p{42mm} | r | r | r | r | r | r | r |}
    \hline
    \textbf{No.} & \textbf{App} & \textbf{Evt.} & \multicolumn{3}{ c |}{\textbf{Without DPOR}} & \multicolumn{3}{ c |}{\textbf{With DPOR}}\\
    \cline{4-9}
        & & & \textbf{States} & \textbf{Trans.} & \textbf{Time} & \textbf{States} & \textbf{Trans.} & \textbf{Time}\\
    \hline
        94  & make-it-so--BetterLaundryMonitor                                     & 6  & 145 & 864   & 43  & 145 & 735   & 79  \\ \hline
        95  & make-it-so--thermostat                                               & 8  & 145 & 1,152 & 56  & 145 & 905   & 97  \\ \hline
        96  & rise-and-shine--BetterLaundryMonitor                                 & 6  & 136 & 810   & 43  & 103 & 371   & 74  \\ \hline
        97  & nfc-tag-toggle--single-button-controller                             & 4  & 121 & 160   & 51  & 105 & 128   & 66  \\ \hline
        98  & step-notifier--turn-on-at-sunset                                     & 5  & 117 & 580   & 45  & 68  & 177   & 31  \\ \hline
        99  & step-notifier--turn-on-before-sunset                                 & 5  & 117 & 580   & 44  & 68  & 177   & 30  \\ \hline
        100 & energy-saver--good-night                                             & 8  & 113 & 896   & 45  & 68  & 271   & 29  \\ \hline
        101 & big-turn-on--BetterLaundryMonitor                                    & 6  & 112 & 666   & 35  & 112 & 580   & 60  \\ \hline
        102 & the-big-switch--BetterLaundryMonitor                                 & 5  & 109 & 540   & 32  & 109 & 486   & 63  \\ \hline
        103 & gentle-wake-up--rise-and-shine                                       & 6  & 103 & 612   & 37  & 103 & 511   & 62  \\ \hline
        104 & BetterLaundryMonitor--Hue-Party-Mode                                 & 4  & 103 & 408   & 27  & 73  & 219   & 36  \\ \hline
        105 & 01-control-lights-and-locks-with-contact-sensor--good-night          & 8  & 97  & 768   & 69  & 65  & 399   & 49  \\ \hline
        106 & control-switch-with-contact-sensor--good-night                       & 8  & 97  & 768   & 34  & 65  & 399   & 48  \\ \hline
        107 & big-turn-on--good-night                                              & 7  & 95  & 658   & 33  & 95  & 637   & 64  \\ \hline
        108 & forgiving-security--smart-light-timer-x-minutes-unless-already-on    & 8  & 95  & 752   & 40  & 40  & 212   & 42  \\ \hline
        109 & smart-light-timer-x-minutes-unless-already-on--turn-on-by-zip-code   & 6  & 87  & 516   & 24  & 74  & 349   & 35  \\ \hline
        110 & step-notifier--turn-it-on-when-it-opens                              & 5  & 87  & 430   & 37  & 43  & 111   & 25  \\ \hline
        111 & step-notifier--undead-early-warning                                  & 5  & 87  & 430   & 37  & 43  & 111   & 25  \\ \hline
        112 & brighten-my-path--step-notifier                                      & 5  & 87  & 430   & 37  & 42  & 111   & 26  \\ \hline
        113 & thermostat--WindowOrDoorOpen                                         & 6  & 85  & 504   & 37  & 22  & 44    & 14  \\ \hline
        114 & thermostat--hvac-auto-off.smartapp                                   & 6  & 85  & 504   & 32  & 20  & 47    & 14  \\ \hline
        115 & thermostat--thermostat-auto-off                                      & 6  & 85  & 504   & 30  & 20  & 47    & 12  \\ \hline
        116 & BetterLaundryMonitor--WindowOrDoorOpen                               & 4  & 82  & 324   & 28  & 62  & 192   & 37  \\ \hline
  \end{tabular}
  }
\end{table*}

\begin{table*}[!htb]
  \centering
  { \footnotesize
  \begin{tabular}{| r | p{42mm} | r | r | r | r | r | r | r |}
    \hline
    \textbf{No.} & \textbf{App} & \textbf{Evt.} & \multicolumn{3}{ c |}{\textbf{Without DPOR}} & \multicolumn{3}{ c |}{\textbf{With DPOR}}\\
    \cline{4-9}
        & & & \textbf{States} & \textbf{Trans.} & \textbf{Time} & \textbf{States} & \textbf{Trans.} & \textbf{Time}\\
    \hline
        117 & good-night--make-it-so                                               & 7  & 81  & 560   & 33  & 81  & 448   & 55  \\ \hline
        118 & good-night--rise-and-shine                                           & 6  & 80  & 474   & 25  & 65  & 294   & 47  \\ \hline
        119 & thermostat-window-check--WindowOrDoorOpen                            & 7  & 76  & 525   & 37  & 75  & 455   & 62  \\ \hline
        120 & make-it-so--auto-lock-door.smartapp                                  & 8  & 73  & 576   & 29  & 73  & 509   & 56  \\ \hline
        121 & good-night--once-a-day                                       & 8 & 73 & 576 & 27 & 57 & 402 & 39 \\ \hline
        122 & brighten-dark-places--step-notifier                          & 5 & 71 & 350 & 35 & 26 & 72  & 17 \\ \hline
        123 & gentle-wake-up--monitor-on-sense                             & 6 & 69 & 408 & 27 & 69 & 348 & 44 \\ \hline
        124 & big-turn-off--good-night                                     & 7 & 67 & 462 & 49 & 59 & 290 & 42 \\ \hline
        125 & good-night--monitor-on-sense                                 & 7 & 65 & 448 & 25 & 40 & 132 & 23 \\ \hline
        126 & hue-minimote--mini-hue-controller                            & 2 & 60 & 118 & 15 & 60 & 118 & 32 \\ \hline
        127 & double-tap--nfc-tag-toggle                                   & 4 & 57 & 64  & 32 & 55 & 57  & 35 \\ \hline
        128 & make-it-so--WindowOrDoorOpen                                 & 6 & 55 & 324 & 25 & 54 & 288 & 44 \\ \hline
        129 & good-night--power-allowance                                  & 7 & 49 & 336 & 25 & 49 & 297 & 43 \\ \hline
        130 & nfc-tag-toggle--the-big-switch                               & 5 & 49 & 240 & 21 & 43 & 125 & 37 \\ \hline
        131 & good-night--turn-it-on-for-5-minutes                         & 7 & 47 & 322 & 24 & 41 & 193 & 31 \\ \hline
        132 & step-notifier--turn-on-by-zip-code                           & 4 & 46 & 180 & 21 & 27 & 62  & 17 \\ \hline
        133 & make-it-so--NotifyIfLeftUnlocked                             & 6 & 43 & 252 & 19 & 43 & 232 & 39 \\ \hline
        134 & make-it-so--hvac-auto-off.smartapp                           & 6 & 43 & 252 & 19 & 43 & 234 & 37 \\ \hline
        135 & make-it-so--thermostat-auto-off                              & 6 & 43 & 252 & 18 & 43 & 234 & 35 \\ \hline
        136 & gentle-wake-up--make-it-so                                   & 5 & 41 & 200 & 19 & 41 & 179 & 32 \\ \hline
        137 & big-turn-on--gentle-wake-up                                  & 5 & 41 & 200 & 17 & 33 & 121 & 24 \\ \hline
        138 & good-night--sunrise-sunset                                   & 8 & 41 & 320 & 36 & 33 & 205 & 43 \\ \hline
        139 & good-night-house--lights-off-with-no-motion-and-presence     & 5 & 40 & 195 & 19 & 40 & 178 & 44 \\ \hline
        140 & single-button-controller--NotifyIfLeftUnlocked               & 4 & 40 & 52  & 35 & 31 & 34  & 31 \\ \hline
        141 & big-turn-off--energy-saver                                   & 6 & 39 & 228 & 34 & 39 & 204 & 31 \\ \hline
        142 & good-night-house--single-button-controller                   & 3 & 37 & 36  & 26 & 37 & 36  & 32 \\ \hline
        143 & double-tap--humidity-alert                                   & 4 & 37 & 48  & 20 & 31 & 34  & 27 \\ \hline
        144 & make-it-so--rise-and-shine                                   & 5 & 36 & 175 & 16 & 36 & 149 & 29 \\ \hline
        145 & good-night-house--make-it-so                                 & 4 & 35 & 136 & 13 & 35 & 128 & 30 \\ \hline
        146 & big-turn-on--rise-and-shine                                  & 5 & 31 & 150 & 14 & 31 & 129 & 23 \\ \hline
        147 & hue-minimote--Hue-Party-Mode                                 & 4 & 31 & 120 & 19 & 31 & 93  & 38 \\ \hline
  \end{tabular}
  }
\end{table*}

\begin{table*}[!htb]
  \centering
  { \footnotesize
  \begin{tabular}{| r | p{42mm} | r | r | r | r | r | r | r |}
    \hline
    \textbf{No.} & \textbf{App} & \textbf{Evt.} & \multicolumn{3}{ c |}{\textbf{Without DPOR}} & \multicolumn{3}{ c |}{\textbf{With DPOR}}\\
    \cline{4-9}
        & & & \textbf{States} & \textbf{Trans.} & \textbf{Time} & \textbf{States} & \textbf{Trans.} & \textbf{Time}\\
    \hline
        148 & let-there-be-dark--vacation-lighting-director                & 5 & 31 & 150 & 14 & 31 & 150 & 23 \\ \hline
        149 & door-state-to-color-light-hue-bulb--Hue-Party-Mode           & 4 & 30 & 116 & 15 & 27 & 82  & 21 \\ \hline
        150 & monitor-on-sense--BetterLaundryMonitor                       & 3 & 28 & 81  & 12 & 28 & 72  & 20 \\ \hline
        151 & monitor-on-sense--rise-and-shine                             & 5 & 27 & 130 & 14 & 24 & 85  & 22 \\ \hline
        152 & turn-off-with-motion--vacation-lighting-director             & 5 & 27 & 130 & 13 & 24 & 101 & 19 \\ \hline
        153 & lights-off-with-no-motion-and-presence--turn-off-with-motion & 4 & 27 & 104 & 14 & 23 & 82  & 19 \\ \hline
        154 & good-night--smart-turn-it-on                                 & 6 & 25 & 144 & 13 & 25 & 119 & 18 \\ \hline
        155 & make-it-so--monitor-on-sense                                 & 5 & 25 & 120 & 14 & 25 & 102 & 23 \\ \hline
        156 & make-it-so--unlock-it-when-i-arrive                          & 5 & 25 & 120 & 14 & 25 & 102 & 22 \\ \hline
        157 & nfc-tag-toggle--sunrise-sunset                               & 4 & 25 & 96  & 17 & 25 & 86  & 37 \\ \hline
        158 & good-night-house--vacation-lighting-director                 & 4 & 23 & 88  & 12 & 23 & 88  & 23 \\ \hline
        159 & good-night-house--hue-minimote                               & 3 & 22 & 63  & 16 & 22 & 61  & 29 \\ \hline
        160 & darken-behind-me--lights-off-with-no-motion-and-presence     & 5 & 22 & 105 & 19 & 20 & 81  & 30 \\ \hline
        161 & big-turn-on--monitor-on-sense                                & 5 & 20 & 95  & 13 & 20 & 82  & 24 \\ \hline
        162 & auto-lock-door.smartapp--NotifyIfLeftUnlocked                & 4 & 19 & 72  & 16 & 19 & 68  & 20 \\ \hline
        163 & good-night-house--turn-off-with-motion                       & 3 & 18 & 51  & 10 & 18 & 49  & 21 \\ \hline
        164 & nfc-tag-toggle--NotifyIfLeftUnlocked                         & 4 & 18 & 68  & 15 & 16 & 43  & 26 \\ \hline
        165 & good-night-house--auto-lock-door.smartapp                    & 5 & 17 & 80  & 12 & 15 & 60  & 24 \\ \hline
        166 & big-turn-off--power-allowance                                & 5 & 14 & 65  & 11 & 14 & 60  & 19 \\ \hline
        167 & big-turn-on--make-it-so                                      & 4 & 14 & 52  & 10 & 14 & 52  & 17 \\ \hline
        168 & forgiving-security--turn-on-at-sunset                        & 3 & 14 & 39  & 10 & 14 & 28  & 15 \\ \hline
        169 & forgiving-security--turn-on-before-sunset                    & 3 & 14 & 39  & 9  & 14 & 28  & 11 \\ \hline
        170 & big-turn-off--smart-turn-it-on                               & 4 & 13 & 48  & 8  & 13 & 48  & 12 \\ \hline
        171 & its-too-hot--its-too-cold                                    & 2 & 13 & 12  & 10 & 13 & 12  & 17 \\ \hline
        172 & lock-it-at-a-specific-time--make-it-so                       & 5 & 13 & 60  & 10 & 13 & 54  & 15 \\ \hline
        173 & unlock-it-when-i-arrive--auto-lock-door.smartapp             & 5 & 13 & 60  & 12 & 13 & 56  & 21 \\ \hline
        174 & good-night-house--nfc-tag-toggle                             & 2 & 13 & 24  & 9  & 11 & 16  & 18 \\ \hline
  \end{tabular}
  }
\end{table*}

\begin{table*}[!htb]
  \centering
  { \footnotesize
  \begin{tabular}{| r | p{42mm} | r | r | r | r | r | r | r |}
    \hline
    \textbf{No.} & \textbf{App} & \textbf{Evt.} & \multicolumn{3}{ c |}{\textbf{Without DPOR}} & \multicolumn{3}{ c |}{\textbf{With DPOR}}\\
    \cline{4-9}
        & & & \textbf{States} & \textbf{Trans.} & \textbf{Time} & \textbf{States} & \textbf{Trans.} & \textbf{Time}\\
    \hline
        175 & hall-light-welcome-home--turn-on-at-sunset                   & 3 & 13 & 36  & 10 & 8  & 14  & 8  \\ \hline
        176 & hall-light-welcome-home--turn-on-before-sunset               & 3 & 13 & 36  & 8  & 8  & 14  & 8  \\ \hline
        177 & double-tap--smart-turn-it-on                                 & 2 & 12 & 8   & 11 & 12 & 8   & 10 \\ \hline
        178 & forgiving-security--hall-light-welcome-home                  & 3 & 12 & 33  & 10 & 12 & 24  & 15 \\ \hline
        179 & brighten-my-path--hall-light-welcome-home                    & 3 & 12 & 33  & 9  & 10 & 18  & 9  \\ \hline
        180 & good-night-house--NotifyIfLeftUnlocked                       & 3 & 12 & 33  & 10 & 9  & 20  & 14 \\ \hline
        181 & lock-it-when-i-leave--NotifyIfLeftUnlocked          & 4 & 11 & 40 & 11 & 9  & 26 & 17 \\ \hline
        182 & enhanced-auto-lock-door--NotifyIfLeftUnlocked       & 4 & 10 & 36 & 10 & 10 & 30 & 16 \\ \hline
        183 & forgiving-security--DeviceTamperAlarm               & 4 & 10 & 36 & 10 & 10 & 29 & 14 \\ \hline
        184 & hue-minimote--turn-on-by-zip-code                   & 2 & 10 & 18 & 10 & 10 & 18 & 17 \\ \hline
        185 & darken-behind-me--turn-off-with-motion              & 3 & 9  & 24 & 8  & 9  & 18 & 8  \\ \hline
        186 & enhanced-auto-lock-door--good-night-house           & 5 & 9  & 40 & 13 & 9  & 38 & 28 \\ \hline
        187 & hvac-auto-off.smartapp--WindowOrDoorOpen            & 4 & 9  & 32 & 12 & 9  & 2  & 13 \\ \hline
        188 & lock-it-at-a-specific-time--auto-lock-door.smartapp & 5 & 9  & 40 & 9  & 9  & 36 & 15 \\ \hline
        189 & nfc-tag-toggle--smart-turn-it-on                    & 2 & 9  & 16 & 8  & 9  & 16 & 14 \\ \hline
        190 & thermostat-auto-off--WindowOrDoorOpen               & 4 & 9  & 32 & 10 & 9  & 23 & 13 \\ \hline
        191 & energy-saver--power-allowance                       & 3 & 8  & 21 & 8  & 8  & 21 & 14 \\ \hline
        192 & thermostat-auto-off--hvac-auto-off.smartapp         & 2 & 8  & 14 & 8  & 8  & 14 & 10 \\ \hline
        193 & unlock-it-when-i-arrive--NotifyIfLeftUnlocked       & 3 & 8  & 21 & 9  & 8  & 17 & 13 \\ \hline
        194 & brighten-my-path--turn-on-at-sunset                 & 2 & 7  & 12 & 7  & 7  & 10 & 12 \\ \hline
        195 & brighten-my-path--turn-on-before-sunset             & 2 & 7  & 12 & 7  & 7  & 10 & 8  \\ \hline
        196 & darken-behind-me--good-night-house                  & 2 & 7  & 12 & 8  & 7  & 11 & 11 \\ \hline
        197 & energy-saver--smart-turn-it-on                      & 2 & 7  & 12 & 7  & 7  & 12 & 8  \\ \hline
        198 & turn-on-at-sunset--turn-it-on-when-it-opens         & 2 & 7  & 12 & 7  & 7  & 9  & 11 \\ \hline
  \end{tabular}
  }
\end{table*}

\begin{table*}[!htb]
  \centering
  { \footnotesize
  \begin{tabular}{| r | p{42mm} | r | r | r | r | r | r | r |}
    \hline
    \textbf{No.} & \textbf{App} & \textbf{Evt.} & \multicolumn{3}{ c |}{\textbf{Without DPOR}} & \multicolumn{3}{ c |}{\textbf{With DPOR}}\\
    \cline{4-9}
        & & & \textbf{States} & \textbf{Trans.} & \textbf{Time} & \textbf{States} & \textbf{Trans.} & \textbf{Time}\\
    \hline
        199 & turn-on-at-sunset--turn-on-before-sunset            & 2 & 7  & 12 & 8  & 7  & 12 & 10 \\ \hline
        200 & turn-on-at-sunset--undead-early-warning             & 2 & 7  & 12 & 7  & 7  & 9  & 11 \\ \hline
        201 & turn-on-before-sunset--turn-it-on-when-it-opens     & 2 & 7  & 12 & 7  & 7  & 9  & 8  \\ \hline
        202 & turn-on-before-sunset--undead-early-warning         & 2 & 7  & 12 & 7  & 7  & 9  & 8  \\ \hline
        203 & brighten-dark-places--turn-on-at-sunset             & 2 & 7  & 12 & 7  & 6  & 8  & 8  \\ \hline
        204 & brighten-dark-places--turn-on-before-sunset         & 2 & 7  & 12 & 7  & 6  & 8  & 8  \\ \hline
        205 & brighten-dark-places--hall-light-welcome-home       & 2 & 7  & 12 & 7  & 5  & 6  & 7  \\ \hline
        206 & hall-light-welcome-home--turn-it-on-when-it-opens   & 2 & 7  & 12 & 7  & 5  & 6  & 7  \\ \hline
        207 & hall-light-welcome-home--turn-on-by-zip-code        & 2 & 7  & 12 & 7  & 5  & 6  & 7  \\ \hline
        208 & hall-light-welcome-home--undead-early-warning       & 2 & 7  & 12 & 7  & 5  & 6  & 7  \\ \hline
        209 & brighten-dark-places--forgiving-security            & 2 & 6  & 10 & 7  & 6  & 8  & 11 \\ \hline
        210 & brighten-my-path--forgiving-security                & 2 & 6  & 10 & 8  & 6  & 8  & 12 \\ \hline
        211 & brighten-my-path--turn-it-on-when-it-opens          & 2 & 6  & 10 & 7  & 6  & 8  & 11 \\ \hline
        212 & brighten-my-path--undead-early-warning              & 2 & 6  & 10 & 7  & 6  & 8  & 10 \\ \hline
        213 & forgiving-security--turn-it-on-when-it-opens        & 2 & 6  & 10 & 7  & 6  & 8  & 11 \\ \hline
        214 & forgiving-security--turn-on-by-zip-code             & 2 & 6  & 10 & 7  & 6  & 8  & 8  \\ \hline
        215 & forgiving-security--undead-early-warning            & 2 & 6  & 10 & 7  & 6  & 8  & 11 \\ \hline
        216 & lock-it-at-a-specific-time--NotifyIfLeftUnlocked    & 3 & 6  & 15 & 8  & 6  & 14 & 11 \\ \hline
        217 & brighten-dark-places--brighten-my-path              & 2 & 6  & 10 & 8  & 5  & 6  & 9  \\ \hline
        218 & enhanced-auto-lock-door--auto-lock-door.smartapp    & 4 & 5  & 16 & 9  & 5  & 15 & 18 \\ \hline
        219 & good-night-house--lock-it-at-a-specific-time        & 2 & 5  & 8  & 7  & 5  & 6  & 8  \\ \hline
        220 & good-night-house--turn-on-by-zip-code               & 2 & 5  & 8  & 7  & 5  & 6  & 8  \\ \hline
        221 & turn-on-at-sunset--turn-on-by-zip-code              & 2 & 5  & 8  & 6  & 5  & 6  & 7  \\ \hline
  \end{tabular}
  }
\end{table*}

\begin{table*}[!htb]
  \centering
  { \footnotesize
  \begin{tabular}{| r | p{42mm} | r | r | r | r | r | r | r |}
    \hline
    \textbf{No.} & \textbf{App} & \textbf{Evt.} & \multicolumn{3}{ c |}{\textbf{Without DPOR}} & \multicolumn{3}{ c |}{\textbf{With DPOR}}\\
    \cline{4-9}
        & & & \textbf{States} & \textbf{Trans.} & \textbf{Time} & \textbf{States} & \textbf{Trans.} & \textbf{Time}\\
    \hline
        222 & turn-on-before-sunset--turn-on-by-zip-code          & 2 & 5  & 8  & 6  & 5  & 6  & 7  \\ \hline
        223 & brighten-dark-places--turn-it-on-when-it-opens      & 2 & 4  & 6  & 6  & 4  & 5  & 7  \\ \hline
        224 & brighten-dark-places--turn-on-by-zip-code           & 2 & 4  & 6  & 6  & 4  & 5  & 7  \\ \hline
        225 & brighten-dark-places--undead-early-warning          & 2 & 4  & 6  & 6  & 4  & 5  & 7  \\ \hline
        226 & brighten-my-path--turn-on-by-zip-code               & 2 & 4  & 6  & 6  & 4  & 5  & 6  \\ \hline
        227 & turn-it-on-when-it-opens--undead-early-warning      & 2 & 4  & 6  & 6  & 4  & 5  & 7  \\ \hline
        228 & turn-on-by-zip-code--turn-it-on-when-it-opens       & 2 & 4  & 6  & 6  & 4  & 5  & 7  \\ \hline
        229 & turn-on-by-zip-code--undead-early-warning           & 2 & 4  & 6  & 6  & 4  & 5  & 7  \\ \hline
  \end{tabular}
  }
\end{table*}
}


\end{document}